\newtheorem{lemma}{Lemma}{}
  \newtheorem{theorem}{Theorem}
  \newtheorem{prop}{Proposition}
\def\rk#1{{\red\sffamily\small\em $\Rightarrow$  (RK) #1 $\Leftarrow$}}
\def\d{\mathrm{d}}
\def\E{\mathrm{E}}
\renewcommand\qedsymbol{$\blacksquare$}
\title{Interference Characterization in Downlink \\ Li-Fi Optical Attocell Networks}  
\author{\IEEEauthorblockN{Atchutananda Surampudi and Radha Krishna Ganti}
\IEEEauthorblockA{Department of Electrical Engineering,\\
Indian Institute of Technology Madras,\\
Chennai, India 600036.\\
\{ee16s003, rganti\}@ee.iitm.ac.in}
}
\begin{document}
\maketitle

%\tableofcontents 
\begin{abstract}
Wireless access to data using visible light, popularly known as light-fidelity (Li-Fi), is one of the key emerging technologies which promises huge bandwidths and data rates. In Li-Fi, the data is modulated on optical intensities and transmitted and detected using light-emitting-diodes (LED) and photodiodes respectively. A network of such LED access points illuminates a given region in the form of attocells. Akin, to wireless networks, co-channel interference or simply interference is a major impediment in Li-Fi attocell networks. Also, when in such networks, the field-of-view (FOV) of a photodiode is limited, the network interference distribution gets affected significantly. So, for any given network scenario, interference characterization is critical for good system design. Currently, there are no good closed-form approximations to interference in Li-Fi attocell networks, that can be used for the analysis of signal-to-interference-plus-noise-ratio (or coverage), particularly for the case of limited FOVs. In this paper, using a technique from Fourier analysis, we provide a very close approximation to interference in one and two dimension Li-Fi attocell networks for any given finite inter-LED separation. We validate the interference approximation by providing theoretical error bounds using asymptotics and by performing numerical simulations. We show that our method of approximation can be extended to characterize interference in limited FOV scenarios as well. \\ \par
\textit{Index Terms}- Asymptotics, attocell dimension, characterization, field-of-view, half-power-semi-angle, interference, Li-Fi, light-emitting-diode, photodiode. 
\end{abstract}

\section{Introduction}
Light-Fidelity (Li-Fi) is being seen as one of the key emerging technologies to provide wireless access of data using visible light at high data rates \cite{hass}. In Li-Fi, the data is usually intensity modulated to the visible light using light emitting diodes (LED), also called as downlink Li-Fi access points.  The modulated intensities travel through an optical channel and are detected by a receiver photodiode (PD). There have been several experiments conducted \cite{barrykahn1}, \cite{barrykahn2}, \cite{barrykahn3}, \cite{barrykahn4}, \cite{barrykahn5} to determine the optical wireless channel model and how it behaves with the transmitted visible light intensities. The channel is usually modelled as a linear and time invariant system \cite{heraldhass} and as a result, time varying fading on the line-of-sight links is absent. \\ \par

The LED access points are usually arranged  in a regular geometry  to form a Li-Fi attocell network. In such a network, the LEDs simultaneously transmit information packets on modulated intensities of different colours or light wavelengths. The LEDs transmitting on the same optical wavelength can be considered as co-channel interferers. Co-channel interference or simply interference in downlink of such networks, is one of the limiting factors which decreases the downlink system throughput. The interference experienced inside the attocell of the serving LED, depends on the location of the user relative to the interferers and the field-of-view (FOV) of the PD\footnote{When the FOV $<\frac{\pi}{2}$ radians, an interfering LED which does not have a line of sight link within the PD's FOV range, cannot be considered as a potential interferer.}. Additionally, the limitation of the FOV significantly affects the network interference distribution inside the serving attocell compared to the case when FOV is $\frac{\pi}{2}$ radians. So, for both the scenarios of FOV, the characterization of interference and Signal-to-Interference-plus-Noise-Ratio (SINR), is critical to understand the system performance and for good system design. Moreover, a simple closed form characterization, for both the cases of FOV, can be further used for simple analytical computation of other metrics like probability of coverage and area spectral efficiency. 
%Methods to reduce interference by optimizing the network parameters can also be analyzed using the characterization.  

\subsection{Related works and common approaches}
In \cite{cheng1}, \cite{cheng2}, \cite{cheng3}, the SINR has been used to analyze fractional frequency reuse and angle diversity schemes, where the interference is calculated by numerical techniques. The order or number of terms of the interference summation increases linearly with the size of the network and one has to resort to simulations for understanding the behaviour of the system. 
In \cite{hass4}, the downlink system performance and interference are analyzed in Li-Fi optical attocell networks. There, for a deterministic hexagonal geometry, the interference in an infinite attocell network is approximated by only the first layer of hexagonal interferers around the central attocell using the flower model approximation \cite{flower}. Similarly in \cite{hass5}, the interference is obtained as a finite summation over the six interferers in the first layer of the hexagonal LED arrangement. But in a Li-Fi attocell network, when the inter LED separation reduces, more layers need to be considered into the interference approximation and hence the first layer approximation remains sub-optimal. Moreover, such approximations cannot be extended to any other deterministic lattice and any finite separation between the LEDs. Further, the analysis has been done only for the case of FOV = $\frac{\pi}{2}$ radians. In \cite{orient}, the problem of orientation and FOV of the PD in Li-Fi networks has been discussed to derive closed form expressions for the channel gain characteristics and probability of coverage. But, the characterization for both one and two dimension attocell networks and for any separation distance between the LEDs has not been shown. In \cite{pp1}, for the calculation of outage probability and SINR in a random deployment of LEDs, the interference is characterized by extracting its moments from its complementary function and approximation similar to the one in \cite{pp3}. But an explicit simple closed form expression for interference in a deterministic LED arrangement has not been provided.                 

\subsection{Our approach and contributions}
 The contributions of this paper are as follows:
\begin{itemize}
\item We assume a regular arrangement of LEDs in both one and two dimensions. For such an arrangement of LEDs, a close approximation to interference has been proposed for any given finite separation between the LEDs. Here we assume that the FOV of the PD used in the network is $=\frac{\pi}{2}$ radians. So, being a simple closed form expression, large scale network summations are shown to be circumvented using this characterization. 
\item The above results are generalised to  characterize the  interference when the photodiodes used in the environment have an FOV $<\frac{\pi}{2}$ radians. 
\item Theoretical error bounds have also been provided for the approximation using asymptotics, which give a clear idea on how good is the approximation for a given set of network parameters. The error bounds are validated through extensive numerical simulations. 

%\item Also this method of approximation can be extended to any given lattice arrangement of LEDs. 
\end{itemize}  

This paper is arranged as follows. Section II describes the downlink system model and the arrangement of Li-Fi LEDs in both one and two dimension attocell network models. Section III is the main technical section of the paper, which describes our interference characterization (along with the FOV limitation case) in both one and two dimensions. The paper concludes with Section IV.

\section{Downlink System Model}
In this section, we describe the assumptions made for the line of sight channel model and derive the SINR at any location on the ground in such a communication scenario. Also, we describe the attocell network models, considered in this study, for both one and two dimensions. The attocell dimension or the attocell length, both refer to the inter-LED separation in the network.  
\subsection{Propagation channel  assumptions}
The optical wireless channel is considered as a linear time invariant attenuation channel \cite{dimitrov}. Further, for simplicity, the small scale path loss or fading due to multi path is neglected in this work. In Li-Fi, the baseband signal modulates the intensity of the optical signal, not the amplitude or phase. This is called the intensity-modulation and direct-detection (IM/DD). In \cite{islim}, various modulation techniques for Li-Fi have been discussed and compared. In this study, we consider a single carrier method of IM/DD, namely the non-return-zero-on-off-keying (NRZ-OOK)\footnote{While we choose NRZ-OOK for simplicity, the SINR expression holds true for other IM/DD modulation schemes as well with simple modifications.}. Moreover, we neglect any non-linear effects of the LED during intensity modulation.  
\begin{figure}[ht]
\centering
 \begin{tikzpicture}
 \draw [dashed,thick] (1,1) -- (4,7);
 \draw [fill] (2.5,1) circle [radius=0.05];
 \node [below] at (2.5,1) {\small $(0,0)$};
 \node [below] at (3.2,1) {\small $d$};
 \draw [dotted, thick ] (4,6) circle [radius=1];
 \draw [->, help lines ] (4,7) -- (2,6);
 \draw [->, help lines ] (4,7) -- (6,6);
 \draw [fill] (4,7) -- (4.5,6.5) -- (3.5,6.5) -- (4,7);
 \draw [fill] (0.75,1) -- (1.25,1) -- (1.25,1.25) -- (0.75,1.25) -- (0.75,1); 
 \draw [->, help lines ] (1,1) -- (2,2);
 \draw [->, help lines ] (1,1) -- (0,2);
 \draw [ dotted ] (1,1) -- (1,3);
 \draw [ dotted ] (4,7) -- (4,1);
 \draw (0,1) -- (5,1);
 \draw (5,7) -- (0,7);
 \draw (4,6.1055) arc [radius=0.894, start angle=270, end angle= 333.435];
 \draw (4,6) arc [radius=1, start angle=270, end angle=243.4349];
 \draw (1,1.8) arc [radius=0.8, start angle=90, end angle= 135];
 \draw (1,2) arc [radius=1, start angle=90, end angle=63.434];
 \node [align=center, below] at (1,1) {\small $(-z,0)$\\ \small Receiver photodiode (PD)};
 \node [align=center, above] at (4,7) {\small Transmitter LED\\ \small $(d,h)$};
 \node [right] at (4,3) {\small $h$};
 \node [below] at (1.7,1) {\small $z$};
 %\node [below] at (2.18,4) {\footnotesize $l$};
 \node [align=center, right] at (4,5.85) {\small $\theta_{h}$ \\ \small (HPSA)};
 \node [left] at (4.139,5.7) {\small $\theta_{d,t}$};
 \node [above right] at (0.879,2.2) {\small $\theta_{d,r}$};
 \node [above left] at (1,1.8) {\small $\theta_{f}$};
 \end{tikzpicture} 
\caption{This figure shows the free space line-of-sight (LOS) light propagation geometry. The triangular shaped LED source is at a height $h$ and distance $d$ from the origin $(0,0)$ and is tagged to the PD at a distance $z$ on the ground. The PD has a given field-of-view (FOV) $\theta_{f}$. The free space LOS link from the LED to PD is shown by the dashed line. The angles $\theta_{d,t}$ and $\theta_{d,r}$ are respectively the transmission angle at the LED and incidence angle to the PD with respect to the normals drawn as dotted lines. We assume that the PD has no orientation towards the LED and its surface is parallel to the ground. So, we have $\theta_{d,t}=\theta_{d,r}$. $\theta_{h}$ is the half power semi angle (HPSA) of the LED. In this figure, the distance $D_{d}$, on ground, between the LED and the PD is z+d. This is adapted from \cite{cheng2}.}
\label{los_model}
\end{figure}
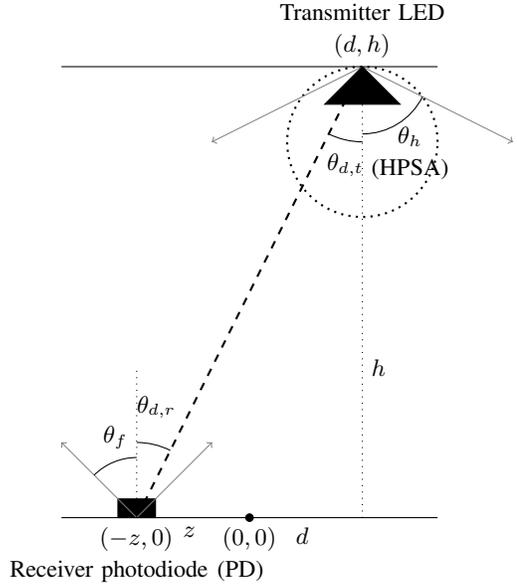

Consider the free space Li-Fi downlink of an LED-PD communication scenario shown in Fig.\ref{los_model} (dashed line). Let the light source be at an elevation height $h$ and distance $d$ from the origin $(0,0)$ and let the PD be at a distance $z$ on the ground. $\theta_{d,t}$ is the transmission angle from the LED which is at a distance $d$ from the origin and $\theta_{d,r}$ is the angle of incidence at the PD, from the same LED. We assume that the PD has no orientation towards the LED and its surface is parallel to the ground. So, we have $\theta_{d,t}=\theta_{d,r}$. $\theta_{f}$ denotes the FOV of the PD, which is the maximum angle to which the received rays can be detected. $\theta_{h}$ denotes the half-power-semi-angle (HPSA) of the transmitter LED, which is the angle at which the optical power becomes half of the power at normal. Let $A_{pd}$ be the light receiving cross sectional area of the PD. Let $D_{d}$ $($which equals $z+d$ in Fig.\ref{los_model}, but not shown explicitly$)$ be the distance on ground, between the PD $($located at a distance $z$ on the ground from $(0,0)$$)$ and the LED $($located at $(d,h)$ from $(0,0)$$)$. From \cite[Eqn. 1]{cheng2}, the channel gain from the LED to the PD with a given FOV $\theta_{f}$  is 
\begin{equation} 
%G_{d}(z) = \frac{(m+1)A_{pd}}{2\pi l^{2}}\cos^{m}(\theta_{d,t})\cos(\theta_{d,r})\rho(D_{d}). 
   G_{d}(z) = \frac{(m+1)A_{pd}h^{m+1}}{2\pi}(D_{d}^{2} + h^{2})^{\frac{-(m+3)}{2}}\rho(D_{d}),
\label{eqn:gain}  
\end{equation}
where $m=-\frac{\ln(2)}{\ln(\cos(\theta_{h}))}$ is the Lambertian emission order of the LED and $\rho(D_{d})$ is the FOV constraint function defined as     
\begin{equation*}
\rho(D_{d}) = \left\{
               \begin{array}{ll}
                 1,& |D_{d}|\leq h\tan(\theta_{f}), \\
                 0,& |D_{d}|> h\tan(\theta_{f}). 
              \end{array}
              \right.
%\label{eqn:rect}                 
\end{equation*}
\subsection{The SINR expression } 
Extending the above discussion, we consider the downlink of a Li-Fi attocell network in one dimension to derive the SINR expression. %When the network models are being described in the next subsection, we will successively extend the expression to two dimensions. 
In the attocell network, all the LEDs, as data access points, illuminate a given region in the form of attocells. An attocell is the region of data coverage due to illumination on the ground (or surface) by a particular LED, where, this LED becomes the nearest data source to a PD to be tagged upon, inside that region.
% This attocell network is similar to a conventional mobile communication network, where because of bandwidth limitation, the coverage area is planned into sectors and cells and assume a frequency reuse factor to increase it's capacity \cite{rappaport}. 
The optical attocell dimensions are in the range of metres. The co-channel LEDs, which illuminate at the same visible light wavelength, interfere. We consider interference at the PD only due to line of sight LEDs, fixed at a height $h$ and symmetrically arranged with uniform separation $a$ in an infinite one dimension corridor as shown in Fig. \ref{one_dim}. 
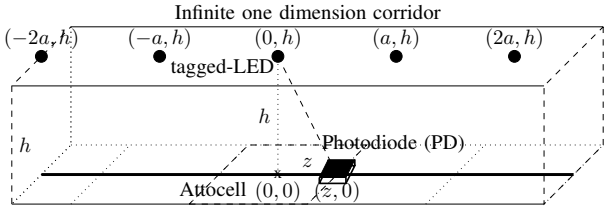
\begin{figure}[ht]
\centering
\resizebox{0.45\textwidth}{!}{%
 \begin{tikzpicture}
 \draw [dashed] (0,0) -- (0,2) -- (1,3);
 \node [right] at (0,1) {\normalsize $h$};
 \node [left] at (4.5,1.5) {\normalsize $h$};
 \node [above] at (5,0.5) {\normalsize $z$}; 
 \draw (1,3) -- (10,3);
 \draw [dashed] (10,3) -- (10,1) -- (9,0) -- (9,2) -- (10,3); 
 \draw (0,2) -- (9,2);
 \draw (0,0) -- (9,0);
 \draw [fill] (0.5,2.5) circle [radius=0.10];
 \draw [fill] (2.5,2.5) circle [radius=0.10];
 \draw [fill] (4.5,2.5) circle [radius=0.10];
 \draw [fill] (6.5,2.5) circle [radius=0.10];
 \draw [fill] (8.5,2.5) circle [radius=0.10];
 \draw [dotted] (1,3) -- (1,1) -- (0,0);
 \draw [dotted] (1,3) -- (1,1) -- (0,0);
 \draw [dotted] (1,1) -- (10,1);
 \draw [line width= 0.05cm] (0.5,0.5) -- (9.5,0.5);
 \node [above] at (5,3) {\normalsize Infinite one dimension corridor };
 %\draw [help lines] (4,0) -- (5,1) -- (5,3) -- (4,2) -- (4,0);
 \node [below] at (4.5,0.5) {\normalsize $(0,0)$};
 \node [below] at (5.5,0.5) {\normalsize $(z,0)$};
 \node [above] at (4.5,2.5) {\normalsize $(0,h)$};
 \node [above] at (6.5,2.5) {\normalsize $(a,h)$};
 \node [above] at (8.5,2.5) {\normalsize $(2a,h)$};
 \node [above] at (2.5,2.5) {\normalsize $(-a,h)$};
 \node [above] at (0.5,2.5) {\normalsize $(-2a,h)$};
 \node at (4.5,0.5) {\footnotesize x};
 \draw [thick] (5.2,0.35) -- (5.65,0.35) -- (5.8,0.65) -- (5.35,0.65) -- (5.2,0.35);
 \draw [fill] (5.2,0.45) -- (5.65,0.45) -- (5.8,0.75) -- (5.35,0.75) -- (5.2,0.45);
 \draw [thick] (5.2,0.35) -- (5.2,0.45);
 \draw [thick] (5.65,0.35) -- (5.65,0.45);
 \draw [thick] (5.8,0.65) -- (5.8,0.75);
 \draw [thick] (5.35,0.65) -- (5.35,0.75);
 \draw [dashed] (4.5,2.5) -- (5.5,0.5);
 \draw [dotted] (4.5,0.5) -- (4.5,2.5);
 \draw [dotted] (1,0) -- (2,1);
 \draw [dotted] (3,0) -- (4,1);
 \draw [dotted] (5,0) -- (6,1);
 \draw [dotted] (7,0) -- (8,1);
 \draw [dotted] (9,0) -- (10,1);
 \draw [dashed] (3,0) -- (5,0) -- (6,1) -- (4,1) -- (3,0);
 \node [above] at (3.4,0.001) {\normalsize Attocell};
 \node [left] at (4.55,2.3) {\normalsize tagged-LED};
 \node [above] at (6.45,0.75) {\normalsize Photodiode (PD)};
 \end{tikzpicture}
 }%
\caption{\textit{(One dimension model)} This figure shows the infinite one dimension corridor. There are infinite number of LEDs (circular dots) arranged at an equal interval $a$, all along the corridor, installed at a height $h$. The rectangular dotted regions on ground depict the attocells corresponding to each LED above. The user PD (small cuboid) at $(z,0)$ (inside one of the attocell), receives data wirelessly from the tagged-LED corresponding to the attocell in which it is located. Here, that attocell is highlighted as dash-dot. All other LEDs are co-channel interferers. Here, we assume that the user PD moves only along the thick line on ground, i.e length of the corridor.}
\label{one_dim}
\end{figure} 

We assume that all the LEDs operate at the same optical wavelength and transmit at same average optical power $P_{o}$. So, all the LEDs, other than the tagged-LED at $(0,h)$, are interferers, as shown in Fig. \ref{one_dim}. We calculate the SINR $\gamma(z)$, at every PD location $z$, inside the attocell. Let $x_{i}(t)$ be the baseband signal, during the time slot $t$, from each $i^{th}$ LED in the network before transmission. Let $s_{i}(t)$ be the optical IM signal on baseband signal $x_{i}(t)$, during the time slot $t$. Using the gain expression in \eqref{eqn:gain} and the geometry of the links in Fig. \ref{los_model}, we can modify $G_{d}(z)$ $($and distance $D_{d}$ from every other LED at $(ia,h)$$)$ as 
\begin{equation}  
%G_{ja}(z) = \frac{(m+1)A_{pd}h^{m+1}((z+ja)^{2} + h^{2} )^{\frac{-(m+3)}{2}}\rho(D_{ja})}{2\pi}.\\
G_{ia}(z) = \frac{(m+1)A_{pd}h^{m+1}}{2\pi}((z+ia)^{2} + h^{2} )^{\frac{-(m+3)}{2}}\rho(D_{ia}).
\label{eqn:gain2}  
\end{equation}
Now, the signal current $I(z,t)$ (in amperes), received at the PD, at $(z,0)$ with responsivity $R_{pd}$, during the time slot $t$ is given as 
\begin{align}
I(z,t) =&  \ s_{0}(t) G_{0}(z) R_{pd} \nonumber\\ 
&+ \sum_{i = -\infty \setminus 0}^{+\infty}s_{i}(t) G_{ia}(z) R_{pd} + n(t).
\label{eqn:recsig1} 
\end{align}
In \eqref{eqn:recsig1}, $n(t)$ is the noise current at the PD, which is modelled as additive white Gaussian noise, has a noise power spectral density of $N_{o}$. If the total IM bandwidth of the receiver PD is $W$ (which can be assumed as the total system bandwidth), then the total receiver noise variance $\sigma^{2}$, at the PD is
\begin{equation*} 
\sigma^{2} = N_{o}W. 
%\label{eqn:omega1}  
\end{equation*} 
From \cite{armstrong}, the average transmit optical power $P_{o}$, for every $i^{th}$ LED can be defined as
\begin{equation*}
P_{o} = \mathbb{E} [s_{i}(t)],
%\label{eqn:powerelec2}  
\end{equation*} 
where $\mathbb{E}[.]$ is the expectation operator over time slot $t$. The average received current $I_{i}(z)=\mathbb{E}[s_{i}(t)G_{ia}(z)R_{pd}]$, at the PD from the $i^{th}$ LED, after suffering through the channel gain $G_{ia}(z)$, is
\begin{equation*}
I_{i}(z)= P_{o}G_{ia}(z)R_{pd}.  
%\label{eqn:pathrec}  
\end{equation*} 
So, $\gamma(z)$, at user position $z$ is
\begin{align}
\gamma(z) &= \frac{I_{0}^{2}(z)}{\sum_{i = -\infty \setminus 0}^{+\infty}I_{i}^{2}(z)+\sigma^{2}},\nonumber\\
&=\frac{P_{o}^{2}G_{0}^{2}(z)R_{pd}^{2}}{\sum_{i = -\infty \setminus 0}^{+\infty}P_{o}^{2}G_{ia}^{2}(z)R_{pd}^{2} + \sigma^{2}}. 
\label{eqn:sinrf1}  
\end{align}
Now, substituting for $G_{ia}(z)$ from \eqref{eqn:gain2} into \eqref{eqn:sinrf1} and further rearranging the constants, we have
\begin{equation}
\gamma(z) = \frac{(z^{2}+h^{2})^{-m-3}\rho(D_{0})}{ \sum_{i = -\infty \setminus 0}^{+\infty}( (ia + z)^{2} + h^{2} )^{-m-3}\rho(D_{ia})  + \Omega},
\label{eqn:sinrl1}  
\end{equation}
where $\Omega$ is given as
\begin{equation*}
\Omega = \frac{4\pi^{2}N_{0}W}{P_{o}^{2}(m+1)^{2}A_{pd}^{2}R_{pd}^{2}h^{2m+2}}.
%\label{eqn:var1}  
\end{equation*}  

\subsection{Attocell network models} 
In this work, we consider two cases of lighting described below.
\subsubsection{One dimension infinite corridor network}
We consider an infinite length corridor, along which an infinite number of LEDs are arranged with uniform spacing $a$, as shown in Fig. \ref{one_dim}. Importantly, we also assume that all the LEDs are Li-Fi capable and all transmit data at the same time along with illumination. The corresponding derivation for SINR was shown in the previous subsection and was derived in \eqref{eqn:sinrl1} as 
\begin{equation}
\gamma(z) = \frac{(z^{2}+h^{2})^{-m-3}\rho(D_{0})}{ \mathbb{I}_{\infty}(z) + \Omega},
\label{eqn:sinrl11}  
\end{equation}
where the interference term\footnote{In this work, we characterize the normalized interference power $\mathbb{I}_{\infty}(z)$ (for one dimension) and $\mathbb{I}_{\infty}(d_{x},d_{y})$ (for two dimension model), normalized by the average optical power $P_{o}$. This we simply call the interference. So, all the assumed practical dimensions and further derived theoretical expressions for interference get linearly scaled by $P_{o}$, if it has to be introduced.} $\mathbb{I}_{\infty}(z)$ in \eqref{eqn:sinrl11}, is given as  
\begin{equation}
\mathbb{I}_{\infty}(z) =  \sum_{i = -\infty \setminus 0}^{+\infty}( (ia + z)^{2} + h^{2} )^{-m-3}\rho(D_{ia}).
\label{eqn:interf1}  
\end{equation}
Also, for a PD with an FOV $\theta_{f} = \frac{\pi}{2}$ radians, $\mathbb{I}_{\infty}(z)$ in \eqref{eqn:interf1} can be written as 
\begin{equation}
\mathbb{I}_{\infty}(z) =  \sum_{i = -\infty \setminus 0}^{+\infty}( (ia + z)^{2} + h^{2} )^{-m-3}.
\label{eqn:interf1a}  
\end{equation}

\subsubsection{Two dimension infinitely spread square grid network}
The two dimension network model is shown in Fig. \ref{two_dim}. Let the user PD be located at distance  $z = \sqrt{d_{x}^{2} + d_{y}^{2}}$ from the origin inside the respective attocell of the LED. Here, the tagged-LED, considered at $(0,0,h)$, has an attocell symmetrically around it on the ground, as a square of dimension $a$. Similar to the one dimension model, importantly, we here too assume that all the LEDs are Li-Fi capable and all transmit data at the same time along with illumination. From the one dimension case, the same expression for the SINR can be extended to a two dimension scenario. Let the interfering LEDs, indexed by ($u,v$), be located at $(u,v,h)$.
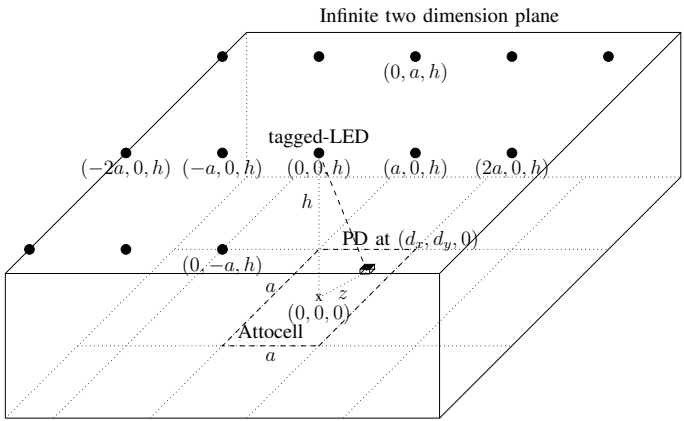
\begin{figure}[ht] 
\centering
 \resizebox{0.5\textwidth}{!}{%
 \begin{tikzpicture}
 \draw (0,0) -- (0,3) -- (5,8) -- (14,8) -- (14,5);
 \draw (14,5) -- (9,0) -- (0,0);
 \draw (14,8) -- (9,3) -- (0,3);
 \draw (9,0) -- (9,3);
 \draw [dotted] (0,0) -- (5,5) -- (14,5);
 \draw [dotted] (5,5) -- (5,8);
 
 \draw [fill] (0.5,3.5) circle [radius=0.10];
 %\node [below] at (0.5,3.5) {\scriptsize $(-2a,-a,h)$};
 
 \draw [fill] (2.5,3.5) circle [radius=0.10];
 %\node [below] at (2.5,3.5) {\scriptsize $(-a,-a,h)$};
 
 \draw [fill] (4.5,3.5) circle [radius=0.10];
 \node [below] at (4.5,3.5) {\large $(0,-a,h)$};
 
 %\draw [fill] (6.5,3.5) circle [radius=0.10];
 %\node [below] at (6.5,3.5) {\scriptsize $(a,-a,h)$};
 
 %\draw [fill] (8.5,3.5) circle [radius=0.10];
 %\node [below] at (8.5,3.5) {\scriptsize $(2a,-a,h)$};
 
 \draw [fill] (2.5,5.5) circle [radius=0.10];
 \node [below] at (2.5,5.5) {\large $(-2a,0,h)$};
 \node at (5.5,1.8) {\large Attocell};
 
 \draw [fill] (4.5,5.5) circle [radius=0.10];
 \node [below] at (4.5,5.5) {\large $(-a,0,h)$};
 
 \draw [fill] (6.5,5.5) circle [radius=0.10];
 \node [below] at (6.5,5.5) {\large $(0,0,h)$};
 \node [above] at (6.5,5.5) {\large tagged-LED};
 
 \draw [fill] (8.5,5.5) circle [radius=0.10];
 \node [below] at (8.5,5.5) {\large $(a,0,h)$};
 
 \draw [fill] (10.5,5.5) circle [radius=0.10];
 \node [below] at (10.5,5.5) {\large $(2a,0,h)$};
 
 \draw [fill] (4.5,7.5) circle [radius=0.10];
 %\node [below] at (4.5,7.5) {\scriptsize $(-2a,a,h)$};
 
 \draw [fill] (6.5,7.5) circle [radius=0.10];
 %\node [below] at (6.5,7.5) {\scriptsize $(-a,a,h)$};
 
 \draw [fill] (8.5,7.5) circle [radius=0.10];
 \node [below] at (8.5,7.5) {\large $(0,a,h)$};
 
 \draw [fill] (10.5,7.5) circle [radius=0.10];
 %\node [below] at (10.5,7.5) {\scriptsize $(a,a,h)$};
 
 \draw [fill] (12.5,7.5) circle [radius=0.10]; 
 %\node [below] at (12.5,7.5) {\scriptsize $(2a,a,h)$};
 
 \draw [dotted] (1,0) -- (6,5);
 \draw [dotted] (3,0) -- (8,5);
 \draw [dotted] (5,0) -- (10,5);
 \draw [dotted] (7,0) -- (12,5);
 \draw [dotted] (1.5,1.5) -- (10.5,1.5);
 \draw [dotted] (3.5,3.5) -- (12.5,3.5);
 \draw [dotted] (6.5,2.5) -- (6.5,5.5);
 \node [align=center] at (6.5,2.5) {\small x};
 \node [below] at (6.5,2.5) {\large $(0,0,0)$};
 \node [left] at (6.5,4.5) {\large $h$}; 
 \draw (7.35,2.99) -- (7.55,2.99) -- (7.65,3.09) -- (7.45,3.09) -- (7.35,2.99);
 \draw [fill] (7.35,3.09) -- (7.55,3.09) -- (7.65,3.19) -- (7.45,3.19) -- (7.35,3.09);
 \draw (7.35,2.99) -- (7.35,3.09);
 \draw (7.55,2.99) -- (7.55,3.09);
 \draw (7.65,3.09) -- (7.65,3.19);
 \draw (7.45,3.09) -- (7.45,3.19);
 \draw [dotted] (6.5,2.5) -- (7.5,3.04);
 %\node [below] at (8.3,3.04) {\large $(d_{x},d_{y},0)$}; 
\node [below] at (8.4,4.04) {\large PD at $(d_{x},d_{y},0)$};
 \node [below] at (7,2.77) {\large $z$};
 \draw [dashed] (6.5,5.5) -- (7.5,3.04); 
 \draw [dashed] (6.5,1.5) -- (8.5,3.5) -- (6.5,3.5) -- (4.5,1.5) -- (6.5,1.5);
 \node [above] at (5.5,2.5) {\large $a$};
 \node [below] at (5.5,1.5) {\large $a$};
 \node [above] at (9,8) {\large Infinite two dimension plane}; 
 
 \end{tikzpicture}
}%
\caption{\textit{(Two dimension model)} This figure shows the infinite two dimension model. There are infinite number of LEDs (circular dots) arranged symmetrically at regular intervals of $a$ as a uniform square grid, all over the plane, installed at a height $h$. The rectangular dotted regions on ground depict the attocells corresponding to each LED above. The user PD (small cuboid) at $(d_{x},d_{y},0)$ (inside one of the attocell), receives data wirelessly from the tagged-LED corresponding to the attocell in which it is located. Here, that attocell is highlighted as dash-dot. All other LEDs are co-channel interferers. Here we assume that the user PD can move anywhere on the ground plane.}
\label{two_dim}
\end{figure} 
Now, for $D_{u,v} =\sqrt{(ua+d_{x})^{2}+(va+d_{y})^{2}}$, the FOV constraint function $\rho(.)$, for two dimensions is defined as
\begin{equation*}
\rho(D_{u,v}) = \left\{
               \begin{array}{ll}
                 1,& |D_{u,v}|\leq h\tan(\theta_{f}), \\
                 0,&|D_{u,v}|> h\tan(\theta_{f}). 
              \end{array}
              \right.
%\label{eqn:rect2}   
\end{equation*}
The SINR $\gamma(d_{x},d_{y})$, at a distance $z$ from origin is 
\begin{equation*}
\gamma(d_{x},d_{y}) = \frac{(z^{2}+h^{2})^{-m-3}\rho(D_{u,v})}{\mathbb{I}_{\infty}(d_{x},d_{y})  + \Omega}, 
%\label{eqn:gamma2}  
\end{equation*}
where the interference term $\mathbb{I}_{\infty}(d_{x},d_{y})$  is given by
\begin{align}
\mathbb{I}_{\infty}(d_{x},d_{y}) =\sum_{u = -\infty}^{+\infty}\sum_{v = -\infty \setminus (0,0)}^{+\infty} &((ua+d_{x})^{2}+(va+d_{y})^{2} \nonumber\\
&+h^{2} )^{-m-3}\rho(D_{u,v}). 
\label{eqn:interf2} 
\end{align}
For a PD of FOV $\theta_{f}=\frac{\pi}{2}$ radians, $\mathbb{I}_{\infty}(d_{x},d_{y})$ in \eqref{eqn:interf2} can be written as  
\begin{align}
\mathbb{I}_{\infty}(d_{x},d_{y}) =\sum_{u = -\infty}^{+\infty}\sum_{v = -\infty \setminus (0,0)}^{+\infty} &((ua+d_{x})^{2}+(va+d_{y})^{2} \nonumber\\
&+h^{2} )^{-m-3}. 
\label{eqn:interf2a}  
\end{align}
A closed form expression for the interference term in \eqref{eqn:interf1} and \eqref{eqn:interf2} (or \eqref{eqn:interf1a} and \eqref{eqn:interf2a} for FOV=$\frac{\pi}{2}$ radians) is required in both one and two dimension scenarios, which is discussed in the following section.
\section{Interference Characterization}
In this Section, we characterize the interference as a closed form approximation using the Poisson summation theorem \cite{poisson}, which is stated for reference: 
\begin{theorem}[Poisson summation theorem]
Let $q(x)$ be a continuous function. Under some mild regularity conditions we have   
\begin{equation}
\sum_{i = -\infty}^{+ \infty}q(i) = \sum_{w=-\infty}^{+\infty}  Q(w),
\label{eqn:intb1}  
\end{equation} 
where
\[Q(w) = \int_{-\infty}^{\infty}q(t)e^{-2\pi \iota wt} \d t,\] is the Fourier transform of $q(x)$. 
\label{theorem0}   
\end{theorem}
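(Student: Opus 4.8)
The plan is to prove \eqref{eqn:intb1} by the classical \emph{periodization} argument: form a $1$-periodic function out of $q$, expand it in a Fourier series, and observe that its Fourier coefficients are exactly the samples $Q(w)$. Concretely, I would first introduce the periodization
\[
P(x) \;=\; \sum_{n=-\infty}^{+\infty} q(x+n).
\]
The ``mild regularity conditions'' are precisely what is needed to make this meaningful: it suffices, for instance, that $q$ be continuous with $|q(x)| \le C\,(1+|x|)^{-1-\delta}$ and $|Q(w)| \le C\,(1+|w|)^{-1-\delta}$ for some $C,\delta>0$ (membership of $q$ in the Schwartz class, or $q$ of bounded variation with comparable decay, would serve equally well). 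Under such a bound the series defining $P$ converges absolutely and uniformly on compact sets, so $P$ is continuous; and replacing $x$ by $x+1$ merely re-indexes the sum, so $P$ is $1$-periodic.

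Next I would compute the Fourier coefficients of $P$. For $w \in \mathbb{Z}$, the uniform convergence on $[0,1]$ permits interchanging sum and integral:
\[
c_w \;=\; \int_0^1 P(x)\, e^{-2\pi \iota w x}\,\d x
     \;=\; \sum_{n=-\infty}^{+\infty} \int_0^1 q(x+n)\, e^{-2\pi \iota w x}\,\d x .
\]
In the $n$-th term I substitute $t = x+n$; since $e^{-2\pi \iota w n}=1$ for integers $w,n$, the term becomes $\int_n^{n+1} q(t)\, e^{-2\pi \iota w t}\,\d t$, and summing over $n$ reassembles the intervals into the whole real line, giving $c_w = \int_{-\infty}^{+\infty} q(t)\, e^{-2\pi \iota w t}\,\d t = Q(w)$.

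Finally, the decay assumption on $Q$ makes the trigonometric series $\sum_{w} Q(w)\, e^{2\pi \iota w x}$ converge absolutely and uniformly; a uniformly convergent trigonometric series is the Fourier series of its continuous sum, and two continuous functions with identical Fourier coefficients coincide, so $P(x) = \sum_{w} Q(w)\, e^{2\pi \iota w x}$ for every $x$. Evaluating at $x=0$ yields $\sum_{n=-\infty}^{+\infty} q(n) = P(0) = \sum_{w=-\infty}^{+\infty} Q(w)$, which is precisely \eqref{eqn:intb1}.

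I do not expect any single hard computation; the only delicate point is the bookkeeping about hypotheses, namely ensuring \emph{simultaneously} that (i) the periodized sum converges to a continuous $P$, (ii) the term-by-term integration is legitimate, and (iii) the Fourier series of $P$ converges pointwise at $x=0$. All three follow at once from the stated decay/regularity, and for the interference kernels used later in the paper --- of the form $((x+z)^2+h^2)^{-m-3}$ with $m+3>1$ --- such decay is immediate, so the theorem applies directly to the series \eqref{eqn:interf1a} and \eqref{eqn:interf2a}.
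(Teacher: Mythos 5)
Your proposal is correct, and it is the canonical periodization proof: form $P(x)=\sum_n q(x+n)$, check it is continuous and $1$-periodic under the stated decay, identify its $w$-th Fourier coefficient with $Q(w)$ by the substitution $t=x+n$ and reassembly of $\bigcup_n[n,n+1]$ into $\mathbb{R}$, and evaluate the (absolutely convergent) Fourier series at $x=0$. The paper itself offers no proof of this theorem --- it is quoted from the literature with a citation --- so there is nothing to compare against; your argument supplies exactly the ``mild regularity conditions'' the paper leaves implicit, and you correctly observe that the kernels $q(x)=((x+z)^2+h^2)^{-\beta}$ with $\beta=m+3>1$ satisfy them (indeed their transforms are Bessel-$K$ functions with exponential decay, so all three of your convergence requirements hold with room to spare). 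One caveat worth recording: when the paper later reuses the theorem for the limited-FOV kernel $q'(x)=(x^2+h^2)^{-\beta}\rho(D_d)$, the function is no longer continuous, so the pointwise form you proved does not apply verbatim at the jump points; there the identity holds with $q'$ replaced by its symmetrized value $\tfrac{1}{2}(q'(x^+)+q'(x^-))$ at discontinuities, which is immaterial unless an interferer sits exactly on the FOV boundary. That issue belongs to Lemma 1, not to the statement you were asked to prove, and does not affect the correctness of your argument here.
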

%The integral on the right hand side of the equality in \eqref{eqn:intb1} denotes the Fourier transform \cite[Eqn.4.9]{oppenheim} $Q$, of the corresponding function $q$. So, \eqref{eqn:intb1} can be modified as
%
%\begin{equation}
%\sum_{i = -\infty}^{+ \infty}q(i) = \sum_{w=-\infty}^{+\infty} Q(w).
%\label{eqn:intb2}  
%\end{equation} 
In the following subsections for one and two dimension network models in succession, we first proceed with our interference characterization for FOV $\theta_{f}=\frac{\pi}{2}$ radians. In a simultaneous subsection, we show that our method of interference characterization using Fourier analysis, can be extended for the case of FOV $\theta_{f}<\frac{\pi}{2}$ radians.  

\subsection{One dimension model with FOV = $\frac{\pi}{2}$ radians} 
We now look at the interference characterization using the above Poisson summation theorem.
\begin{theorem}
Consider a photodiode, with $\theta_{f}=\frac{\pi}{2}$ radians, situated at a distance $z$ (inside an attocell) from the origin, in an infinite one dimension corridor network of Li-Fi LEDs, emitting light with a Lambertian emission order $m$, installed at a height $h$ with uniform inter-LED separation distance $a$. Then, for a wavelength reuse factor of unity, the interference $\mathbb{I}_{\infty}(z)$, caused by the co-channel interferers at the photodiode is
\begin{align*}
&\mathbb{I}_{\infty}(z) = \frac{h^{1-2\beta}\sqrt{\pi}\Gamma(\beta-0.5)}{a\Gamma(\beta)} -\frac{1}{(z^{2}+h^{2})^{\beta}}+ \sum_{w=1}^\infty g(w),
%\label{eqn:intha}   
\end{align*}
where
\[g(w) =\frac{2^{2-\beta}\sqrt{2\pi}h^{0.5-\beta}(2\pi w)^{\beta-0.5} \mathbb{K}_{\beta-0.5}(\frac{2\pi hw}{a})\cos(\frac{2\pi wz}{a})}{a^{0.5+\beta}\Gamma(\beta)}.\]
Here $\Gamma(x)=\int_{0}^{\infty}t^{x-1}e^{-t}\d t$ denotes the standard Gamma function, $\beta=m+3$ and $\mathbb{K}_{v}(y)=\frac{\Gamma(v+\frac{1}{2})(2y^{v})}{\sqrt{\pi}}\int_{0}^{\infty}\frac{\cos(t)\d t}{(t^{2}+y^{2})^{v+\frac{1}{2}}}$ is the modified bessel function of second kind.
 
\label{theorem1}
\end{theorem}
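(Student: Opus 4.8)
The plan is to apply the Poisson summation theorem (Theorem~\ref{theorem0}) to the interference sum. The obstacle is that $\mathbb{I}_{\infty}(z)$ in \eqref{eqn:interf1a} omits the $i=0$ term, so I would first write $\mathbb{I}_{\infty}(z) = \sum_{i=-\infty}^{+\infty} q(i) - q(0)$ where $q(t) = ((ta+z)^2 + h^2)^{-\beta}$ with $\beta = m+3$, and note $q(0) = (z^2+h^2)^{-\beta}$, which accounts for the second term in the claimed formula. Then I would compute the Fourier transform $Q(w) = \int_{-\infty}^{\infty} ((ta+z)^2+h^2)^{-\beta} e^{-2\pi \iota w t}\,\d t$. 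A substitution $s = ta + z$ (so $\d t = \d s/a$) turns this into $\frac{1}{a} e^{2\pi \iota w z/a}\int_{-\infty}^{\infty} (s^2+h^2)^{-\beta} e^{-2\pi \iota (w/a) s}\,\d s$, reducing everything to the classical Fourier transform of $(s^2+h^2)^{-\beta}$.

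The key analytic fact I would invoke is the standard integral representation: for $\beta > 1/2$ and $\xi \neq 0$,
\[
\int_{-\infty}^{\infty} (s^2 + h^2)^{-\beta} e^{-2\pi \iota \xi s}\,\d s = \frac{2\pi^{\beta} |\xi|^{\beta - 1/2} h^{1/2 - \beta}}{\Gamma(\beta)} \cdot \frac{\mathbb{K}_{\beta - 1/2}(2\pi h |\xi|)}{(\pi)^{1/2}} \cdot (\text{const}),
\]
i.e. the Fourier transform of a power of a Lorentzian is a modified Bessel function of the second kind (a Mat\'ern-type identity, obtainable from the given integral representation of $\mathbb{K}_v$ or from Gradshteyn--Ryzhik). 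I would pin down the exact constant by matching normalizations against the definition of $\mathbb{K}_v$ supplied in the theorem statement. For the $w = 0$ term, the Bessel function degenerates and instead one uses $\int_{-\infty}^{\infty}(s^2+h^2)^{-\beta}\,\d s = h^{1-2\beta}\sqrt{\pi}\,\Gamma(\beta - 1/2)/\Gamma(\beta)$ (a Beta-function evaluation), which gives precisely the first, $w$-independent term $\frac{h^{1-2\beta}\sqrt{\pi}\,\Gamma(\beta-0.5)}{a\,\Gamma(\beta)}$ in the statement.

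Finally I would reassemble: $\sum_{w=-\infty}^{\infty} Q(w) = Q(0) + \sum_{w=1}^{\infty}(Q(w) + Q(-w))$. Since $(s^2+h^2)^{-\beta}$ is real and even, $Q(w) + Q(-w) = \frac{2}{a}\cos(2\pi w z/a)\int_{-\infty}^\infty (s^2+h^2)^{-\beta}\cos(2\pi (w/a) s)\,\d s$, and substituting the Bessel-function evaluation with $\xi = w/a$ produces the summand $g(w)$ as written (the factors $2^{2-\beta}$, $\sqrt{2\pi}$, $(2\pi w)^{\beta - 1/2}$, $a^{-1/2-\beta}$ all emerging from collecting the constants and the $|\xi|^{\beta-1/2} = (w/a)^{\beta-1/2}$ factor). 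Subtracting $q(0)$ then yields the claimed identity. The main obstacle is purely bookkeeping: getting the constant in the Lorentzian-power Fourier transform exactly right and reconciling it with the nonstandard normalization of $\mathbb{K}_v$ used here; the convergence/regularity hypotheses of Poisson summation are easily checked since $q$ is smooth, positive, and decays like $|t|^{-2\beta}$ with $2\beta = 2m+6 > 1$, and $Q(w)$ decays exponentially by the Bessel asymptotics, so the interchange is fully justified.
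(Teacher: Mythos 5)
Your proposal is correct and follows essentially the same route as the paper's own proof: apply Poisson summation to the shifted and scaled lattice sum, identify the Fourier transform of $(s^{2}+h^{2})^{-\beta}$ as a modified Bessel function of the second kind (with the $w=0$ term evaluated via the Beta integral), subtract the $i=0$ tagged-LED term, and use realness and evenness of the transform to fold the $\pm w$ terms into the cosine series $\sum_{w\geq 1} g(w)$. The only difference is presentational — you absorb the shift into $q$ and substitute explicitly, whereas the paper invokes the shift and scaling properties of the Fourier transform — so there is nothing substantive to add.
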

\begin{proof}
The proof is provided in Appendix \ref{app:theorem1}.
\end{proof}
In the next proposition, we quantify the error when the summation in the above infinite series is truncated after\footnote{Over-usage: This variable has been used twice in the paper, but in two completely disjoint and separate contexts. In the one dimension model, $k$ represents the number of terms in the approximation that needs to be considered. Again, in the description of the two dimension model, we have used $k$ to denote the frequency term. This is due to lack of variables and the authors assure that this, in no way affects the understanding of the paper.} $k$ terms using the asymptotic notation\footnote{The asymptotic notation $f(n)=O(g(n))$ is defined as, $\exists n_{o}$ and $\exists k_{1}>0 \ni \forall n>n_{o}, f(n)\leq k_{1}\times g(n).$} $O(.)$. 
\begin{prop}
From Thm. \ref{theorem1}, for a finite integer $k$, the interference inside an attocell can be approximated to a closed form expression as
\begin{align}   
&\mathbb{I}_{\infty}(z)=\hat {\mathtt{I}}_{k}(z)+ O((k+1)^{\beta-2}e^{\frac{-2\pi h (k+1)}{a}}),
\label{eqn:newprop1}
\end{align}
where
\[\hat {\mathtt{I}}_{k}(z) \triangleq \frac{h^{1-2\beta}\sqrt{\pi}\Gamma(\beta-0.5)}{a\Gamma(\beta)}-\frac{1}{(z^{2}+h^{2})^{\beta}}+ \sum_{w=1}^k g(w).\]
\label{prop1}
\end{prop}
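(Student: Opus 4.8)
The plan is to observe that $\hat{\mathtt{I}}_k(z)$ reproduces the closed form of Theorem~\ref{theorem1} verbatim except that the Fourier series $\sum_{w=1}^{\infty}g(w)$ has been truncated after $k$ terms. Hence the approximation error is exactly the discarded tail,
\begin{equation*}
\mathbb{I}_{\infty}(z) - \hat{\mathtt{I}}_k(z) = \sum_{w=k+1}^{\infty} g(w),
\end{equation*}
and the whole task reduces to bounding this tail by $O\bigl((k+1)^{\beta-2}e^{-2\pi h(k+1)/a}\bigr)$. Since the constants $2^{2-\beta}\sqrt{2\pi}\,h^{0.5-\beta}/(a^{0.5+\beta}\Gamma(\beta))$ and $(2\pi)^{\beta-0.5}$ appearing in $g(w)$ do not depend on $w$, and $|\cos(2\pi wz/a)|\le 1$, all of these can be swept into the hidden $O(\cdot)$ constant; the $w$-dependence that actually governs the order lives entirely in the product $w^{\beta-0.5}\,\mathbb{K}_{\beta-0.5}(2\pi hw/a)$.

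The key analytic input is the large-argument behaviour of the modified Bessel function, $\mathbb{K}_{\nu}(x)\sim\sqrt{\pi/(2x)}\,e^{-x}$ as $x\to\infty$ with $\nu=\beta-0.5$. I would first restrict to the regime $w\ge k+1$ with $k$ large, so the argument $x=2\pi hw/a$ is large and this expansion (retained to the order needed, with explicit control of the remainder) applies. Substituting it into $g(w)$ collapses the envelope $|g(w)|$ to a constant multiple of $w^{\beta-0.5}\,x^{-1/2}e^{-x}$, i.e. a polynomial factor in $w$ times the geometric factor $e^{-2\pi hw/a}$. Pinning down the exact surviving power of $w$ is the delicate point on which the claimed prefactor $(k+1)^{\beta-2}$ rests, so I would keep the Bessel asymptotic to enough orders to fix it rather than stopping at the crude leading term, and I would test whether the oscillation of $\cos(2\pi wz/a)$ permits a summation-by-parts gain that trims the largest terms of the tail.

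Having reduced matters to a tail of the form $\sum_{w\ge k+1} w^{\alpha}e^{-cw}$ with $c=2\pi h/a>0$, I would compare it against its monotone envelope: for $k$ large the summand is eventually decreasing, so the sum is controlled by $\int_{k}^{\infty}t^{\alpha}e^{-ct}\,\d t$, which the upper incomplete Gamma asymptotic $\Gamma(s,x)\sim x^{s-1}e^{-x}$ evaluates to a constant times $(k+1)^{\alpha}e^{-c(k+1)}$ up to lower-order corrections. This isolates the dominant $w=k+1$ contribution and exhibits the required exponential rate $e^{-2\pi h(k+1)/a}$ together with its polynomial prefactor, giving the $O(\cdot)$ estimate in \eqref{eqn:newprop1}.

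The main obstacle, and the step deserving the most care, is the exact polynomial power multiplying the exponential. The exponential rate $e^{-2\pi h(k+1)/a}$ is robust: it is dictated solely by the Bessel function's exponential decay and the inter-LED geometry, and is insensitive to the oscillatory cosine. The prefactor, by contrast, is a finer claim — it depends on how the growth $(2\pi w)^{\beta-0.5}$ balances against the Bessel decay $x^{-1/2}$ and against the extra decay extracted when the tail is summed. Tracking these competing contributions consistently (and, where possible, harvesting cancellation from $\cos(2\pi wz/a)$ via summation by parts) is precisely what is required to land on the exponent $\beta-2$ rather than a looser power; once the polynomial order is secured, the remaining work is the routine absorption of all $w$-independent factors into the $O(\cdot)$ constant.
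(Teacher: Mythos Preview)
Your proposal is correct and follows essentially the same route as the paper: identify the error as the tail $\sum_{w\ge k+1}g(w)$, insert the large-argument asymptotic $\mathbb{K}_{\nu}(x)\sim\sqrt{\pi/(2x)}\,e^{-x}$, bound the cosine by a constant, and then control the resulting $\sum w^{\alpha}e^{-cw}$ by comparison with the incomplete Gamma function and its asymptotic $\Gamma(s,x)\sim x^{s-1}e^{-x}$. The paper does not pursue the summation-by-parts refinement you mention; it simply absorbs $|\cos(\cdot)|\le 1$ into the constant, so you need not worry about extracting extra cancellation to match the stated exponent.
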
 
\begin{proof}
The proof is provided in Appendix \ref{app:theorem1a}.
\end{proof}

Since in practice, the number of LEDs are finite, we also look at $ \mathbb{I}_{n}(z)$, i.e., looking at interference by a finite number of LEDs in \eqref{eqn:interf1a}. In Fig. \ref{one5h2}, we observe that as the number of interferers $n$ increases, the interference $\mathbb{I}_{n}(z)$, saturates to a constant value which is $\mathbb{I}_{\infty}(z)$. So, the approximation results in Prop. \ref{prop1} hold true for finite number of LEDs as well, even though the results are derived for an infinite corridor. \\ \par

We now try to understand the interference characterization in Prop. \ref{prop1} by taking a few theoretical examples and further validation through numerical simulations. Firstly, in Prop. \ref{prop1}, the interference for any position $z$ of the user inside the attocell, always has a constant term given as 
\begin{equation*}
\frac{h^{1-2\beta}\sqrt{\pi}\Gamma(\beta-0.5)}{a\Gamma(\beta)}.
\end{equation*}
This term represents the average spatial interference seen at all locations.\\ \par

\begin{figure}[ht]
\centering
\definecolor{mycolor1}{rgb}{0.00000,0.44700,0.74100}%
\definecolor{mycolor2}{rgb}{0.85000,0.32500,0.09800}%
\definecolor{mycolor3}{rgb}{0.92900,0.69400,0.12500}%
\begin{tikzpicture}
\begin{axis}[%
width=7cm,
height=5cm,
scale only axis,
xmin=0,
xmax=50,
xlabel style={font=\color{white!15!black}},
xlabel={Number of interferers ($n$)},
ymode=log,
ymin=1e-05,
ymax=0.01,
yminorticks=true,
ylabel style={font=\color{white!15!black}},
ylabel={Interference $\mathbb{I}_{n}(z)$},
axis background/.style={fill=white},
title style={font=\bfseries},
title={},
xmajorgrids,
ymajorgrids,
yminorgrids,
legend style={font=\fontsize{7}{5}\selectfont,at={(0.5410,0.108)}, anchor=south west, legend cell align=left, align=left, draw=white!15!black}
]
\addplot [color=black, line width=1.0pt]
  table[row sep=crcr]{%
1	0.00109406162488395\\
2	0.00182677062720763\\
3	0.00222817032364015\\
4	0.00242219580571311\\
5	0.00251073034781809\\
6	0.00255071979976105\\
7	0.00256911012091141\\
8	0.00257784636897835\\
9	0.00258215947683047\\
10	0.0025843755581385\\
11	0.00258555928529778\\
12	0.00258621516782645\\
13	0.00258659113425352\\
14	0.0025868134771304\\
15	0.00258694877602272\\
16	0.00258703328208266\\
17	0.00258708733552277\\
18	0.0025871226713215\\
19	0.0025871462363024\\
20	0.00258716224173445\\
21	0.00258717329722973\\
22	0.00258718105306508\\
23	0.0025871865726539\\
24	0.00258719055329508\\
25	0.00258719345968523\\
26	0.00258719560621235\\
27	0.00258719720858401\\
28	0.00258719841675672\\
29	0.00258719933626616\\
30	0.00258720004224573\\
31	0.00258720058876606\\
32	0.00258720101513732\\
33	0.00258720135021362\\
34	0.00258720161536641\\
35	0.00258720182656063\\
36	0.00258720199581916\\
37	0.00258720213226573\\
38	0.00258720224287444\\
39	0.00258720233301315\\
40	0.00258720240684054\\
41	0.00258720246759873\\
42	0.00258720251783031\\
43	0.00258720255954068\\
44	0.0025872025943201\\
45	0.00258720262343621\\
46	0.00258720264790449\\
47	0.00258720266854229\\
48	0.00258720268601051\\
49	0.00258720270084587\\
50	0.00258720271348607\\
};
\addlegendentry{$h = 2.5$m}

\addplot [color=black, dashed, line width=1.0pt]
  table[row sep=crcr]{%
1	0.000267849838808135\\
2	0.000467785402582155\\
3	0.000595359201877916\\
4	0.000668163889856196\\
5	0.000707024142689262\\
6	0.000727141411738609\\
7	0.00073750443260462\\
8	0.000742903664757206\\
9	0.000745775679538449\\
10	0.000747342913097121\\
11	0.000748221948073973\\
12	0.000748728812568853\\
13	0.000749029044295774\\
14	0.000749211494030345\\
15	0.00074932507146974\\
16	0.000749397383618117\\
17	0.000749444396530831\\
18	0.000749475560742764\\
19	0.000749496594434183\\
20	0.000749511029795084\\
21	0.000749521091480712\\
22	0.000749528206352609\\
23	0.000749533305299011\\
24	0.000749537005394575\\
25	0.000749539721844652\\
26	0.000749541737961963\\
27	0.00074954324961782\\
28	0.000749544393909852\\
29	0.000749545267916579\\
30	0.000749545941133248\\
31	0.000749546463823432\\
32	0.000749546872695607\\
33	0.000749547194805659\\
34	0.000749547450268793\\
35	0.000749547654163425\\
36	0.000749547817880788\\
37	0.000749547950090468\\
38	0.000749548057437228\\
39	0.000749548145048096\\
40	0.000749548216904322\\
41	0.000749548276116274\\
42	0.000749548325128067\\
43	0.00074954836587095\\
44	0.000749548399879024\\
45	0.000749548428377111\\
46	0.000749548452347861\\
47	0.000749548472583314\\
48	0.000749548489724735\\
49	0.000749548504293541\\
50	0.000749548516715458\\
};
\addlegendentry{$h = 3.0$m}

\addplot [color=black, line width=1.0pt, mark=o, mark size=3pt]
  table[row sep=crcr]{%
1	8.06203375688705e-05\\
2	0.000145201376444395\\
3	0.000190863960459213\\
4	0.000220184927625978\\
5	0.00023780658411884\\
6	0.000247988954703946\\
7	0.000253767531306547\\
8	0.000257038149956584\\
9	0.000258903451796995\\
10	0.000259982349742006\\
11	0.000260617592823085\\
12	0.000260999047213645\\
13	0.000261232812925775\\
14	0.000261379002993408\\
15	0.000261472246332511\\
16	0.000261532854435036\\
17	0.000261572963971205\\
18	0.000261599962246326\\
19	0.000261618427750227\\
20	0.000261631248091284\\
21	0.000261640275154892\\
22	0.000261646715659221\\
23	0.000261651367898322\\
24	0.000261654767593061\\
25	0.000261657279158883\\
26	0.000261659153677991\\
27	0.000261660566252323\\
28	0.000261661640401003\\
29	0.000261662464203676\\
30	0.000261663101114792\\
31	0.000261663597293453\\
32	0.000261663986627236\\
33	0.000261664294212206\\
34	0.000261664538787961\\
35	0.000261664734457942\\
36	0.000261664891915964\\
37	0.000261665019328345\\
38	0.000261665122973643\\
39	0.000261665207710295\\
40	0.000261665277321058\\
41	0.000261665334768546\\
42	0.000261665382386136\\
43	0.000261665422021525\\
44	0.000261665455145382\\
45	0.000261665482934078\\
46	0.000261665506333038\\
47	0.000261665526105551\\
48	0.000261665542870547\\
49	0.000261665557132013\\
50	0.000261665569301996\\
};
\addlegendentry{$h = 3.5$m}

\end{axis}
\end{tikzpicture}%
\caption{\textit{(One Dimension Model)} Here the variation of interference $\mathbb{I}_{n}(z)$, with respect to the number of interferers ($n$) is drawn for different height $h$ of LED installation. We consider $a=0.5$m, the half-power-semi-angle (HPSA) $\theta_{h}$ of the LED as $\frac{\pi}{3}$ radians and the position $z$ of the receiver photodiode (PD) at half the attocell length $\frac{a}{2}$.}
\label{one5h2}
\end{figure}
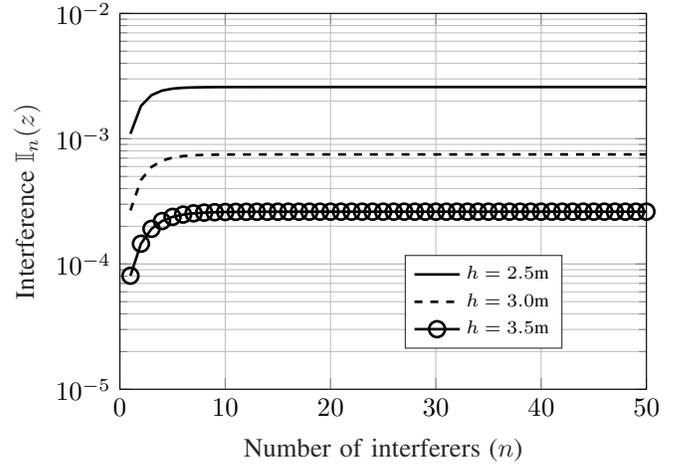 
We see that  the asymptotic error in \eqref{eqn:newprop1} becomes exponentially small when $\frac{h}{a}$ is large. Hence the interference can be well approximated with small values of $k$ as long as the ratio $\frac{h}{a}$ is large. Hence, considering only $k=0$ term, the interference can be approximated as  
\begin{equation*}
\mathbb{I}_{\infty}(z) \approx \hat {\mathtt{I}}_{0}(z) =  \frac{h^{1-2\beta}\sqrt{\pi}\Gamma(\beta-0.5)}{a\Gamma(\beta)} -\frac{1}{(z^{2}+h^{2})^{\beta}}.
%\label{eqn:newsa}
\end{equation*}
For example, if we consider $a=0.2$m and $h=2.5$m, leading to $\frac{h}{a}=12.5$, we can choose $k=0$ and have a theoretical error bound  of $O(e^{-25\pi})$. 
This can be verified from Fig. \ref{space1s}. We see that all the terms from $w=1$ have negligible contribution. \\ \par 

When $\frac{h}{a}$ is not large, a few more terms $(w)$ are necessary to improve the approximation accuracy. For example, in Fig. \ref{space11}, when we consider $h=2.5$m and $a=0.5$m, leading to $\frac{h}{a}=5$; $w=0$ and $w=1$ are significant, with an error bound on $w>1$ as $O(e^{-20\pi})$. So, $k=1$ or $\hat{\mathtt{I}}_{1}(z)$ is a good approximation for this case. Further, in most practical cases, the ratio $\frac{h}{a}$ varies between $2.5$ to $5$. So, the above approximation to $\hat{\mathtt{I}}_{1}(z)$ i.e.
\[\mathbb{I}_{\infty}(z) \approx \hat {\mathtt{I}}_{1}(z) =  \frac{h^{1-2\beta}\sqrt{\pi}\Gamma(\beta-0.5)}{a\Gamma(\beta)} -\frac{1}{(z^{2}+h^{2})^{\beta}}+g(1),\]
can be extended in general to this practically seen range\footnote{For lower values of $\frac{h}{a}$ i.e. $<2.5$, $k>1$ may have to be considered to improve the approximation accuracy. Also, from the proof of Prop. \ref{prop1},  $k$ should be chosen such that  $k \geq \lceil\frac{a (\beta-2)}{2\pi h}\rceil$ for a good approximation.} of $\frac{h}{a}$ because we still have a theoretical asymptotic error bound on $w>1$ as $O(e^{-10\pi})$. \\ \par 
 
\begin{figure}[ht]
\centering
\definecolor{mycolor1}{rgb}{0.00000,0.44700,0.74100}%
\begin{tikzpicture}
\begin{axis}[%
width=7cm,
height=5cm,
scale only axis,
xmin=0,
xmax=2,
xlabel style={font=\color{white!15!black}},
xlabel={Index ($w$) of individual terms},
ymode=log,
ymin=1e-30,
ymax=1,
yminorticks=true,
ylabel style={font=\color{white!15!black}},
ylabel={$|g(w)|$},
axis background/.style={fill=white},
title style={font=\bfseries},
title={},
xmajorgrids,
ymajorgrids,
yminorgrids
]
\addplot[ycomb, mark size=2.5pt, mark=*, mark options={solid, black}, forget plot] table[row sep=crcr] {%
0	0.00321699087727595\\
1	3.64470513731168e-13\\
2	6.03306845195216e-26\\
3	4.48137428497783e-39\\
4	2.37474454654118e-52\\
5	1.04343692568582e-65\\
};
\end{axis}
\end{tikzpicture}%
\caption{\textit{(One Dimension Model)} (Large $\frac{h}{a}$ case) This graph shows the magnitude of the individual terms of $|g(w)|$ for a height $h$ of the LED = $2.5$m and $a=0.2$m. The half-power-semi-angle (HPSA) $\theta_{h}=\frac{\pi}{3}$ radians and $z=\frac{a}{2}$.}
\label{space1s}
\end{figure} 
\begin{figure}[ht]
\centering
\definecolor{mycolor1}{rgb}{0.00000,0.44700,0.74100}%
\begin{tikzpicture}
\begin{axis}[%
width=7cm,
height=5cm,
scale only axis,
xmin=0,
xmax=3,
xlabel style={font=\color{white!15!black}},
xlabel={Index ($w$) of individual terms},
ymode=log,
ymin=1e-9,
ymax=0.001,
yminorticks=true,
ylabel style={font=\color{white!15!black}},
ylabel={$|g(w)|$},
axis background/.style={fill=white},
title style={font=\bfseries},
title={},
xmajorgrids,
ymajorgrids,
yminorgrids
]
\addplot[ycomb, mark size=2.5pt, mark=*, mark options={solid, fill=black, black},forget plot] table[row sep=crcr] {%
0	0.000536165146212658\\
1	0.000152774076325489\\
2	4.00532421302961e-06\\
3	6.04051647203377e-08\\
4	6.96471370692569e-10\\
5	6.8534844415462e-12\\
};
\end{axis}
\end{tikzpicture}%
\caption{\textit{(One Dimension Model)} This graph shows the magnitude of the individual terms of $|g(w)|$ for a height $h$ of the LED = $2.5$m and $a=0.5$m. The half-power-semi-angle (HPSA) $\theta_{h}=\frac{\pi}{3}$ radians and $z=\frac{a}{2}$.}
\label{space11}
\end{figure} 

%\captionof{table}{This table shows the Parameters considered to study the optimal allocation of LEDs.}
\begin{table}
\caption{PARAMETERS : This table shows the Parameters considered in this study.}
\label{abc}
\begin{tabular}{ | m{3cm} | m{1cm}| m{2cm} | m{1.0cm} |} 
\hline
Parameter& Symbol & Value & Unit \\ 
\hline
Temperature of Operation & $T$ & $300$ & K \\ 
\hline
Noise power spectral density at Photodiode & $N_{o}$ & $4.14\times10^{-21}$ & WHz$^{-1}$ \\
\hline
Modulation bandwidth of LED & $W$ & $40\times10^{6}$ & Hz \\
%\hline
%Optical output power & $P_{o}$ & $0.16$ & W \\
\hline
Area of Photodiode & $A_{pd}$ & $10^{-4}$ & m$^{2}$ \\
\hline
Responsivity of PD & $R_{pd}$ & $0.1$ & AW$^{-1}$ \\
%\hline
%Field Of View of PD & $\theta_{f}$ & $circ}$ & degrees \\
\hline 
\end{tabular}
\end{table}
%\label{parameter}
 
For numerical validation, firstly, from Fig. \ref{one5a1}, we see the tightness of this approximation, for the above given range of $\frac{h}{a}$. We proceed by considering $h=2.5$m and plotting the interference w.r.t. the variation of $a$ from $0.1$m to $1$m $($i.e. $\frac{h}{a}$ in the range of $25$ to $2.5$$)$. We see that $\mathbb{I}_{n}(z)$ $($for $n=\{4,10,20,40\}$$)$ and $\hat{\mathtt{I}}_{1}(z)$ are tightly bounded with each other, which validates our approximation. Now, from the above numerical validation for $k=1$, we take a given value of $a=0.5$m and proceed for further numerical validation w.r.t various system parameters $h, \theta_{h}$ and $z$ in Fig. \ref{one5h1}, \ref{one5t1} and \ref{one5z1} respectively. The corresponding graphs for the approximation error $\hat{e}=|\mathbb{I}_{n}(z)-\hat{\mathtt{I}}_{1}(z)|$ are respectively shown in Fig. \ref{one5h3}, \ref{one5t3} and \ref{one5z3} for different number of interferers $n$. All the simulations are obtained using the parameter values given in Table I. \\ \par   

\begin{figure}[ht]
\centering
\definecolor{mycolor1}{rgb}{0.00000,0.44700,0.74100}%
\definecolor{mycolor2}{rgb}{0.85000,0.32500,0.09800}%
\definecolor{mycolor3}{rgb}{0.92900,0.69400,0.12500}%
\definecolor{mycolor4}{rgb}{0.49400,0.18400,0.55600}%
\definecolor{mycolor5}{rgb}{0.46600,0.67400,0.18800}%
\begin{tikzpicture}
\begin{axis}[%
width=7cm,
height=5cm,
scale only axis,
xmin=0.1,
xmax=1,
xlabel style={font=\color{white!15!black}},
xlabel={Inter-LED spacing ($a$) in metres},
ymode=log,
ymin=0.0003469446952,
ymax=0.016,
yminorticks=true,
ylabel style={font=\color{white!15!black}},
ylabel={Interference},
axis background/.style={fill=white},
title style={font=\bfseries},
title={},
xmajorgrids,
ymajorgrids,
yminorgrids,
legend style={font=\fontsize{7}{5}\selectfont,at={(0.375,0.585)}, anchor=south west, legend cell align=left, align=left, draw=white!15!black}
]
\addplot [color=black, dashed, line width=1.0pt]
  table[row sep=crcr]{%
0.1	0.015430641914642\\
%0.11	0.0139685930490245\\
%0.12	0.0127502764329316\\
%0.13	0.011719450938323\\
0.14	0.0108359443201864\\
%0.15	0.0100702969213773\\
%0.16	0.00940041398814823\\
%0.17	0.00880939952137659\\
0.18	0.00828411217738649\\
%0.19	0.00781417720009381\\
%0.2	0.00739129477351102\\
%0.21	0.00700874598782571\\
0.22	0.00666103354200665\\
%0.23	0.00634361617616833\\
%0.24	0.00605270949584793\\
%0.25	0.00578513459845947\\
0.26	0.00553820163210812\\
%0.27	0.00530961923022731\\
%0.28	0.00509742335308271\\
%0.29	0.00489992085172741\\
0.3	0.00471564431916954\\
%0.31	0.00454331568002717\\
%0.32	0.00438181660712651\\
%0.33	0.00423016431690397\\
0.34	0.00408749163621144\\
%0.35	0.00395303048624417\\
%0.36	0.00382609811915088\\
%0.37	0.00370608558654852\\
0.38	0.00359244802880124\\
%0.39	0.00348469645826008\\
%0.4	0.00338239077502005\\
%0.41	0.00328513380476509\\
0.42	0.00319256618835287\\
%0.43	0.00310436198448416\\
%0.44	0.00302022487201162\\
%0.45	0.00293988485861067\\
0.46	0.00286309541875749\\
%0.47	0.00278963099707467\\
%0.48	0.00271928482376175\\
%0.49	0.00265186699752703\\
0.5	0.00258720279857029\\
%0.51	0.00252513120004079\\
%0.52	0.00246550355125238\\
%0.53	0.00240818240997068\\
0.54	0.0023530405044477\\
%0.55	0.00229995980869037\\
%0.56	0.00224883071680819\\
%0.57	0.00219955130427184\\
0.58	0.00215202666559296\\
%0.59	0.00210616831935751\\
%0.6	0.00206189367275408\\
%0.61	0.00201912553876907\\
0.62	0.00197779170010181\\
%0.63	0.00193782451460751\\
%0.64	0.0018991605577253\\
%0.65	0.0018617402979075\\
0.66	0.00182550780154909\\
%0.67	0.00179041046433469\\
%0.68	0.00175639876628283\\
%0.69	0.00172342604808287\\
0.7	0.00169144830659468\\
%0.71	0.00166042400762156\\
%0.72	0.00163031391427659\\
%0.73	0.00160108092944698\\
0.74	0.00157268995102281\\
%0.75	0.00154510773869886\\
%0.76	0.00151830279128373\\
%0.77	0.00149224523356139\\
0.78	0.0014669067118484\\
%0.79	0.00144226029747666\\
%0.8	0.00141828039750917\\
%0.81	0.00139494267206422\\
0.82	0.00137222395768493\\
%0.83	0.00135010219624516\\
%0.84	0.00132855636893152\\
%0.85	0.00130756643488443\\
0.86	0.00128711327412017\\
%0.87	0.00126717863439074\\
%0.88	0.00124774508166924\\
%0.89	0.00122879595397703\\
0.9	0.00121031531829372\\
%0.91	0.00119228793031417\\
%0.92	0.00117469919683692\\
%0.93	0.00115753514058704\\
0.94	0.00114078236729324\\
%0.95	0.00112442803485404\\
%0.96	0.00110845982444161\\
%0.97	0.00109286591340437\\
0.98	0.00107763494984047\\
%0.99	0.00106275602872495\\
%1	0.00104821866948225\\
};
\addlegendentry{$\hat{\mathtt{I}}_{1}(z)$}

\addplot [color=black, dashdotted, line width=1.0pt]
  table[row sep=crcr]{%
0.1	0.0105474897229063\\
%0.11	0.0101433208925772\\
%0.12	0.00973910537617541\\
%0.13	0.00933950390020436\\
0.14	0.0089482725799505\\
%0.15	0.00856831712683847\\
%0.16	0.00820177649490221\\
%0.17	0.00785012245973346\\
0.18	0.00751426440753899\\
%0.19	0.00719465153620229\\
%0.2	0.00689136734117425\\
%0.21	0.00660421347050232\\
0.22	0.00633278172032493\\
%0.23	0.00607651413075767\\
%0.24	0.00583475190679182\\
%0.25	0.00560677432163786\\
0.26	0.00539182895093805\\
%0.27	0.00518915461330715\\
%0.28	0.00499799831756322\\
%0.29	0.00481762738563401\\
0.3	0.00464733776439886\\
%0.31	0.00448645938046704\\
%0.32	0.00433435924159308\\
%0.33	0.00419044285370599\\
0.34	0.00405415440607391\\
%0.35	0.00392497607916694\\
%0.36	0.00380242674912971\\
%0.37	0.00368606029752649\\
0.38	0.00357546368301102\\
%0.39	0.00347025489065035\\
%0.4	0.00337008084281047\\
%0.41	0.00327461533104212\\
0.42	0.00318355700979503\\
%0.43	0.00309662747879493\\
%0.44	0.00301356947052448\\
%0.45	0.00293414515163748\\
0.46	0.00285813454165485\\
%0.47	0.0027853340484275\\
%0.48	0.00271555511720486\\
%0.49	0.0026486229884084\\
0.5	0.0025843755581385\\
%0.51	0.00252266233485842\\
%0.52	0.00246334348546249\\
%0.53	0.00240628896394401\\
0.54	0.002351377716053\\
%0.55	0.00229849695361811\\
%0.56	0.0022475414925587\\
%0.57	0.00219841314900118\\
0.58	0.0021510201883171\\
%0.59	0.00210527682230377\\
%0.6	0.00206110275012086\\
%0.61	0.00201842273897202\\
0.62	0.00197716624087504\\
%0.63	0.00193726704219449\\
%0.64	0.00189866294291758\\
%0.65	0.00186129546293559\\
0.66	0.00182510957285153\\
%0.67	0.00179005344707044\\
%0.68	0.00175607823714287\\
%0.69	0.00172313786352647\\
0.7	0.00169118882410668\\
%0.71	0.00166019001797669\\
%0.72	0.00163010258312054\\
%0.73	0.00160088974677288\\
0.74	0.0015725166873458\\
%0.75	0.00154495040691877\\
%0.76	0.00151815961338243\\
%0.77	0.00149211461141271\\
0.78	0.00146678720152878\\
%0.79	0.00144215058655792\\
%0.8	0.00141817928489282\\
%0.81	0.0013948490499835\\
0.82	0.00137213679555667\\
%0.83	0.00135002052610152\\
%0.84	0.00132847927220225\\
%0.85	0.00130749303033486\\
0.86	0.00128704270678001\\
%0.87	0.00126711006533369\\
%0.88	0.00124767767852549\\
%0.89	0.0012287288820791\\
0.9	0.00121024773237223\\
%0.91	0.00119221896667364\\
%0.92	0.00117462796595372\\
%0.93	0.00115746072008156\\
0.94	0.00114070379523714\\
%0.95	0.00112434430338076\\
%0.96	0.00110836987363484\\
%0.97	0.00109276862544439\\
0.98	0.00107752914339315\\
%0.99	0.00106264045356172\\
%1	0.00104809200132296\\
};
\addlegendentry{Number of interferers $(n) = 4$}

\addplot [color=black, line width=1.0pt, draw=none, mark=o, mark size=3pt]
  table[row sep=crcr]{%
0.1	0.0143561539893994\\
%0.11	0.0132588024957975\\
%0.12	0.0122763941077038\\
%0.13	0.0113996392083741\\
0.14	0.0106177495057785\\
%0.15	0.00991980130165097\\
0.16	0.00929548470785414\\
%0.17	0.00873545730929475\\
%0.18	0.00823146045725809\\
%0.19	0.00777630355458717\\
0.2	0.00736378223129982\\
%0.21	0.00698856939570465\\
%0.22	0.00664610089475816\\
%0.23	0.00633246698196808\\
0.24	0.00604431463331208\\
%0.25	0.00577876231903087\\
%0.26	0.00553332704100082\\
%0.27	0.00530586259387008\\
0.28	0.00509450768414893\\
%0.29	0.00489764249691386\\
%0.3	0.00471385239195811\\
%0.31	0.00454189756052745\\
0.32	0.00438068763770346\\
%0.33	0.00422926042275279\\
%0.34	0.00408676400091751\\
%0.35	0.00395244168211615\\
0.36	0.00382561927504751\\
%0.37	0.00370569430092758\\
%0.38	0.00359212682178517\\
%0.39	0.00348443161620345\\
0.4	0.00338217148275891\\
%0.41	0.00328495149005599\\
%0.42	0.00319241402377869\\
%0.43	0.00310423450690941\\
0.44	0.00302011769029225\\
%0.45	0.00293979442793166\\
%0.46	0.00286301886554113\\
%0.47	0.00278956598247148\\
0.48	0.00271922943672549\\
%0.49	0.0026518196706841\\
%0.5	0.00258716224173445\\
%0.51	0.002525096347449\\
0.52	0.00246547351951668\\
%0.53	0.00240815646443417\\
%0.54	0.00235301803215766\\
%0.55	0.00229994029660115\\
0.56	0.00224881373413174\\
%0.57	0.00219953648812832\\
%0.58	0.00215201370929439\\
%0.59	0.00210615696279699\\
0.6	0.00206188369448185\\
%0.61	0.00201911674942111\\
%0.62	0.00197778393691258\\
%0.63	0.00193781763679084\\
0.64	0.00189915444254775\\
%0.65	0.00186173483731116\\
%0.66	0.0018255028992065\\
%0.67	0.00179040603303876\\
0.68	0.00175639472559139\\
%0.69	0.00172342232215046\\
%0.7	0.00169144482213529\\
%0.71	0.00166042069195446\\
0.72	0.00163031069341522\\
%0.73	0.00160107772619658\\
%0.74	0.00157268668305805\\
%0.75	0.001545104316597\\
0.76	0.00151829911649289\\
%0.77	0.00149224119628707\\
%0.78	0.0014669021888443\\
%0.79	0.00144225514972912\\
0.8	0.00141827446780668\\
%0.81	0.0013949357824463\\
%0.82	0.00137221590676675\\
%0.83	0.00135009275641648\\
0.84	0.00132854528343034\\
%0.85	0.00130755341474791\\
%0.86	0.00128709799501687\\
%0.87	0.00126716073333978\\
0.88	0.00124772415365379\\
%0.89	0.00122877154846053\\
%0.9	0.00121028693564872\\
%0.91	0.00119225501817459\\
0.92	0.0011746611463856\\
%0.93	0.0011574912827913\\
%0.94	0.00114073196910171\\
%0.95	0.00112437029536883\\
0.96	0.00110839387108003\\
%0.97	0.00109279079806499\\
%0.98	0.00107754964508835\\
%0.99	0.00106265942401086\\
1	0.00104810956741082\\
};
\addlegendentry{Number of interferers $(n) = 10$}

\addplot [color=black, draw=none, mark=asterisk, mark size=3pt]
  table[row sep=crcr]{%
0.1	0.0154304129460149\\
%0.11	0.0139684827966875\\
%0.12	0.0127502201168605\\
%0.13	0.0117194206828159\\
0.14	0.0108359273407387\\
%0.15	0.010070287023031\\
%0.16	0.00940040802160623\\
%0.17	0.00880939581678781\\
0.18	0.00828410981573109\\
%0.19	0.00781417565850991\\
%0.2	0.0073912937455439\\
%0.21	0.00700874528898059\\
0.22	0.00666103305848401\\
%0.23	0.00634361583620817\\
%0.24	0.00605270925327812\\
%0%.25	0.00578513442301748\\
0.26	0.00553820150361882\\
%0.27	0.00530961913502765\\
%0.28	0.00509742328178433\\
%0.29	0.00489992079779156\\
0.3	0.00471564427798455\\
%0.31	0.00454331564830219\\
%0.32	0.00438181658248728\\
%0.33	0.00423016429761984\\
0.34	0.00408749162100866\\
%0.35	0.00395303047417672\\
%0.36	0.0038260981095101\\
%0.37	0.00370608557879924\\
0.38	0.0035924480225362\\
%0.39	0.00348469645316709\\
%0.4	0.00338239077085813\\
%0.41	0.00328513380134692\\
0.42	0.00319256618553183\\
%0.43	0.00310436198214456\\
%0.44	0.00302022487006117\\
%0.45	0.00293988485697458\\
0.46	0.00286309541737342\\
%0.47	0.00278963099588808\\
%0.48	0.00271928482272116\\
%0.49	0.00265186699657841\\
0.5	0.00258720279764997\\
%0.51	0.002525131199066\\
%0.52	0.00246550355010814\\
%0.53	0.00240818240849173\\
0.54	0.00235304050239371\\
%0.55	0.00229995980571212\\
%0.56	0.0022488307124022\\
%0.57	0.00219955129772022\\
0.58	0.00215202665588482\\
%0.59	0.00210616830508762\\
%0.6	0.00206189365199391\\
%0.61	0.00201912550890468\\
0.62	0.00197779165763252\\
%0.63	0.0019378244548987\\
%0.64	0.00189916047470815\\
%0.65	0.0018617401837179\\
0.66	0.0018255076460979\\
%0.67	0.00179041025480081\\
%.68	0.00175639848652069\\
%0.69	0.00172342567793563\\
0.7	0.00169144782110479\\
%0.71	0.00166042337613018\\
%0.72	0.00163031309940394\\
%0.73	0.00160107988594611\\
0.74	0.00157268862449983\\
%0.75	0.00154510606419342\\
%0.76	0.00151830069170445\\
%0.77	0.00149224261797182\\
0.78	0.00146690347360019\\
%0.79	0.0014422563121878\\
%0.8	0.00141827552088626\\
%0.81	0.00139493673756916\\
0.82	0.00137221677404748\\
%0.83	0.00135009354482444\\
%0.84	0.00132854600093047\\
%0.85	0.00130755406842287\\
0.86	0.00128709859117316\\
%0.87	0.00126716127760018\\
%0.88	0.00124772465103787\\
%0.89	0.00122877200345497\\
0.9	0.00121028735226879\\
%0.91	0.00119225540001799\\
%0.92	0.00117466149667967\\
%0.93	0.00115749160443446\\
0.94	0.00114073226469994\\
%0.95	0.00112437056726771\\
%0.96	0.00110839412139309\\
%0.97	0.00109279102869865\\
0.98	0.00107754985776401\\
%0.99	0.00106265962028443\\
%1	0.00104810974869005\\
};
\addlegendentry{Number of interferers $(n) = 20$}

\addplot [color=black, draw=none, mark=+, mark size=3pt]
  table[row sep=crcr]{%
0.1	0.0154306398237468\\
%0.11	0.0139685920654964\\
%0.12	0.0127502759395083\\
%0.13	0.0117194506769505\\
0.14	0.0108359441751488\\
%0.15	0.0100702968375961\\
%0.16	0.0094004139380249\\
%0.17	0.00880939949044941\\
0.18	0.008284112157774\\
%0.19	0.00781417718734864\\
%0.2	0.0073912947650446\\
%0.21	0.00700874598208889\\
0.22	0.00666103353804874\\
%0.23	0.0063436161733925\\
%0.24	0.00605270949387165\\
%0.25	0.00578513459703286\\
0.26	0.00553820163106511\\
%0.27	0.0053096192294557\\
%0.28	0.00509742335250563\\
%0.29	0.0048999208512914\\
0.3	0.00471564431883697\\
%0.31	0.00454331567977125\\
%0.32	0.00438181660692793\\
%0.33	0.00423016431674869\\
0.34	0.00408749163608911\\
%0.35	0.00395303048614714\\
%0.36	0.00382609811907341\\
%0.37	0.00370608558648627\\
0.38	0.00359244802875092\\
%0.39	0.00348469645821914\\
%0.4	0.00338239077498648\\
%0.41	0.00328513380473722\\
0.42	0.0031925661883292\\
%0.43	0.0031043619844631\\
%0.44	0.00302022487199112\\
%0.45	0.00293988485858767\\
0.46	0.00286309541872696\\
%0.47	0.00278963099702812\\
%0.48	0.00271928482368483\\
%0.49	0.00265186699739581\\
0.5	0.00258720279834561\\
%0.51	0.0025251311996599\\
%0.52	0.00246550355061674\\
%0.53	0.00240818240892856\\
0.54	0.00235304050276996\\
%0.55	0.00229995980603708\\
%0.56	0.0022488307126836\\
%0.57	0.00219955129796452\\
0.58	0.00215202665609743\\
%0.59	0.00210616830527309\\
%0.6	0.00206189365215607\\
%0.61	0.00201912550904678\\
0.62	0.00197779165775731\\
%0.63	0.00193782445500852\\
%0.64	0.00189916047480498\\
%0.65	0.00186174018380345\\
0.66	0.00182550764617362\\
%0.67	0.00179041025486796\\
%0.68	0.00175639848658034\\
%0.69	0.00172342567798871\\
0.7	0.0016914478211521\\
%0.71	0.00166042337617242\\
%0.72	0.00163031309944171\\
%0.73	0.00160107988597994\\
0.74	0.00157268862453018\\
%0.75	0.00154510606422068\\
%0.76	0.00151830069172897\\
%0.77	0.00149224261799391\\
0.78	0.00146690347362011\\
%0.79	0.00144225631220579\\
%0.8	0.00141827552090254\\
%0.81	0.00139493673758389\\
0.82	0.00137221677406084\\
%0.83	0.00135009354483656\\
%0.84	0.00132854600094149\\
%0.85	0.0013075540684329\\
0.86	0.00128709859118229\\
%0.87	0.00126716127760851\\
%0.88	0.00124772465104547\\
%0.89	0.00122877200346191\\
0.9	0.00121028735227513\\
%0.91	0.0011922554000238\\
%0.92	0.00117466149668499\\
%0.93	0.00115749160443935\\
0.94	0.00114073226470442\\
%0.95	0.00112437056727183\\
%0.96	0.00110839412139688\\
%0.97	0.00109279102870214\\
0.98	0.00107754985776722\\
%0.99	0.00106265962028739\\
%1	0.00104810974869279\\
};
\addlegendentry{Number of interferers $(n) = 40$}

\end{axis}
\end{tikzpicture}%
\caption{\textit{(One Dimension Model)} Here the variation of interference $\mathbb{I}_{n}(z)$, is drawn with respect to a linear variation of the inter-LED spacing $a$ for different number interferers ($n$) in the network. The graph for the proposed interference expression $\hat{\mathtt{I}}_{1}(z)$ is also drawn. We consider height $h$ of the LEDs as $2.5$m, the half-power-semi-angle (HPSA) $\theta_{h}$ of the LED as $\frac{\pi}{3}$ radians and the position $z=0.25$m.}
\label{one5a1}  
\end{figure}
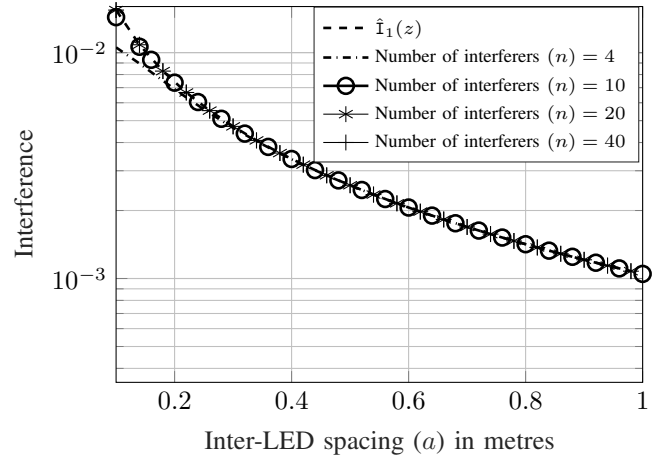 

\begin{figure}[ht]
\centering
\begin{tikzpicture}
\draw[help lines] (0,0);
\begin{axis}[%
width=7cm,
height=5cm,
scale only axis,
xmin=2.5,
xmax=5,
xlabel style={font=\color{white!15!black}},
xlabel={Height ($h$) of the LED installation in metres},
ymode=log,
ymin=1e-05,
ymax=0.01,
yminorticks=true,
ylabel style={font=\color{white!15!black}, align=center},
ylabel={Interference},
axis background/.style={fill=white},
title style={font=\bfseries, align=center},
xmajorgrids,
ymajorgrids,
yminorgrids,
legend style={font=\fontsize{7}{5}\selectfont,at={(0.376,0.551)}, anchor=south west, legend cell align=left, align=left, draw=white!15!black}
]
\addplot [color=black, dashed, line width=1.0pt]
  table[row sep=crcr]{%
2.5	0.00258720279835122\\
2.6	0.00198309116153975\\
2.7	0.00153489914571403\\
2.8	0.00119878915246052\\
2.9	0.000944198477099194\\
3	0.000749548600461203\\
3.1	0.000599423909529509\\
3.2	0.000482689878483216\\
3.3	0.000391221013271258\\
3.4	0.000319030179883406\\
3.5	0.00026166565174971\\
3.6	0.000215788943006176\\
3.7	0.000178876148885675\\
3.8	0.000149004589844549\\
3.9	0.000124698984077566\\
4	0.000104819568330052\\
4.1	8.84800528023656e-05\\
4.2	7.49869808767264e-05\\
4.3	6.37945741130931e-05\\
4.4	5.44708688742534e-05\\
4.5	4.66721487941069e-05\\
4.6	4.01235159624639e-05\\
4.7	3.46040357857947e-05\\
4.8	2.99353118253406e-05\\
4.9	2.59726490342095e-05\\
5	2.25981820252323e-05\\
};
\addlegendentry{$\hat{\mathtt{I}}_{1}(z)$}

\addplot [color=black, dashdotted, line width=1.0pt]
  table[row sep=crcr]{%
2.5	0.00251073034781809\\
2.6	0.00191508598585208\\
2.7	0.00147443372801194\\
2.8	0.00114502772550451\\
2.9	0.000896390974212617\\
3	0.000707024142689262\\
3.1	0.000561584141191273\\
3.2	0.000449002378637945\\
3.3	0.000361212942284716\\
3.4	0.000292282266428237\\
3.5	0.00023780658411884\\
3.6	0.000194490197469524\\
3.7	0.000159847265033928\\
3.8	0.000131988886717298\\
3.9	0.000109469699894576\\
4	9.1176399610267e-05\\
4.1	7.62460645710232e-05\\
4.2	6.40058577346179e-05\\
4.3	5.39281815534167e-05\\
4.4	4.55970949043933e-05\\
4.5	3.86829972905183e-05\\
4.6	3.29234249876371e-05\\
4.7	2.81083961304358e-05\\
4.8	2.40691631660226e-05\\
4.9	2.06695332191469e-05\\
5	1.77991350530624e-05\\
};
\addlegendentry{Number of interferers $(n) = 4$}

\addplot [color=black, line width=1.0pt, draw=none, mark=o, mark size=3pt]
  table[row sep=crcr]{%
2.5	0.0025843755581385\\
2.6	0.00198039518863349\\
2.7	0.00153233121273368\\
2.8	0.00119634575674219\\
2.9	0.000941875889884441\\
3	0.000747342913097121\\
3.1	0.00059733107641311\\
3.2	0.00048070575657739\\
3.3	0.000389341398103636\\
3.4	0.000317250838012163\\
3.5	0.000259982349742006\\
3.6	0.000214197473025786\\
3.7	0.000177372351026986\\
3.8	0.000147584371381337\\
3.9	0.000123358335817701\\
4	0.000103554578282498\\
4.1	8.72869173716002e-05\\
4.2	7.38620138029025e-05\\
4.3	6.27342133749674e-05\\
4.4	5.34716817668808e-05\\
4.5	4.57308353854122e-05\\
4.6	3.92369111227425e-05\\
4.7	3.37691099725811e-05\\
4.8	2.9149170793458e-05\\
4.9	2.52325326276172e-05\\
5	2.19014621938979e-05\\
};
\addlegendentry{Number of interferers $(n) = 10$}

\addplot [color=black, draw=none, mark=asterisk, mark size=3pt]
  table[row sep=crcr]{%
2.5	0.00258720271348607\\
2.6	0.00198309107688368\\
2.7	0.00153489906127452\\
2.8	0.00119878906824497\\
2.9	0.00094419839311493\\
3	0.000749548516715458\\
3.1	0.000599423826029433\\
3.2	0.000482689795235869\\
3.3	0.000391220930283611\\
3.4	0.000319030097162335\\
3.5	0.000261665569301996\\
3.6	0.000215788860838505\\
3.7	0.000178876067004632\\
3.8	0.000149004508256619\\
3.9	0.000124698902789132\\
4	0.000104819487347392\\
4.1	8.84799721316513e-05\\
4.2	7.49869005240238e-05\\
4.3	6.37944940843587e-05\\
4.4	5.44707891753336e-05\\
4.5	4.66720694307366e-05\\
4.6	4.01234369402653e-05\\
4.7	3.46039571102762e-05\\
4.8	2.99352335018958e-05\\
4.9	2.59725710681159e-05\\
5	2.25981044216505e-05\\
};
\addlegendentry{Number of interferers $(n) = 20$}

\addplot [color=black, draw=none, mark=+, mark size=3pt]
  table[row sep=crcr]{%
2.5	0.00258720279764997\\
2.6	0.00198309116083894\\
2.7	0.00153489914501368\\
2.8	0.00119878915176064\\
2.9	0.00094419847639981\\
3	0.000749548599762325\\
3.1	0.000599423908831155\\
3.2	0.000482689877785401\\
3.3	0.000391221012574\\
3.4	0.000319030179186721\\
3.5	0.000261665651053614\\
3.6	0.000215788942310686\\
3.7	0.000178876148190808\\
3.8	0.00014900458915032\\
3.9	0.000124698983383992\\
4	0.000104819567637149\\
4.1	8.84800521101491e-05\\
4.2	7.4986980185213e-05\\
4.3	6.37945734222987e-05\\
4.4	5.44708681841939e-05\\
4.5	4.66721481047981e-05\\
4.6	4.01235152739216e-05\\
4.7	3.46040350980345e-05\\
4.8	2.99353111383783e-05\\
4.9	2.59726483480605e-05\\
5	2.2598181339912e-05\\
};
\addlegendentry{Number of interferers $(n) = 40$}

\end{axis}
\end{tikzpicture}%
\caption{\textit{(One Dimension Model)} Here the variation of interference $\mathbb{I}_{n}(z)$, is drawn with respect to a linear variation of the height $h$ of installation of the LED for different number interferers $(n)$ in the network. The graph for the proposed interference expression $\hat{\mathtt{I}}_{1}(z)$ is also drawn. We consider $a=0.5$m, the half-power-semi-angle (HPSA) $\theta_{h}$ of the LED as $\frac{\pi}{3}$ radians and the position $z$ of the receiver photodiode (PD) at half the attocell length $\frac{a}{2}$.}
\label{one5h1}
\end{figure}
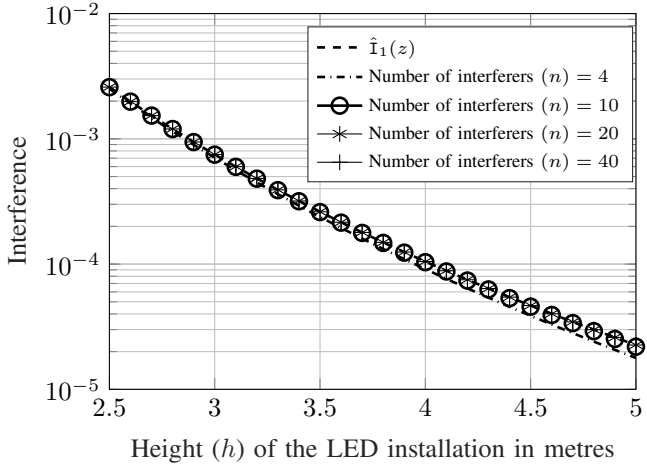 

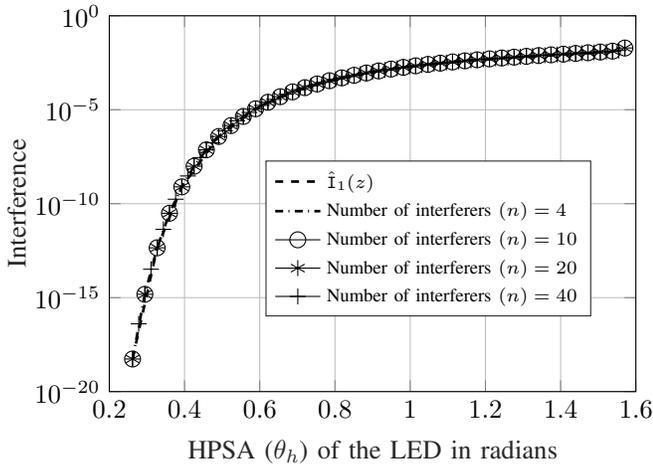
\begin{figure}[ht]
\centering
\begin{tikzpicture}
\begin{axis}[%
width=7cm,
height=5cm,
scale only axis,
xmin=0.2,
xmax=1.6,
xlabel style={font=\color{white!15!black}},
xlabel={HPSA ($\theta_{h}$) of the LED in radians},
ymode=log,
ymin=1e-20,
ymax=1,
yminorticks=true,
ylabel style={font=\color{white!15!black}},
ylabel={Interference},
axis background/.style={fill=white},
title style={font=\bfseries},
title={},
xmajorgrids,
ymajorgrids,
yminorgrids,
legend style={font=\fontsize{7}{5}\selectfont,at={(0.2966,0.200)}, anchor=south west, legend cell align=left, align=left, draw=white!15!black}
]
\addplot [color=black, dashed, line width=1.0pt]
  table[row sep=crcr]{%
0.261799387799149	5.42594176485996e-19\\
0.327249234748937	4.49530039353952e-13\\
0.392699081698724	7.85960969938092e-10\\
0.490873852123405	3.76863622651642e-07\\
0.589048622548086	1.12563409527614e-05\\
0.67086093123532	6.73730453229294e-05\\
0.752673239922555	0.00023553539354545\\
0.818123086872342	0.000498856315692319\\
0.899935395559576	0.00103038871983863\\
0.98174770424681	0.0018024107445792\\
1.06356001293404	0.00280512306249874\\
1.12900985988383	0.00375719554519806\\
1.21082216857107	0.00511365446178422\\
1.2926344772583	0.00664290618539656\\
1.37444678594553	0.0083621983573636\\
1.43989663289532	0.00993550007585437\\
1.50534647984511	0.0118736896792379\\
1.55443386505745	0.0141805247116131\\
1.5707963267949	0.0193839295109491\\
};
\addlegendentry{$\hat{\mathtt{I}}_{1}(z)$}

\addplot [color=black, dashdotted, line width=1.0pt]
  table[row sep=crcr]{%
0.261799387799149	5.42594170315058e-19\\
%0.278161849536596	4.04145187163949e-17\\
0.294524311274043	1.50889933670242e-15\\
%0.31088677301149	3.25066973074461e-14\\
0.327249234748937	4.49528706743244e-13\\
%0.343611696486384	4.33092508950871e-12\\
0.35997415822383	3.0984295533413e-11\\
%0.376336619961277	1.73090555298056e-10\\
0.392699081698724	7.85912690559892e-10\\
%0.409061543436171	2.99535098827281e-09\\
0.425424005173618	9.83711396793448e-09\\
%0.441786466911065	2.84413812053099e-08\\
0.458148928648511	7.36863607290185e-08\\
%0.474511390385958	1.73608513796668e-07\\
0.490873852123405	3.76577388237086e-07\\
%0.507236313860852	7.59878644776561e-07\\
0.523598775598299	1.43899577753915e-06\\
%0.539961237335746	2.57663320982109e-06\\
0.556323699073193	4.39044077595417e-06\\
%0.572686160810639	7.15849377787755e-06\\
0.589048622548086	1.12218222914358e-05\\
%0.605411084285533	1.69836102536177e-05\\
0.62177354602298	2.49050343877873e-05\\
%0.638136007760427	3.54980294619076e-05\\
0.654498469497873	4.93155117352412e-05\\
%0.67086093123532	6.69397486640945e-05\\
0.687223392972767	8.89696292687109e-05\\
%0.703585854710214	0.000116007577600974\\
0.719948316447661	0.000148646779938162\\
%0.736310778185108	0.000187459285240943\\
0.752673239922555	0.000232985407373526\\
%0.769035701660001	0.000285724722636347\\
0.785398163397448	0.000346128829233071\\
%0.801760625134895	0.000414595924184146\\
0.818123086872342	0.000491467162088964\\
%0.834485548609789	0.000577024690338954\\
0.850848010347236	0.000671491206138808\\
%0.867210472084682	0.000775030849915501\\
0.883572933822129	0.000887751234576427\\
%0.899935395559576	0.00100970640755115\\
0.916297857297023	0.00114090054961481\\
%0.93266031903447	0.00128129222841462\\
0.949022780771917	0.00143079904305651\\
%0.965385242509364	0.00158930251713509\\
0.98174770424681	0.00175665311970776\\
%0.998110165984257	0.00193267531581423\\
1.0144726277217	0.00211717256946051\\
%1.03083508945915	0.00230993224204979\\
1.0471975511966	0.0025107303478181\\
%1.06356001293404	0.00271933614488761\\
1.07992247467149	0.0029355165562124\\
%1.09628493640894	0.00315904042922462\\
1.11264739814639	0.00338968265678517\\
%1.12900985988383	0.00362722819561703\\
1.14537232162128	0.00387147603241312\\
%1.16173478335873	0.00412224316310957\\
1.17809724509617	0.00437936866849763\\
%1.19445970683362	0.00464271799087648\\
1.21082216857107	0.00491218754382424\\
%1.22718463030851	0.00518770982319498\\
1.24354709204596	0.00546925923618013\\
%1.25990955378341	0.00575685893267201\\
1.27627201552085	0.00605058901823846\\
%1.2926344772583	0.00635059666466348\\
1.30899693899575	0.00665710883419428\\
%1.32535940073319	0.00697044863294085\\
1.34172186247064	0.00729105676649084\\
%1.35808432420809	0.00761952028815756\\
1.37444678594553	0.00795661198698998\\
%1.39080924768298	0.00830334568890937\\
1.40717170942043	0.00866105607340532\\
%1.42353417115788	0.0090315176162025\\
1.43989663289532	0.00941712867920694\\
%1.45625909463277	0.00982120981155163\\
1.47262155637022	0.0102485154831973\\
%1.48898401810766	0.0107061784612321\\
1.50534647984511	0.0112056306784905\\
%1.52170894158256	0.0117670846683106\\
1.53807140332	0.0124324526350411\\
%1.55443386505745	0.0133203317477387\\
1.5707963267949	0.0180414341147544\\
};
\addlegendentry{Number of interferers $(n) = 4$}

\addplot [color=black, draw=none, mark=o, mark size=3pt]
  table[row sep=crcr]{%
0.261799387799149	5.42594176486317e-19\\
%0.278161849536596	4.04145214056654e-17\\
0.294524311274043	1.50889977703472e-15\\
%0.31088677301149	3.25067304315638e-14\\
0.327249234748937	4.49530039352806e-13\\
%0.343611696486384	4.33095730294335e-12\\
0.35997415822383	3.09848077374093e-11\\
%0.376336619961277	1.7309630734791e-10\\
0.392699081698724	7.8596096842886e-10\\
%0.409061543436171	2.99566791458404e-09\\
0.425424005173618	9.83880230583391e-09\\
%0.441786466911065	2.84489033800036e-08\\
0.458148928648511	7.37150953295169e-08\\
%0.474511390385958	1.73704589025881e-07\\
0.490873852123405	3.76863451554735e-07\\
%0.507236313860852	7.6064825501248e-07\\
0.523598775598299	1.44088979497328e-06\\
%0.539961237335746	2.58094199376285e-06\\
0.556323699073193	4.39958361129509e-06\\
%0.572686160810639	7.17672919862964e-06\\
0.589048622548086	1.12562380651398e-05\\
%0.605411084285533	1.70454292360399e-05\\
0.62177354602298	2.50112531963384e-05\\
%0.638136007760427	3.56733831138338e-05\\
0.654498469497873	4.95947319504305e-05\\
%0.67086093123532	6.73700594857772e-05\\
0.687223392972767	8.96134036702095e-05\\
%0.703585854710214	0.000116945064737277\\
0.719948316447661	0.000149978805667287\\
%0.736310778185108	0.000189309828582233\\
0.752673239922555	0.000235503962036752\\
%0.769035701660001	0.000289088364464149\\
0.785398163397448	0.000350543925874361\\
%0.801760625134895	0.000420299441006217\\
0.818123086872342	0.000498727536784687\\
%0.834485548609789	0.000586142266646973\\
0.850848010347236	0.00068279823359791\\
%0.867210472084682	0.000788891070941356\\
0.883572933822129	0.00090455909198066\\
%0.899935395559576	0.00102988591476404\\
0.916297857297023	0.00116490387236538\\
%0.93266031903447	0.00130959803065459\\
0.949022780771917	0.00146391065177555\\
%0.965385242509364	0.00162774596075253\\
0.98174770424681	0.00180097509331082\\
%0.998110165984257	0.0019834411240181\\
1.0144726277217	0.0021749640944435\\
%1.03083508945915	0.00237534598069236\\
1.0471975511966	0.0025843755581385\\
%1.06356001293404	0.00280183313837634\\
1.07992247467149	0.00302749516944942\\
%1.09628493640894	0.00326113870552248\\
1.11264739814639	0.00350254576672861\\
%1.12900985988383	0.00375150762445707\\
1.14537232162128	0.00400782906252933\\
%1.16173478335873	0.00427133268142156\\
1.17809724509617	0.00454186333209677\\
%1.19445970683362	0.00481929278967026\\
1.21082216857107	0.00510352480720475\\
%1.22718463030851	0.00539450072945323\\
1.24354709204596	0.00569220589972903\\
%1.25990955378341	0.00599667716676793\\
1.27627201552085	0.00630801190225065\\
%1.2926344772583	0.00662637908870514\\
1.30899693899575	0.00695203325571574\\
%1.32535940073319	0.00728533236846592\\
1.34172186247064	0.0076267612710764\\
%1.35808432420809	0.00797696306840225\\
1.37444678594553	0.00833678208949232\\
%1.39080924768298	0.00870732417329657\\
1.40717170942043	0.00909004364202938\\
%1.42353417115788	0.00948687286942058\\
1.43989663289532	0.00990042277628961\\
%1.45625909463277	0.0103343076777167\\
1.47262155637022	0.0107937025276209\\
%1.48898401810766	0.0112863712829276\\
1.50534647984511	0.0118247586956112\\
%1.52170894158256	0.0124308709401712\\
1.53807140332	0.0131503474567452\\
%1.55443386505745	0.0141123668661829\\
1.5707963267949	0.0192618379133941\\
};
\addlegendentry{Number of interferers $(n) = 10$}

\addplot [color=black, draw=none, mark=asterisk, mark size=3pt]
  table[row sep=crcr]{%
0.261799387799149	5.42594176486317e-19\\
%0.278161849536596	4.04145214056654e-17\\
0.294524311274043	1.50889977703475e-15\\
%0.31088677301149	3.25067304315718e-14\\
0.327249234748937	4.49530039353951e-13\\
%0.343611696486384	4.33095730302587e-12\\
0.35997415822383	3.09848077407933e-11\\
%0.376336619961277	1.73096307434838e-10\\
0.392699081698724	7.85960969938092e-10\\
%0.409061543436171	2.99566793340904e-09\\
0.425424005173618	9.83880248310225e-09\\
%0.441786466911065	2.84489046924108e-08\\
0.458148928648511	7.37151032288097e-08\\
%0.474511390385958	1.73704628770216e-07\\
0.490873852123405	3.76863622651642e-07\\
%0.507236313860852	7.60648897706194e-07\\
0.523598775598299	1.44089193674461e-06\\
%0.539961237335746	2.58094841607561e-06\\
0.556323699073193	4.39960115077983e-06\\
%0.572686160810639	7.17677328167419e-06\\
0.589048622548086	1.12563409527605e-05\\
%0.605411084285533	1.70456539768678e-05\\
0.62177354602298	2.50117157711388e-05\\
%0.638136007760427	3.56742856119126e-05\\
0.654498469497873	4.95964097578271e-05\\
%0.67086093123532	6.73730453225451e-05\\
0.687223392972767	8.96185108628758e-05\\
%0.703585854710214	0.000116953491370244\\
0.719948316447661	0.000149992260281621\\
%0.736310778185108	0.000189330676970141\\
0.752673239922555	0.000235535393518694\\
%0.769035701660001	0.000289134574936705\\
0.785398163397448	0.000350610313971676\\
%0.801760625134895	0.000420392812995019\\
0.818123086872342	0.000498856315354853\\
%0.834485548609789	0.000586316698581792\\
0.850848010347236	0.000683030591401511\\
%0.867210472084682	0.00078919584383711\\
0.883572933822129	0.000904953162232047\\
%0.899935395559576	0.00103038871595962\\
0.916297857297023	0.00116553752710108\\
%0.93266031903447	0.00131038746589091\\
0.949022780771917	0.00146488369100166\\
%0.965385242509364	0.00162893339291727\\
0.98174770424681	0.00180241071925658\\
%0.998110165984257	0.00198516178185744\\
1.0144726277217	0.00217700966594172\\
%1.03083508945915	0.00237775938125838\\
1.0471975511966	0.00258720271348607\\
%1.06356001293404	0.00280512295130724\\
1.07992247467149	0.00303129948054459\\
%1.09628493640894	0.00326551225182488\\
1.11264739814639	0.00350754614278577\\
%1.12900985988383	0.00375719525038911\\
1.14537232162128	0.00401426716413144\\
%1.16173478335873	0.00427858728773791\\
1.17809724509617	0.00455000329646212\\
%1.19445970683362	0.00482838984096751\\
1.21082216857107	0.00511365363910946\\
%1.22718463030851	0.00540573913683726\\
1.24354709204596	0.00570463497332142\\
%1.25990955378341	0.00601038155983267\\
1.27627201552085	0.00632308018674242\\
%1.2926344772583	0.00664290422355906\\
1.30899693899575	0.00697011319730441\\
%1.32535940073319	0.00730507086388951\\
1.34172186247064	0.00764826889055077\\
%1.35808432420809	0.00800035855638564\\
1.37444678594553	0.00836219415009751\\
%1.39080924768298	0.00873489386230821\\
1.40717170942043	0.00911992763069733\\
%1.42353417115788	0.00951924800321256\\
1.43989663289532	0.00993549263369783\\
%1.45625909463277	0.0103723123666693\\
1.47262155637022	0.0108349340358125\\
%1.48898401810766	0.0113311990867409\\
1.50534647984511	0.0118736762727974\\
%1.52170894158256	0.0124845911897775\\
1.53807140332	0.0132100397061788\\
%1.55443386505745	0.0141805006403709\\
1.5707963267949	0.0193838622496067\\
};
\addlegendentry{Number of interferers $(n) = 20$}

\addplot [color=black, draw=none, mark=+, mark size=3pt]
  table[row sep=crcr]{%
0.261799387799149	5.42594176486317e-19\\
0.278161849536596	4.04145214056654e-17\\
0.294524311274043	1.50889977703475e-15\\
0.31088677301149	3.25067304315718e-14\\
0.327249234748937	4.49530039353951e-13\\
0.343611696486384	4.33095730302587e-12\\
0.35997415822383	3.09848077407933e-11\\
0.376336619961277	1.73096307434838e-10\\
0.392699081698724	7.85960969938092e-10\\
0.409061543436171	2.99566793340904e-09\\
0.425424005173618	9.83880248310225e-09\\
0.441786466911065	2.84489046924108e-08\\
0.458148928648511	7.37151032288097e-08\\
0.474511390385958	1.73704628770216e-07\\
0.490873852123405	3.76863622651642e-07\\
0.507236313860852	7.60648897706194e-07\\
0.523598775598299	1.44089193674461e-06\\
0.539961237335746	2.58094841607562e-06\\
0.556323699073193	4.39960115077986e-06\\
0.572686160810639	7.17677328167437e-06\\
0.589048622548086	1.12563409527614e-05\\
0.605411084285533	1.70456539768714e-05\\
0.62177354602298	2.5011715771152e-05\\
0.638136007760427	3.56742856119568e-05\\
0.654498469497873	4.95964097579627e-05\\
0.67086093123532	6.73730453229291e-05\\
0.687223392972767	8.96185108638875e-05\\
0.703585854710214	0.00011695349137274\\
0.719948316447661	0.000149992260287421\\
0.736310778185108	0.000189330676982902\\
0.752673239922555	0.000235535393545407\\
0.769035701660001	0.000289134574990126\\
0.785398163397448	0.000350610314074116\\
0.801760625134895	0.000420392813184027\\
0.818123086872342	0.000498856315691414\\
0.834485548609789	0.000586316699161781\\
0.850848010347236	0.000683030592371196\\
0.867210472084682	0.000789195845413562\\
0.883572933822129	0.000904953164729319\\
0.899935395559576	0.00103038871982152\\
0.916297857297023	0.00116553753294132\\
0.93266031903447	0.00131038747454132\\
0.949022780771917	0.00146488370356894\\
0.965385242509364	0.00162893341084882\\
0.98174770424681	0.00180241074441553\\
0.998110165984257	0.00198516181660705\\
1.0144726277217	0.00217700971323915\\
1.03083508945915	0.00237775944475764\\
1.0471975511966	0.00258720279764998\\
1.06356001293404	0.0028051230615282\\
1.07992247467149	0.00303129962327399\\
1.09628493640894	0.00326551243471098\\
1.11264739814639	0.0035075463748203\\
1.12900985988383	0.00375719554206272\\
1.14537232162128	0.00401426752759885\\
1.16173478335873	0.00427858773699383\\
1.17809724509617	0.0045500038475296\\
1.19445970683362	0.00482839051210121\\
1.21082216857107	0.00511365445101724\\
1.22718463030851	0.00540574011292453\\
1.24354709204596	0.00570463613996458\\
1.25990955378341	0.00601038294668957\\
1.27627201552085	0.006323081827111\\
1.2926344772583	0.00664290615479862\\
1.30899693899575	0.00697011546133947\\
1.32535940073319	0.00730507350782342\\
1.34172186247064	0.00764827196742708\\
1.35808432420809	0.0080003621261512\\
1.37444678594553	0.00836219828084584\\
1.39080924768298	0.00873489863192062\\
1.40717170942043	0.00911993312906264\\
1.42353417115788	0.00951925433531499\\
1.43989663289532	0.00993549992405665\\
1.45625909463277	0.0103723207659818\\
1.47262155637022	0.0108349437314009\\
1.48898401810766	0.0113312103200686\\
1.50534647984511	0.0118736893714893\\
1.52170894158256	0.0124846066338377\\
1.53807140332	0.0132100582941887\\
1.55443386505745	0.0141805240902788\\
1.5707963267949	0.019383927378817\\
};
\addlegendentry{Number of interferers $(n) = 40$}

\end{axis}
\end{tikzpicture}%
\caption{\textit{(One Dimension Model)} Here the variation of interference $\mathbb{I}_{n}(z)$, is drawn with respect to a linear variation of the half-power-semi-angle (HPSA) $\theta_{h}$ of the LED for different number interferers $(n)$ in the network. The graph for the proposed interference expression $\hat{\mathtt{I}}_{1}(z)$ is also drawn. We consider the attocell length $a=0.5$m, the height $h$ of the LED as $2.5$m and the position $z$ of the receiver photodiode (PD) at half the attocell length $\frac{a}{2}$.}
\label{one5t1}
\end{figure} 

\begin{figure}[ht]
\centering
\begin{tikzpicture}
\begin{axis}[%
width=7cm,
height=5cm,
scale only axis,
xmin=0,
xmax=0.25,
xlabel style={font=\color{white!15!black}},
xlabel={Distance ($z$) of the PD from origin in metres},
ymin=0.002555,
ymax=0.00259,
ylabel style={font=\color{white!15!black}},
ylabel={Interference},
axis background/.style={fill=white},
title style={font=\bfseries},
title={},
xmajorgrids,
ymajorgrids,
legend style={font=\fontsize{7}{5}\selectfont,at={(0.056,0.581)}, anchor=south west, legend cell align=left, align=left, draw=white!15!black}
]
\addplot [color=black, dashed, line width=1.0pt]
  table[row sep=crcr]{%
0	0.00256163087764042\\
0.011	0.00256168162625908\\
0.022	0.00256183384264463\\
0.033	0.00256208743841991\\
0.044	0.00256244226640371\\
0.055	0.00256289812078149\\
0.066	0.00256345473734409\\
0.077	0.00256411179379372\\
0.088	0.00256486891011693\\
0.099	0.00256572564902354\\
0.11	0.00256668151645102\\
0.122	0.00256783668850457\\
0.133	0.00256899798124273\\
0.144	0.00257025652292118\\
0.155	0.00257161159382064\\
0.166	0.00257306242099021\\
0.177	0.00257460817912469\\
0.188	0.00257624799149811\\
0.199	0.00257798093095161\\
0.211	0.00257997647260739\\
0.222	0.00258190091996064\\
0.233	0.00258391532122353\\
0.244	0.00258601855428932\\
};
\addlegendentry{$\hat{\mathtt{I}}_{1}(z)$}

\addplot [color=black, dashdotted, line width=1.0pt]
  table[row sep=crcr]{%
0	0.00255892120439153\\
0.011	0.00255897172774742\\
0.022	0.00255912326829029\\
0.033	0.00255937573747998\\
0.044	0.0025597289878635\\
0.055	0.00256018281324573\\
0.066	0.00256073694892795\\
0.077	0.00256139107201377\\
0.088	0.00256214480178179\\
0.099	0.00256299770012441\\
0.11	0.0025639492720519\\
0.122	0.00256509923803109\\
0.133	0.00256625528055797\\
0.144	0.00256750811352601\\
0.155	0.00256885701583629\\
0.166	0.00257030121304667\\
0.177	0.00257183987824866\\
0.188	0.00257347213300037\\
0.199	0.00257519704831397\\
0.211	0.00257718330201467\\
0.222	0.00257909874336891\\
0.233	0.00258110366573978\\
0.244	0.00258319694472242\\
};
\addlegendentry{Number of interferers $(n) = 4$}

\addplot [color=black, line width=1.0pt, draw=none, mark=o, mark size=3pt]
  table[row sep=crcr]{%
0	0.00256163079300338\\
0.011	0.00256168154162161\\
0.022	0.00256183375800583\\
0.033	0.00256208735377891\\
0.044	0.00256244218175961\\
0.055	0.00256289803613342\\
0.066	0.00256345465269116\\
0.077	0.00256411170913506\\
0.088	0.00256486882545165\\
0.099	0.00256572556435076\\
0.11	0.00256668143176985\\
0.122	0.00256783660381325\\
0.133	0.00256899789654117\\
0.144	0.0025702564382085\\
0.155	0.00257161150909596\\
0.166	0.00257306233625264\\
0.177	0.00257460809437336\\
0.188	0.00257624790673212\\
0.199	0.00257798084617008\\
0.211	0.0025799763878079\\
0.222	0.00258190083514376\\
0.233	0.00258391523638837\\
0.244	0.002586018469435\\
};
\addlegendentry{Number of interferers $(n) = 10$}

\addplot [color=black, draw=none, mark=asterisk, mark size=3pt]
  table[row sep=crcr]{%
0	0.00256163087693966\\
0.011	0.00256168162555832\\
0.022	0.00256183384194386\\
0.033	0.00256208743771914\\
0.044	0.00256244226570293\\
0.055	0.00256289812008071\\
0.066	0.00256345473664329\\
0.077	0.00256411179309292\\
0.088	0.00256486890941611\\
0.099	0.0025657256483227\\
0.11	0.00256668151575016\\
0.122	0.0025678366878037\\
0.133	0.00256899798054183\\
0.144	0.00257025652222026\\
0.155	0.0025716115931197\\
0.166	0.00257306242028924\\
0.177	0.00257460817842369\\
0.188	0.00257624799079708\\
0.199	0.00257798093025055\\
0.211	0.00257997647190629\\
0.222	0.0025819009192595\\
0.233	0.00258391532052235\\
0.244	0.0025860185535881\\
};
\addlegendentry{Number of interferers $(n) = 20$}

\addplot [color=black, draw=none, mark=+, mark size=3pt]
  table[row sep=crcr]{%
0	0.00256163087763481\\
0.01	0.00256167281899427\\
0.02	0.00256179862294325\\
0.03	0.0025620082291129\\
0.04	0.0025623015369528\\
0.05	0.00256267840582748\\
0.06	0.00256313865515114\\
0.07	0.00256368206456068\\
0.08	0.00256430837412653\\
0.09	0.00256501728460106\\
0.11	0.00256668151644541\\
0.12	0.00256763604548294\\
0.13	0.00256867159151722\\
0.14	0.00256978766372057\\
0.15	0.00257098373420068\\
0.16	0.002572259238498\\
0.17	0.00257361357611598\\
0.18	0.00257504611108343\\
0.19	0.00257655617254809\\
0.2	0.00257814305540073\\
0.21	0.00257980602092841\\
0.22	0.00258154429749642\\
0.23	0.00258335708125746\\
0.24	0.00258524353688729\\
0.25	0.00258720279834561\\
};
\addlegendentry{Number of interferers $(n) = 40$}
\end{axis}
\end{tikzpicture}%
\caption{\textit{(One Dimension Model)} Here the variation of interference $\mathbb{I}_{n}(z)$, is drawn with respect to a linear variation of the position $z$ of the receiver photodiode (PD) inside the attocell for different number interferers ($n$) in the network. The graph for the proposed interference expression $\hat{\mathtt{I}}_{1}(z)$ is also drawn. We consider $a=0.5$m, the half-power-semi-angle (HPSA) $\theta_{h}$ of the LED as $\frac{\pi}{3}$ radians and the height $h$ of LED as $2.5$m.}
\label{one5z1}
\end{figure}
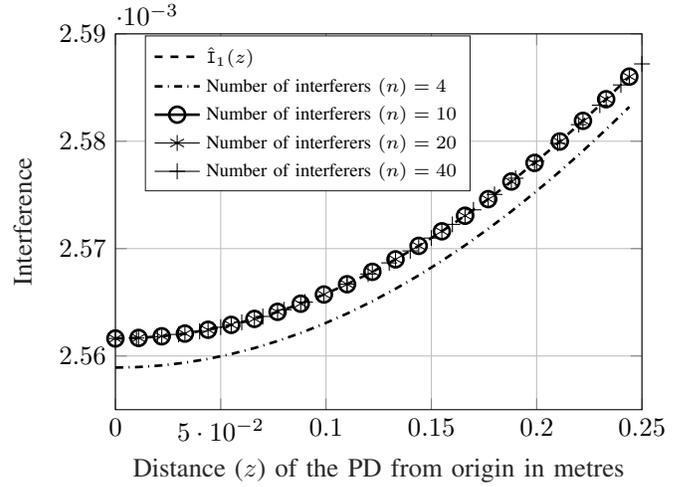 

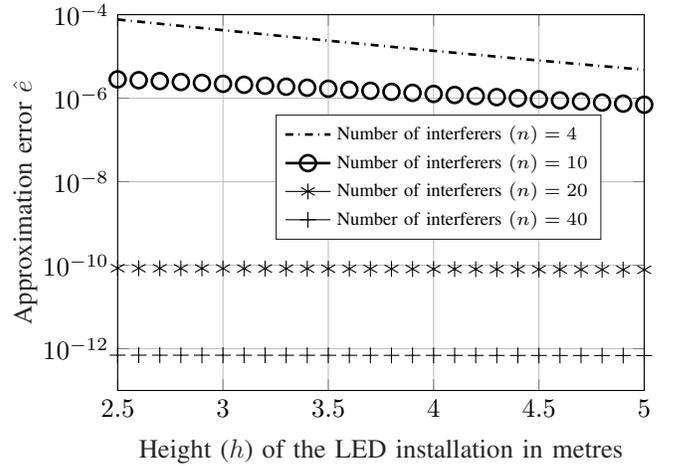
\begin{figure}[ht]
\centering
\begin{tikzpicture}
\begin{axis}[%
width=7cm,
height=5cm,
scale only axis,
xmin=2.5,
xmax=5,
xlabel style={font=\color{white!15!black}},
xlabel={Height ($h$) of the LED installation in metres},
ymode=log,
ymin=1e-13,
ymax=0.0001,
yminorticks=true,
ylabel style={font=\color{white!15!black}},
ylabel={Approximation error $\hat{e}$},
axis background/.style={fill=white},
title style={font=\bfseries},
title={},
xmajorgrids,
ymajorgrids,
yminorgrids,
legend style={font=\fontsize{7}{5}\selectfont,at={(0.30,0.401)}, anchor=south west, legend cell align=left, align=left, draw=white!15!black}
]  
\addplot [color=black, dashdotted, line width=1.0pt]
  table[row sep=crcr]{%
2.5	7.64724505331244e-05\\
2.6	6.8005175687664e-05\\
2.7	6.04654177020899e-05\\
2.8	5.37614269560023e-05\\
2.9	4.78075028865774e-05\\
3	4.25244577719415e-05\\
3.1	3.78397683382366e-05\\
3.2	3.36874998452714e-05\\
3.3	3.00080709865415e-05\\
3.4	2.67479134551687e-05\\
3.5	2.38590676308697e-05\\
3.6	2.12987455366524e-05\\
3.7	1.90288838517467e-05\\
3.8	1.70157031272512e-05\\
3.9	1.52292841829903e-05\\
4	1.36431687197854e-05\\
4.1	1.22339882313424e-05\\
4.2	1.09811231421085e-05\\
4.3	9.86639255967634e-06\\
4.4	8.87377396986015e-06\\
4.5	7.98915150358861e-06\\
4.6	7.20009097482683e-06\\
4.7	6.49563965535892e-06\\
4.8	5.86614865931805e-06\\
4.9	5.30311581506257e-06\\
5	4.79904697216989e-06\\
};
\addlegendentry{Number of interferers $(n) = 4$}

\addplot [color=black, line width=1.0pt, draw=none, mark=o, mark size=3pt]
  table[row sep=crcr]{%
2.5	2.82724021271729e-06\\
2.6	2.6959729062569e-06\\
2.7	2.56793298034508e-06\\
2.8	2.4433957183218e-06\\
2.9	2.3225872147533e-06\\
3	2.20568736408169e-06\\
3.1	2.09283311639941e-06\\
3.2	1.98412190582627e-06\\
3.3	1.87961516762158e-06\\
3.4	1.77934187124278e-06\\
3.5	1.68330200770367e-06\\
3.6	1.59146998038979e-06\\
3.7	1.50379785868909e-06\\
3.8	1.42021846321212e-06\\
3.9	1.34064825986594e-06\\
4	1.26499004755401e-06\\
4.1	1.1931354307654e-06\\
4.2	1.12496707382395e-06\\
4.3	1.06036073812572e-06\\
4.4	9.99187107372641e-07\\
4.5	9.41313408694738e-07\\
4.6	8.86604839721437e-07\\
4.7	8.34925813213581e-07\\
4.8	7.86141031882633e-07\\
4.9	7.40116406592322e-07\\
5	6.96719831334413e-07\\
};
\addlegendentry{Number of interferers $(n) = 10$}

\addplot [color=black, draw=none, mark=asterisk, mark size=3pt]
  table[row sep=crcr]{%
2.5	8.48651478951856e-11\\
2.6	8.46560699145626e-11\\
2.7	8.44395096379069e-11\\
2.8	8.42155479467005e-11\\
2.9	8.39842644213828e-11\\
3	8.3745744930766e-11\\
3.1	8.35000763194441e-11\\
3.2	8.3247346949894e-11\\
3.3	8.29876474072073e-11\\
3.4	8.27210712580316e-11\\
3.5	8.24477135326872e-11\\
3.6	8.21676712672685e-11\\
3.7	8.18810432054885e-11\\
3.8	8.1587930123939e-11\\
3.9	8.12884345881455e-11\\
4	8.09826607357268e-11\\
4.1	8.06707143306048e-11\\
4.2	8.03527026003743e-11\\
4.3	8.00287343989334e-11\\
4.4	7.96989198405653e-11\\
4.5	7.93633703541482e-11\\
4.6	7.90221985815117e-11\\
4.7	7.86755184451989e-11\\
4.8	7.83234448232064e-11\\
4.9	7.79660936506277e-11\\
5	7.760358178074e-11\\
};
\addlegendentry{Number of interferers $(n) = 20$}

\addplot [color=black, draw=none, mark=+, mark size=3pt]
  table[row sep=crcr]{%
2.5	7.01244617928864e-13\\
2.6	7.00804431846835e-13\\
2.7	7.00347982732219e-13\\
2.8	6.99875270585015e-13\\
2.9	6.99384560681748e-13\\
3	6.98877696166111e-13\\
3.1	6.98354568617887e-13\\
3.2	6.97814852776424e-13\\
3.3	6.97257898120418e-13\\
3.4	6.96684788852042e-13\\
3.5	6.96095524971296e-13\\
3.6	6.95489618587203e-13\\
3.7	6.94867476275576e-13\\
3.8	6.94228935406091e-13\\
3.9	6.93574375449507e-13\\
4	6.92903444040119e-13\\
4.1	6.92216466438578e-13\\
4.2	6.91513429092357e-13\\
4.3	6.90794399764091e-13\\
4.4	6.90059581741689e-13\\
4.5	6.89308792066034e-13\\
4.6	6.88542308563932e-13\\
4.7	6.87760151564173e-13\\
4.8	6.86962348171814e-13\\
4.9	6.86149047464651e-13\\
5	6.85320246054553e-13\\
};
\addlegendentry{Number of interferers $(n) = 40$}

\end{axis}
\end{tikzpicture}%
\caption{\textit{(One Dimension Model)} Here the variation of interference approximation error $\hat{e}=|\mathbb{I}_{n}(z)-\hat{\mathtt{I}}_{1}(z)|$ is drawn for a linear variation of the height $h$ of installation of the LED for different number interferers ($n$) in the network. We consider $a=0.5$m, the half-power-semi-angle (HPSA) $\theta_{h}$ of the LED as $\frac{\pi}{3}$ radians and the position of the receiver photodiode (PD) at half the attocell length $\frac{a}{2}$.}
\label{one5h3}
\end{figure} 

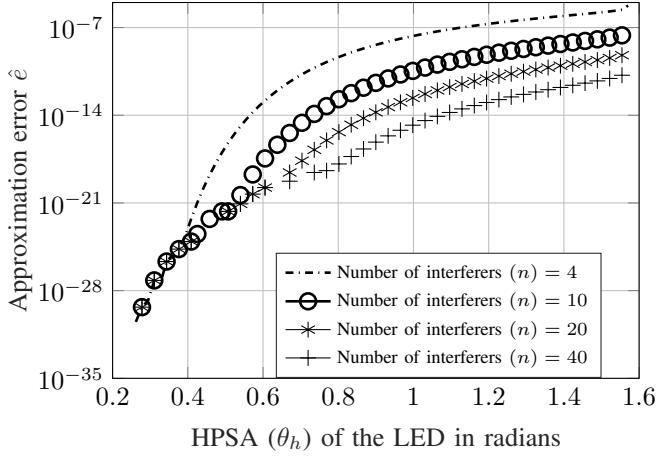
\begin{figure}[ht]
\centering
\begin{tikzpicture}
\begin{axis}[%
width=7cm,
height=5cm,
scale only axis,
xmin=0.2,
xmax=1.6,
xlabel style={font=\color{white!15!black}},
xlabel={HPSA ($\theta_{h}$) of the LED in radians},
ymode=log,
ymin=1e-35,
ymax=1e-05,
yminorticks=true,
ylabel style={font=\color{white!15!black}},
ylabel={Approximation error $\hat{e}$},
axis background/.style={fill=white},
title style={font=\bfseries},
title={},
xmajorgrids,
ymajorgrids,
yminorgrids,
legend style={font=\fontsize{7}{5}\selectfont,at={(0.31,0.001)}, anchor=south west, legend cell align=left, align=left, draw=white!15!black}
]
\addplot [color=black, dashdotted, line width=1.0pt]
  table[row sep=crcr]{%
0.261799387799149	3.20474742746036e-31\\
0.278161849536596	4.80712114119054e-30\\
0.294524311274043	1.65660790096412e-29\\
0.31088677301149	6.56332273143882e-28\\
0.327249234748937	4.03896783473158e-28\\
0.343611696486384	2.10026327406042e-26\\
0.35997415822383	0\\
0.376336619961277	2.58493941422821e-25\\
0.392699081698724	2.06795153138257e-24\\
0.409061543436171	5.45939204284998e-23\\
0.425424005173618	1.08195224121936e-21\\
0.441786466911065	1.57759886426113e-20\\
0.458148928648511	1.73443230840119e-19\\
0.474511390385958	1.50110120121223e-18\\
0.490873852123405	1.05339134902913e-17\\
0.507236313860852	6.15655309799948e-17\\
0.523598775598299	3.06456520316606e-16\\
0.539961237335746	1.3241352662236e-15\\
0.556323699073193	5.04785896784411e-15\\
0.572686160810639	1.7217609919718e-14\\
0.589048622548086	5.31841857066926e-14\\
0.605411084285533	1.50346149240107e-13\\
0.62177354602298	3.92527529306363e-13\\
0.638136007760427	9.54080008898871e-13\\
0.654498469497873	2.17410044502965e-12\\
0.67086093123532	4.67340764340445e-12\\
0.687223392972767	9.52824971700572e-12\\
0.703585854710214	1.85145962688753e-11\\
0.719948316447661	3.44349561343625e-11\\
0.736310778185108	6.15358988808115e-11\\
0.752673239922555	1.06019058929065e-10\\
0.769035701660001	1.76641905638188e-10\\
0.785398163397448	2.85397642638911e-10\\
0.801760625134895	4.48257023622119e-10\\
0.818123086872342	6.85949356994328e-10\\
0.834485548609789	1.02475604933129e-09\\
0.850848010347236	1.49728809017112e-09\\
0.867210472084682	2.14321899485483e-09\\
0.883572933822129	3.00994681891494e-09\\
0.899935395559576	4.1531626737961e-09\\
0.916297857297023	5.63730830245579e-09\\
0.93266031903447	7.53591129667823e-09\\
0.949022780771917	9.93179299110629e-09\\
0.965385242509364	1.2917150523685e-08\\
0.98174770424681	1.65935207670741e-08\\
0.998110165984257	2.10716393674037e-08\\
1.0144726277217	2.64712130804666e-08\\
1.03083508945915	3.29206276353938e-08\\
1.0471975511966	4.0556616767197e-08\\
1.06356001293404	4.95239207047929e-08\\
1.07992247467149	5.99749646699316e-08\\
1.09628493640894	7.20695898662541e-08\\
1.11264739814639	8.59748714122574e-08\\
1.12900985988383	1.01865060081474e-07\\
1.14537232162128	1.19921687765853e-07\\
1.16173478335873	1.40333880974119e-07\\
1.17809724509617	1.63298933097966e-07\\
1.19445970683362	1.89023195095415e-07\\
1.21082216857107	2.17723357353455e-07\\
1.22718463030851	2.49628213898345e-07\\
1.24354709204596	2.84981026520137e-07\\
1.25990955378341	3.24042643857526e-07\\
1.27627201552085	3.67095584744362e-07\\
1.2926344772583	4.14449373539044e-07\\
1.30899693899575	4.66447531370651e-07\\
1.32535940073319	5.23476801114753e-07\\
1.34172186247064	5.85979451003804e-07\\
1.35808432420809	6.54469920546158e-07\\
1.37444678594553	7.29557749686374e-07\\
1.39080924768298	8.1197986051551e-07\\
1.40717170942043	9.02647216055597e-07\\
1.42353417115788	1.00271441478422e-06\\
1.43989663289532	1.11368751001643e-06\\
1.45625909463277	1.23759897011078e-06\\
1.47262155637022	1.3773084556086e-06\\
1.48898401810766	1.53705955332692e-06\\
1.50534647984511	1.72361684232583e-06\\
1.52170894158256	1.9489335542737e-06\\
1.53807140332	2.23790216650685e-06\\
1.55443386505745	2.6617465198886e-06\\
1.5707963267949	5.71205126436575e-06\\
};
\addlegendentry{Number of interferers $(n) = 4$}

\addplot [color=black, line width=1.0pt, draw=none, mark=o, mark size=3pt]
  table[row sep=crcr]{%
%0.261799387799149	3.20474742746036e-31\\
0.278161849536596	4.80712114119054e-30\\
%0.294524311274043	1.65660790096412e-29\\
0.31088677301149	6.56332273143882e-28\\
%0.327249234748937	4.03896783473158e-28\\
0.343611696486384	2.10026327406042e-26\\
%0.35997415822383	0\\
0.376336619961277	2.06795153138257e-25\\
%0.392699081698724	2.06795153138257e-25\\
0.409061543436171	8.27180612553028e-25\\
0.425424005173618	3.30872245021211e-24\\
0.441786466911065	0\\
0.458148928648511	5.29395592033938e-23\\
0.474511390385958	0\\
0.490873852123405	2.11758236813575e-22\\
0.507236313860852	2.11758236813575e-22\\
%0.523598775598299	1.6940658945086e-21\\
0.539961237335746	4.2351647362715e-21\\
%0.556323699073193	3.81164826264435e-20\\
0.572686160810639	1.88041314290455e-19\\
%0.589048622548086	8.65667672093895e-19\\
0.605411084285533	3.57786716920216e-18\\
%0.62177354602298	1.3200161450011e-17\\
0.638136007760427	4.41947910559404e-17\\
%0.654498469497873	1.35660796832249e-16\\
0.67086093123532	3.84281907510331e-16\\
%0.687223392972767	1.01242798866696e-15\\
0.703585854710214	2.49816443573248e-15\\
%0.719948316447661	5.80636341958318e-15\\
0.736310778185108	1.27788946958918e-14\\
%0.752673239922555	2.67562122631404e-14\\
0.769035701660001	5.35190923696416e-14\\
%0.785398163397448	1.0265573113788e-13\\
0.801760625134895	1.89458908630979e-13\\
%0.818123086872342	3.37465732441755e-13\\
0.834485548609789	5.81730518790791e-13\\
%0.850848010347236	9.72920312022896e-13\\
0.867210472084682	1.5822642675245e-12\\
%0.883572933822129	2.50739077937989e-12\\
0.899935395559576	3.87901256254652e-12\\
%0.916297857297023	5.86841317727627e-12\\
0.93266031903447	8.69562473242158e-12\\
%0.949022780771917	1.26381988641705e-11\\
0.965385242509364	1.80403838569154e-11\\
%0.98174770424681	2.53226194353584e-11\\
0.998110165984257	3.4991141061036e-11\\
%1.0144726277217	4.76476124181002e-11\\
1.03083508945915	6.39986440578721e-11\\
%1.0471975511966	8.48651470278239e-11\\
1.06356001293404	1.11191498283064e-10\\
%1.07992247467149	1.44054556044243e-10\\
1.09628493640894	1.84672621812709e-10\\
%1.11264739814639	2.34414497361779e-10\\
1.12900985988383	2.94808945484892e-10\\
%1.14537232162128	3.67554840730033e-10\\
1.16173478335873	4.54532499312432e-10\\
%1.17809724509617	5.57816797960853e-10\\
1.19445970683362	6.79692782959429e-10\\
%1.21082216857107	8.2267476164688e-10\\
1.22718463030851	9.89530119302473e-10\\
%1.24354709204596	1.18330946083378e-09\\
1.25990955378341	1.40738512163785e-09\\
%1.27627201552085	1.66550101314555e-09\\
1.2926344772583	1.96183750494905e-09\\
%1.30899693899575	2.30109682523849e-09\\
1.32535940073319	2.6886165360035e-09\\
%1.34172186247064	3.13052224421018e-09\\
1.35808432420809	3.63393603576567e-09\\
%1.37444678594553	4.20726609107192e-09\\
1.39080924768298	4.86061754040312e-09\\
%1.40717170942043	5.60639039359767e-09\\
1.42353417115788	6.46017651872999e-09\\
%1.43989663289532	7.44215653339375e-09\\
1.45625909463277	8.57937691722821e-09\\
%1.47262155637022	9.90968147124294e-09\\
1.48898401810766	1.14890195412887e-08\\
%1.50534647984511	1.34064405028778e-08\\
1.52170894158256	1.58194554712721e-08\\
%1.53807140332	1.90574414651867e-08\\
1.55443386505745	2.4071242256396e-08\\
%1.5707963267949	6.72613424575197e-08\\
};
\addlegendentry{Number of interferers $(n) = 10$}

\addplot [color=black, draw=none, mark=asterisk, mark size=3pt]
  table[row sep=crcr]{%
%0.261799387799149	3.20474742746036e-31\\
0.278161849536596	4.80712114119054e-30\\
%0.294524311274043	1.65660790096412e-29\\
0.31088677301149	6.56332273143882e-28\\
%0.327249234748937	4.03896783473158e-28\\
0.343611696486384	2.10026327406042e-26\\
%0.35997415822383	0\\
0.376336619961277	2.06795153138257e-25\\
%0.392699081698724	2.06795153138257e-25\\
0.409061543436171	8.27180612553028e-25\\
%0.425424005173618	3.30872245021211e-24\\
0.441786466911065	0\\
%0.458148928648511	5.29395592033938e-23\\
0.474511390385958	0\\
%0.490873852123405	2.11758236813575e-22\\
0.507236313860852	2.11758236813575e-22\\
%0.523598775598299	1.27054942088145e-21\\
0.539961237335746	8.470329472543e-22\\
%0.556323699073193	4.2351647362715e-21\\
0.572686160810639	5.0821976835258e-21\\
%0.589048622548086	0\\
0.605411084285533	1.6940658945086e-20\\
%0.62177354602298	2.71050543121376e-20\\
0.638136007760427	0\\
%0.654498469497873	4.06575814682064e-20\\
0.67086093123532	2.71050543121376e-19\\
%0.687223392972767	7.58941520739853e-19\\
0.703585854710214	2.35813972515597e-18\\
%0.719948316447661	6.7762635780344e-18\\
0.736310778185108	1.75098650856409e-17\\
%0.752673239922555	4.27717757045531e-17\\
0.769035701660001	9.8282926935811e-17\\
%0.785398163397448	2.15647812107367e-16\\
0.801760625134895	4.50702843102224e-16\\
%0.818123086872342	9.04224611852911e-16\\
0.834485548609789	1.74209605074971e-15\\
%0.850848010347236	3.23580138378299e-15\\
0.867210472084682	5.81240784669479e-15\\
%0.883572933822129	1.01186420353727e-14\\
0.899935395559576	1.71106618457317e-14\\
%0.916297857297023	2.81660545581319e-14\\
0.93266031903447	4.521578424177e-14\\
%0.949022780771917	7.09154956979319e-14\\
0.965385242509364	1.08832214074095e-13\\
%0.98174770424681	1.63670726277543e-13\\
0.998110165984257	2.41528151645465e-13\\
%1.0144726277217	3.50177352392844e-13\\
1.03083508945915	4.99384388008561e-13\\
%1.0471975511966	7.01244184247996e-13\\
1.06356001293404	9.70542222977766e-13\\
%1.07992247467149	1.32516437059693e-12\\
1.09628493640894	1.78651711479905e-12\\
%1.11264739814639	2.37996861801593e-12\\
1.12900985988383	3.13533660839527e-12\\
%1.14537232162128	4.08742657775907e-12\\
1.16173478335873	5.27657952054117e-12\\
%1.17809724509617	6.74932332245248e-12\\
1.19445970683362	8.55907792557398e-12\\
%1.21082216857107	1.07669801893695e-11\\
1.22718463030851	1.34428440601297e-11\\
%1.24354709204596	1.66663020190194e-11\\
1.25990955378341	2.05282249532424e-11\\
%1.27627201552085	2.51324377975592e-11\\
1.2926344772583	3.05979477865925e-11\\
%1.30899693899575	3.70617633804748e-11\\
1.32535940073319	4.46826204508755e-11\\
%1.34172186247064	5.36459331818007e-11\\
1.35808432420809	6.41704762244233e-11\\
%1.37444678594553	7.65177574080234e-11\\
1.39080924768298	9.10051270452961e-11\\
%1.40717170942043	1.08025081935192e-10\\
1.42353417115788	1.28074089528196e-10\\
%1.43989663289532	1.51797713288104e-10\\
1.45625909463277	1.80064387012013e-10\\
%1.47262155637022	2.1409310956011e-10\\
1.48898401810766	2.55691825351567e-10\\
%1.50534647984511	3.07748583561196e-10\\
1.52170894158256	3.75395303384152e-10\\
%1.53807140332	4.69431597760472e-10\\
1.55443386505745	6.21334355493941e-10\\
%1.5707963267949	2.13213218261399e-09\\
};
\addlegendentry{Number of interferers $(n) = 20$}

\addplot [color=black, draw=none, mark=+, mark size=3pt]
  table[row sep=crcr]{%
%0.261799387799149	3.20474742746036e-31\\
0.278161849536596	4.80712114119054e-30\\
%0.294524311274043	1.65660790096412e-29\\
0.31088677301149	6.56332273143882e-28\\
%0.327249234748937	4.03896783473158e-28\\
0.343611696486384	2.10026327406042e-26\\
%0.35997415822383	0\\
0.376336619961277	2.06795153138257e-25\\
%0.392699081698724	2.06795153138257e-25\\
0.409061543436171	8.27180612553028e-25\\
%0.425424005173618	3.30872245021211e-24\\
0.441786466911065	0\\
%0.458148928648511	5.29395592033938e-23\\
0.474511390385958	0\\
%0.490873852123405	2.11758236813575e-22\\
0.507236313860852	2.11758236813575e-22\\
%0.523598775598299	1.27054942088145e-21\\
0.539961237335746	8.470329472543e-22\\
%0.556323699073193	4.2351647362715e-21\\
0.572686160810639	5.0821976835258e-21\\
%0.589048622548086	0\\
0.605411084285533	1.6940658945086e-20\\
%0.62177354602298	2.71050543121376e-20\\
0.638136007760427	0\\
%0.654498469497873	0\\
0.67086093123532	5.42101086242752e-20\\
%0.687223392972767	2.71050543121376e-20\\
0.703585854710214	0\\
%0.719948316447661	1.35525271560688e-19\\
0.736310778185108	2.71050543121376e-19\\
%0.752673239922555	5.42101086242752e-19\\
0.769035701660001	3.79470760369927e-19\\
%0.785398163397448	7.04731412115578e-19\\
0.801760625134895	1.30104260698261e-18\\
%0.818123086872342	2.60208521396521e-18\\
0.834485548609789	5.09575021068187e-18\\
%0.850848010347236	9.54097911787244e-18\\
0.867210472084682	2.03830008427275e-17\\
%0.883572933822129	3.94649590784724e-17\\
0.899935395559576	7.17741838185404e-17\\
%0.916297857297023	1.30971622436249e-16\\
0.93266031903447	2.27031934918465e-16\\
%0.949022780771917	3.8445809036336e-16\\
0.965385242509364	6.34475111338517e-16\\
%0.98174770424681	1.02131844648135e-15\\
0.998110165984257	1.61199179005145e-15\\
%1.0144726277217	2.48715978368175e-15\\
1.03083508945915	3.76781938982163e-15\\
%1.0471975511966	5.60402418914308e-15\\
1.06356001293404	8.19656842399041e-15\\
%1.07992247467149	1.17974206792493e-14\\
1.09628493640894	1.67322752875343e-14\\
%1.11264739814639	2.3409225946569e-14\\
1.12900985988383	3.2309224740068e-14\\
%1.14537232162128	4.40602415663349e-14\\
1.16173478335873	5.93977991791839e-14\\
%1.17809724509617	7.92126780835289e-14\\
1.19445970683362	1.04576070025786e-13\\
%1.21082216857107	1.3675345578168e-13\\
1.22718463030851	1.77279198265712e-13\\
%1.24354709204596	2.27929654317283e-13\\
1.25990955378341	2.90812512959704e-13\\
%1.27627201552085	3.68429245445334e-13\\
1.2926344772583	4.63715871257264e-13\\
%1.30899693899575	5.80164388752635e-13\\
1.32535940073319	7.21894766186892e-13\\
%1.34172186247064	8.93885659936089e-13\\
1.35808432420809	1.10206635484111e-12\\
%1.37444678594553	1.35371575060717e-12\\
1.39080924768298	1.6578197148398e-12\\
%1.40717170942043	2.02558803064079e-12\\
1.42353417115788	2.47144492371287e-12\\
%1.43989663289532	3.01434051330762e-12\\
1.45625909463277	3.68002503470244e-12\\
%1.47262155637022	4.50491033365807e-12\\
1.48898401810766	5.54347540093758e-12\\
%1.50534647984511	6.88345908050891e-12\\
1.52170894158256	8.68193190950439e-12\\
%1.53807140332	1.12726529499785e-11\\
1.55443386505745	1.56402182871496e-11\\
%1.5707963267949	6.60069499058835e-11\\
};
\addlegendentry{Number of interferers $(n) = 40$}
\end{axis}
\end{tikzpicture}%
\caption{\textit{(One Dimension Model)} Here the variation of interference approximation error $\hat{e}=|\mathbb{I}_{n}(z)-\hat{\mathtt{I}}_{1}(z)|$ is drawn for a linear variation of the half-power-semi-angle (HPSA) $\theta_{h}$ of the LED for different number interferers $(n)$ in the network. We consider the attocell length $a=0.5$m, the height $h$ of the LED as $2.5$m and the position $z$ of the receiver photodiode (PD) at half the attocell length $\frac{a}{2}$.}
\label{one5t3}
\end{figure} 

\begin{figure}[ht]
\centering
\begin{tikzpicture}
\begin{axis}[%
width=7cm,
height=5cm,
scale only axis,
xmin=0,
xmax=0.25,
xlabel style={font=\color{white!15!black}},
xlabel={Distance ($z$) of the PD from origin in metres},
ymode=log,
ymin=1e-15,
ymax=1e-05,
yminorticks=true,
ylabel style={font=\color{white!15!black}},
ylabel={Approximation error $\hat{e}$},
axis background/.style={fill=white},
title style={font=\bfseries},
title={},
xmajorgrids,
ymajorgrids,
yminorgrids,
legend style={font=\fontsize{7}{5}\selectfont,at={(0.32,0.541)}, anchor=south west, legend cell align=left, align=left, draw=white!15!black}
]
\addplot [color=black, dashdotted, line width=1.0pt]
  table[row sep=crcr]{%
0	2.70967324888551e-06\\
0.001	2.70967511052395e-06\\
0.002	2.70968069544407e-06\\
0.003	2.70969000365669e-06\\
0.004	2.70970303517873e-06\\
0.005	2.70971979003925e-06\\
0.006	2.70974026826947e-06\\
0.007	2.7097644699119e-06\\
0.008	2.70979239501206e-06\\
0.009	2.70982404362808e-06\\
0.01	2.70985941582067e-06\\
0.011	2.70989851166442e-06\\
0.012	2.70994133123393e-06\\
0.013	2.70998787461419e-06\\
0.014	2.71003814190018e-06\\
0.015	2.71009213319079e-06\\
0.016	2.71014984859312e-06\\
0.017	2.71021128822428e-06\\
0.018	2.71027645220526e-06\\
0.019	2.71034534066444e-06\\
0.02	2.71041795374362e-06\\
0.021	2.71049429158374e-06\\
0.022	2.71057435433748e-06\\
0.023	2.71065814216528e-06\\
0.024	2.71074565523369e-06\\
0.025	2.71083689371834e-06\\
0.026	2.71093185779964e-06\\
0.027	2.71103054766669e-06\\
0.028	2.71113296351853e-06\\
0.029	2.71123910555726e-06\\
0.03	2.71134897399453e-06\\
0.031	2.71146256905063e-06\\
0.032	2.71157989095196e-06\\
0.033	2.71170093993009e-06\\
0.034	2.71182571622702e-06\\
0.035	2.71195422009516e-06\\
0.036	2.71208645178734e-06\\
0.037	2.71222241156723e-06\\
0.038	2.71236209970632e-06\\
0.039	2.71250551648391e-06\\
0.04	2.71265266218578e-06\\
0.041	2.71280353710294e-06\\
0.042	2.7129581415407e-06\\
0.043	2.71311647580394e-06\\
0.044	2.7132785402075e-06\\
0.045	2.71344433507709e-06\\
0.046	2.71361386074229e-06\\
0.047	2.71378711754269e-06\\
0.048	2.71396410582046e-06\\
0.049	2.71414482593164e-06\\
0.05	2.7143292782336e-06\\
0.051	2.71451746309626e-06\\
0.052	2.71470938089389e-06\\
0.053	2.71490503201072e-06\\
0.054	2.71510441683404e-06\\
0.055	2.71530753576284e-06\\
0.056	2.7155143892039e-06\\
0.057	2.71572497756705e-06\\
0.058	2.71593930127427e-06\\
0.059	2.71615736075186e-06\\
0.06	2.71637915643394e-06\\
0.061	2.71660468876504e-06\\
0.062	2.71683395819356e-06\\
0.063	2.7170669651749e-06\\
0.064	2.7173037101761e-06\\
0.065	2.71754419366946e-06\\
0.066	2.71778841613246e-06\\
0.067	2.71803637805559e-06\\
0.068	2.71828807992891e-06\\
0.069	2.71854352225853e-06\\
0.07	2.71880270555098e-06\\
0.071	2.71906563032494e-06\\
0.072	2.71933229710345e-06\\
0.073	2.71960270642035e-06\\
0.074	2.71987685881124e-06\\
0.075	2.72015475482733e-06\\
0.076	2.72043639502027e-06\\
0.077	2.72072177995341e-06\\
0.078	2.72101091019356e-06\\
0.079	2.72130378631882e-06\\
0.08	2.72160040891423e-06\\
0.081	2.72190077857002e-06\\
0.082	2.72220489588642e-06\\
0.083	2.72251276146971e-06\\
0.084	2.72282437593311e-06\\
0.085	2.72313973990026e-06\\
0.086	2.72345885399825e-06\\
0.087	2.7237817188633e-06\\
0.088	2.72410833514246e-06\\
0.089	2.72443870348409e-06\\
0.09	2.72477282454867e-06\\
0.091	2.72511069900495e-06\\
0.092	2.72545232752296e-06\\
0.093	2.72579771078748e-06\\
0.094	2.7261468494846e-06\\
0.095	2.72649974431341e-06\\
0.096	2.72685639597777e-06\\
0.097	2.7272168051872e-06\\
0.098	2.72758097266332e-06\\
0.099	2.72794889913169e-06\\
0.1	2.72832058532523e-06\\
0.101	2.72869603198901e-06\\
0.102	2.72907523986851e-06\\
0.103	2.72945820972226e-06\\
0.104	2.72984494231569e-06\\
0.105	2.73023543841686e-06\\
0.106	2.73062969880812e-06\\
0.107	2.73102772427487e-06\\
0.108	2.73142951561249e-06\\
0.109	2.73183507362284e-06\\
0.11	2.73224439911388e-06\\
0.111	2.73265749290439e-06\\
0.112	2.7330743558175e-06\\
0.113	2.73349498868501e-06\\
0.114	2.73391939234653e-06\\
0.115	2.73434756765164e-06\\
0.116	2.73477951545122e-06\\
0.117	2.73521523661002e-06\\
0.118	2.73565473199672e-06\\
0.119	2.73609800248865e-06\\
0.12	2.73654504897268e-06\\
0.121	2.73699587233786e-06\\
0.122	2.73745047348756e-06\\
0.123	2.73790885332642e-06\\
0.124	2.73837101276951e-06\\
0.125	2.73883695274275e-06\\
0.126	2.7393066741742e-06\\
0.127	2.73978017799933e-06\\
0.128	2.74025746517048e-06\\
0.129	2.74073853663353e-06\\
0.13	2.74122339335124e-06\\
0.131	2.74171203629289e-06\\
0.132	2.74220446643252e-06\\
0.133	2.7427006847559e-06\\
0.134	2.74320069225224e-06\\
0.135	2.74370448991989e-06\\
0.136	2.74421207876498e-06\\
0.137	2.74472345980275e-06\\
0.138	2.74523863405193e-06\\
0.139	2.74575760254338e-06\\
0.14	2.74628036631272e-06\\
0.141	2.74680692640252e-06\\
0.142	2.74733728386966e-06\\
0.143	2.74787143976886e-06\\
0.144	2.74840939516741e-06\\
0.145	2.74895115114169e-06\\
0.146	2.74949670877114e-06\\
0.147	2.75004606914995e-06\\
0.148	2.75059923337054e-06\\
0.149	2.75115620254184e-06\\
0.15	2.75171697777538e-06\\
0.151	2.75228156019049e-06\\
0.152	2.75284995091604e-06\\
0.153	2.75342215108826e-06\\
0.154	2.75399816184818e-06\\
0.155	2.75457798435026e-06\\
0.156	2.75516161975025e-06\\
0.157	2.75574906921389e-06\\
0.158	2.75634033391689e-06\\
0.159	2.75693541504059e-06\\
0.16	2.75753431377458e-06\\
0.161	2.75813703131539e-06\\
0.162	2.75874356886431e-06\\
0.163	2.75935392763998e-06\\
0.164	2.75996810885628e-06\\
0.165	2.76058611374313e-06\\
0.166	2.76120794353782e-06\\
0.167	2.7618335994798e-06\\
0.168	2.76246308282199e-06\\
0.169	2.76309639482119e-06\\
0.17	2.76373353674505e-06\\
0.171	2.76437450986728e-06\\
0.172	2.76501931546509e-06\\
0.173	2.76566795483386e-06\\
0.174	2.76632042926685e-06\\
0.175	2.76697674006943e-06\\
0.176	2.76763688855262e-06\\
0.177	2.76830087603958e-06\\
0.178	2.76896870385519e-06\\
0.179	2.76964037333434e-06\\
0.18	2.77031588582057e-06\\
0.181	2.77099524266826e-06\\
0.182	2.77167844523266e-06\\
0.183	2.77236549488073e-06\\
0.184	2.77305639298595e-06\\
0.185	2.77375114093176e-06\\
0.186	2.77444974010463e-06\\
0.187	2.77515219190751e-06\\
0.188	2.77585849774034e-06\\
0.189	2.77656865901822e-06\\
0.19	2.77728267716274e-06\\
0.191	2.77800055359942e-06\\
0.192	2.77872228976591e-06\\
0.193	2.77944788710724e-06\\
0.194	2.78017734707363e-06\\
0.195	2.78091067112485e-06\\
0.196	2.78164786072892e-06\\
0.197	2.78238891735991e-06\\
0.198	2.78313384250274e-06\\
0.199	2.78388263764496e-06\\
0.2	2.78463530428794e-06\\
0.201	2.78539184393614e-06\\
0.202	2.78615225810311e-06\\
0.203	2.78691654831539e-06\\
0.204	2.78768471609694e-06\\
0.205	2.78845676298777e-06\\
0.206	2.78923269053438e-06\\
0.207	2.79001250028804e-06\\
0.208	2.79079619381045e-06\\
0.209	2.79158377267065e-06\\
0.21	2.79237523844508e-06\\
0.211	2.7931705927197e-06\\
0.212	2.79396983708614e-06\\
0.213	2.7947729731425e-06\\
0.214	2.79558000250034e-06\\
0.215	2.79639092677295e-06\\
0.216	2.79720574758663e-06\\
0.217	2.7980244665716e-06\\
0.218	2.79884708536803e-06\\
0.219	2.79967360562218e-06\\
0.22	2.80050402899114e-06\\
0.221	2.80133835713764e-06\\
0.222	2.80217659173135e-06\\
0.223	2.80301873445583e-06\\
0.224	2.80386478699202e-06\\
0.225	2.80471475103865e-06\\
0.226	2.80556862829661e-06\\
0.227	2.80642642047938e-06\\
0.228	2.80728812930347e-06\\
0.229	2.80815375649406e-06\\
0.23	2.8090233037889e-06\\
0.231	2.80989677292837e-06\\
0.232	2.81077416566322e-06\\
0.233	2.81165548375118e-06\\
0.234	2.81254072895991e-06\\
0.235	2.81342990306188e-06\\
0.236	2.81432300784038e-06\\
0.237	2.81522004508565e-06\\
0.238	2.81612101659529e-06\\
0.239	2.81702592417428e-06\\
0.24	2.81793476963887e-06\\
0.241	2.81884755481011e-06\\
0.242	2.81976428151594e-06\\
0.243	2.82068495159778e-06\\
0.244	2.8216095668992e-06\\
0.245	2.8225381292755e-06\\
0.246	2.82347064058672e-06\\
0.247	2.82440710270551e-06\\
0.248	2.82534751750796e-06\\
0.249	2.82629188688016e-06\\
0.25	2.82724021271729e-06\\
};
\addlegendentry{Number of interferers $(n) = 4$}

\addplot [color=black, line width=1.0pt, draw=none, mark=o, mark size=3pt]
  table[row sep=crcr]{%
0	8.46370334928181e-11\\
0.01	8.46373969173864e-11\\
0.02	8.46384923952614e-11\\
0.03	8.46403151559538e-11\\
0.04	8.46428673678679e-11\\
0.05	8.46461498983653e-11\\
0.06	8.46501623137652e-11\\
0.07	8.46549063487911e-11\\
0.08	8.46603767992726e-11\\
0.09	8.46665762672949e-11\\
0.1	8.4673507788624e-11\\
0.11	8.46811691948557e-11\\
0.12	8.46895617870325e-11\\
0.13	8.46986838304309e-11\\
0.14	8.47085327229657e-11\\
0.15	8.47191149698501e-11\\
0.16	8.47304292700413e-11\\
0.17	8.4742473455135e-11\\
0.18	8.4755244923046e-11\\
0.19	8.47687532147534e-11\\
0.2	8.47829870545547e-11\\
0.21	8.47979577181524e-11\\
0.22	8.48136560982482e-11\\
0.23	8.48300856642892e-11\\
0.24	8.4847250753084e-11\\
0.25	8.48651478951856e-11\\
};
\addlegendentry{Number of interferers $(n) = 10$}

\addplot [color=black, draw=none, mark=asterisk, mark size=3pt]
  table[row sep=crcr]{%
0	7.00759762717329e-13\\
0.01	7.00759762717329e-13\\
0.02	7.00763232164281e-13\\
0.03	7.00764533206888e-13\\
0.04	7.0077233946253e-13\\
0.05	7.00777543632958e-13\\
0.06	7.00786217250338e-13\\
0.07	7.007979266338e-13\\
0.08	7.00809202336394e-13\\
0.09	7.00819176996381e-13\\
0.1	7.00835223188534e-13\\
0.11	7.00852570423294e-13\\
0.12	7.00870785019791e-13\\
0.13	7.00892035382372e-13\\
0.14	7.00909816298001e-13\\
0.15	7.00931500341451e-13\\
0.16	7.00959255917066e-13\\
0.17	7.00984409407468e-13\\
0.18	7.01008695536132e-13\\
0.19	7.01039920558699e-13\\
0.2	7.01067242453446e-13\\
0.21	7.01102370603834e-13\\
0.22	7.01132728264664e-13\\
0.23	7.01166989053315e-13\\
0.24	7.01203851927179e-13\\
0.25	7.01244617928864e-13\\
};
\addlegendentry{Number of interferers $(n) = 20$}

\addplot [color=black, draw=none, mark=+, mark size=3pt]
  table[row sep=crcr]{%
%0	5.60402418914308e-15\\
%0.001	5.60272314653609e-15\\
%0.002	5.60445787001207e-15\\
%0.003	5.60359050827408e-15\\
%0.004	5.6022894656671e-15\\
%0.005	5.60055474219112e-15\\
%0.006	5.60055474219112e-15\\
0.007	5.60402418914308e-15\\
%0.008	5.60359050827408e-15\\
%0.009	5.60402418914308e-15\\
%0.01	5.60315682740509e-15\\
%0.011	5.60619259348805e-15\\
%0.012	5.60619259348805e-15\\
%0.013	5.60619259348805e-15\\
%0.014	5.60402418914308e-15\\
0.015	5.60749363609503e-15\\
%0.016	5.60662627435704e-15\\
%0.017	5.6022894656671e-15\\
%0.018	5.60315682740509e-15\\
%0.019	5.60402418914308e-15\\
%0.02	5.60489155088106e-15\\
%0.021	5.6018557847981e-15\\
%0.022	5.60575891261905e-15\\
0.023	5.6022894656671e-15\\
%0.024	5.60012106132213e-15\\
%0.025	5.60272314653609e-15\\
%0.026	5.6018557847981e-15\\
%0.027	5.60315682740509e-15\\
%0.028	5.60142210392911e-15\\
%0.029	5.60142210392911e-15\\
%0.03	5.59968738045313e-15\\
0.031	5.6022894656671e-15\\
%0.032	5.60575891261905e-15\\
%0.033	5.60359050827408e-15\\
%0.034	5.60142210392911e-15\\
%0.035	5.60272314653609e-15\\
%0.036	5.60098842306012e-15\\
%0.037	5.60315682740509e-15\\
%0.038	5.60142210392911e-15\\
0.039	5.60012106132213e-15\\
%0.04	5.60662627435704e-15\\
%0.041	5.60445787001207e-15\\
%0.042	5.60359050827408e-15\\
%0.043	5.60619259348805e-15\\
%0.044	5.60489155088106e-15\\
%0.045	5.60272314653609e-15\\
%0.046	5.6022894656671e-15\\
0.047	5.6022894656671e-15\\
%0.048	5.60489155088106e-15\\
%0.049	5.60575891261905e-15\\
%0.05	5.60359050827408e-15\\
%0.051	5.60315682740509e-15\\
%0.052	5.60402418914308e-15\\
%0.053	5.6022894656671e-15\\
%0.054	5.60055474219112e-15\\
0.055	5.6018557847981e-15\\
%0.056	5.60532523175006e-15\\
%0.057	5.60575891261905e-15\\
%0.058	5.60272314653609e-15\\
%0.059	5.6022894656671e-15\\
%0.06	5.60272314653609e-15\\
%0.061	5.60575891261905e-15\\
%0.062	5.60445787001207e-15\\
0.063	5.6018557847981e-15\\
%0.064	5.60575891261905e-15\\
%0.065	5.60315682740509e-15\\
%0.066	5.60402418914308e-15\\
%0.067	5.60402418914308e-15\\
%0.068	5.60359050827408e-15\\
%0.069	5.60532523175006e-15\\
%0.07	5.60575891261905e-15\\
0.071	5.60142210392911e-15\\
%0.072	5.60272314653609e-15\\
%0.073	5.6018557847981e-15\\
%0.074	5.6022894656671e-15\\
%0.075	5.60142210392911e-15\\
%0.076	5.60402418914308e-15\\
%0.077	5.6022894656671e-15\\
%0.078	5.60272314653609e-15\\
0.079	5.60272314653609e-15\\
%0.08	5.60575891261905e-15\\
%0.081	5.60359050827408e-15\\
%0.082	5.59968738045313e-15\\
%0.083	5.60315682740509e-15\\
%0.084	5.6022894656671e-15\\
%0.085	5.60532523175006e-15\\
%0.086	5.60662627435704e-15\\
0.087	5.60012106132213e-15\\
%0.088	5.60402418914308e-15\\
%0.089	5.60402418914308e-15\\
%0.09	5.60012106132213e-15\\
%0.091	5.60359050827408e-15\\
%0.092	5.60489155088106e-15\\
%0.093	5.60098842306012e-15\\
%0.094	5.60315682740509e-15\\
0.095	5.60575891261905e-15\\
%0.096	5.60272314653609e-15\\
%0.097	5.60272314653609e-15\\
%0.098	5.60575891261905e-15\\
%0.099	5.60619259348805e-15\\
%0.1	5.60489155088106e-15\\
%0.101	5.60315682740509e-15\\
%0.102	5.6022894656671e-15\\
0.103	5.60575891261905e-15\\
%0.104	5.60489155088106e-15\\
%0.105	5.60575891261905e-15\\
%0.106	5.6022894656671e-15\\
%0.107	5.6018557847981e-15\\
%0.108	5.60402418914308e-15\\
%0.109	5.60489155088106e-15\\
%0.11	5.60445787001207e-15\\
0.111	5.6022894656671e-15\\
%0.112	5.60445787001207e-15\\
%0.113	5.60489155088106e-15\\
%0.114	5.60359050827408e-15\\
%0.115	5.60445787001207e-15\\
%50.116	5.60359050827408e-15\\
%0.117	5.60489155088106e-15\\
%0.118	5.60402418914308e-15\\
0.119	5.60359050827408e-15\\
%0.12	5.60315682740509e-15\\
%0.121	5.60489155088106e-15\\
%0.122	5.60402418914308e-15\\
%0.123	5.60402418914308e-15\\
%0.124	5.60142210392911e-15\\
%0.125	5.60359050827408e-15\\
%0.126	5.60619259348805e-15\\
0.127	5.60402418914308e-15\\
%0.128	5.60532523175006e-15\\
%0.129	5.60359050827408e-15\\
%0.13	5.60619259348805e-15\\
%0.131	5.60055474219112e-15\\
%0.132	5.60055474219112e-15\\
%0.133	5.60489155088106e-15\\
%0.134	5.60532523175006e-15\\
0.135	5.59925369958414e-15\\
%0.136	5.60402418914308e-15\\
%0.137	5.60142210392911e-15\\
%0.138	5.60402418914308e-15\\
%0.139	5.60749363609503e-15\\
%0.14	5.60359050827408e-15\\
%0.141	5.60532523175006e-15\\
%0.142	5.60445787001207e-15\\
0.143	5.60705995522603e-15\\
%0.144	5.60489155088106e-15\\
%0.145	5.60315682740509e-15\\
%0.146	5.60315682740509e-15\\
%0.147	5.60489155088106e-15\\
%0.148	5.60142210392911e-15\\
%0.149	5.60142210392911e-15\\
%0.15	5.60359050827408e-15\\
0.151	5.60662627435704e-15\\
%0.152	5.60315682740509e-15\\
%0.153	5.60532523175006e-15\\
%0.154	5.60359050827408e-15\\
%0.155	5.60272314653609e-15\\
%0.156	5.60619259348805e-15\\
%0.157	5.60402418914308e-15\\
%0.158	5.60489155088106e-15\\
0.159	5.60272314653609e-15\\
%0.16	5.60792731696402e-15\\
%0.161	5.60445787001207e-15\\
%0.162	5.6022894656671e-15\\
%0.163	5.60402418914308e-15\\
%0.164	5.60315682740509e-15\\
%0.165	5.60402418914308e-15\\
%0.166	5.60489155088106e-15\\
0.167	5.60662627435704e-15\\
%0.168	5.60489155088106e-15\\
%0.169	5.60315682740509e-15\\
%0.17	5.60575891261905e-15\\
%0.171	5.60489155088106e-15\\
%0.172	5.60315682740509e-15\\
%0.173	5.6022894656671e-15\\
%0.174	5.60402418914308e-15\\
0.175	5.60272314653609e-15\\
%0.176	5.60142210392911e-15\\
%0.177	5.60575891261905e-15\\
%0.178	5.60575891261905e-15\\
%0.179	5.60445787001207e-15\\
%0.18	5.6018557847981e-15\\
%0.181	5.60402418914308e-15\\
%0.182	5.60402418914308e-15\\
0.183	5.60575891261905e-15\\
%0.184	5.60359050827408e-15\\
%0.185	5.60359050827408e-15\\
%0.186	5.60098842306012e-15\\
%0.187	5.60098842306012e-15\\
%0.188	5.60359050827408e-15\\
%0.189	5.60532523175006e-15\\
%0.19	5.60489155088106e-15\\
0.191	5.60315682740509e-15\\
%0.192	5.60272314653609e-15\\
%0.193	5.60575891261905e-15\\
%0.194	5.60575891261905e-15\\
%0.195	5.60359050827408e-15\\
%0.196	5.60489155088106e-15\\
%0.197	5.60359050827408e-15\\
%0.198	5.60532523175006e-15\\
0.199	5.60142210392911e-15\\
%0.2	5.6022894656671e-15\\
%0.201	5.60489155088106e-15\\
%0.202	5.60142210392911e-15\\
%0.203	5.60142210392911e-15\\
%0.204	5.60315682740509e-15\\
%0.205	5.6022894656671e-15\\
%0.206	5.60749363609503e-15\\
0.207	5.60749363609503e-15\\
%0.208	5.60662627435704e-15\\
%0.209	5.60315682740509e-15\\
%0.21	5.60575891261905e-15\\
%0.211	5.60402418914308e-15\\
%0.212	5.60575891261905e-15\\
%0.213	5.60359050827408e-15\\
%0.214	5.60402418914308e-15\\
0.215	5.60359050827408e-15\\
%0.216	5.60402418914308e-15\\
%0.217	5.60532523175006e-15\\
%0.218	5.60575891261905e-15\\
%0.219	5.60575891261905e-15\\
%0.22	5.60315682740509e-15\\
%0.221	5.60575891261905e-15\\
%0.222	5.60359050827408e-15\\
0.223	5.60359050827408e-15\\
%0.224	5.60402418914308e-15\\
%0.225	5.60619259348805e-15\\
%0.226	5.60619259348805e-15\\
%0.227	5.60402418914308e-15\\
%0.228	5.60402418914308e-15\\
%0.229	5.60359050827408e-15\\
%0.23	5.60142210392911e-15\\
0.231	5.60098842306012e-15\\
%0.232	5.60098842306012e-15\\
%0.233	5.60402418914308e-15\\
%0.234	5.60489155088106e-15\\
%0.235	5.60402418914308e-15\\
%0.236	5.60315682740509e-15\\
%0.237	5.6018557847981e-15\\
%0.238	5.6022894656671e-15\\
0.239	5.60142210392911e-15\\
%0.24	5.60315682740509e-15\\
%0.241	5.60402418914308e-15\\
%0.242	5.6022894656671e-15\\
%0.243	5.60315682740509e-15\\
%0.244	5.60359050827408e-15\\
%0.245	5.60272314653609e-15\\
%0.246	5.60575891261905e-15\\
0.247	5.60749363609503e-15\\
%0.248	5.60619259348805e-15\\
%0.249	5.60445787001207e-15\\
%0.25	5.60489155088106e-15\\
};
\addlegendentry{Number of interferers $(n) = 40$}

\end{axis}
\end{tikzpicture}%
\caption{\textit{(One Dimension Model)} Here the variation of interference approximation error $\hat{e}=|\mathbb{I}_{n}(z)-\hat{\mathtt{I}}_{1}(z)|$ is drawn for a linear variation of the position $z$ of the receiver photodiode (PD) inside the attocell for different number interferers $(n)$ in the network. We consider $a=0.5$m, the half-power-semi-angle (HPSA) $\theta_{h}$ of the LED as $\frac{\pi}{3}$ radians and the height $h$ of LED as $2.5$m.}
\label{one5z3}
\end{figure}    

From Fig. \ref{one5h1} and it's corresponding approximation error plot in Fig. \ref{one5h3}, we observe that for any given height $h$, as the number of interferers increase, the error $\hat{e}$, decreases.  On the log axis, we observe a maximum error $\hat{e}_{max}$ in the order of $10^{-4}$ with respect to $\hat{\mathtt{I}}_{1}(z)$, that is in the order of $10^{-2}$. This error further reduces as the number of interferers is increased. The same can be observed with the variation of HPSA in graphs of Fig. \ref{one5t1} and the error plot in Fig. \ref{one5t3}, where $\hat{e}_{max}$ is in the order of $10^{-7}$, for $\hat{\mathtt{I}}_{1}(z)$ in the order of $10^{-3}$. Again, this error reduces as the number of interferers increases. Similarly, in graphs of Fig. \ref{one5z1} and Fig. \ref{one5z3}, we observe $\hat{e}_{max}$ in the order of $10^{-5}$, for $\hat{\mathtt{I}}_{1}(z)$ in the order of $10^{-2}$. So, when compared with the interference values, these errors are small, which numerically validates the approximation to $\hat{\mathtt{I}}_{1}(z)$. \\ \par

As seen in the above example, Prop. \ref{prop1} essentially implies that for a given value of $\frac{h}{a}$ the approximation to $\hat {\mathtt{I}}_{k}(z)$ is tight and very close to the actual interference $\mathbb{I}_{\infty}(z)$ in \eqref{eqn:interf1a}, with an approximation error bounded by an exponential decay.
So, this characterization can be summarized as   
\begin{align*}
\mathbb{I}_{n}(z) < \mathbb{I}_{\infty}(z) \approx \hat{\mathtt{I}}_{k}(z). 
\end{align*}  
This also implies that our characterization provides closed form analytical bounds for interference in finite LED networks.
\subsection{One dimension model with FOV $\theta_{f}<\frac{\pi}{2}$ radians} 
We now look at the interference characterization when $\theta_{f}<\frac{\pi}{2}$ radians. Here we show that, the Fourier analysis method can be used to give a suitable interference approximation for such cases as well.
The infinite summation in \eqref{eqn:interf1a} becomes a finite summation, when the FOV constraint function $\rho(D_{d})$, acts on every interferer. From the proof of Thm. \ref{theorem1}, we can modify the function $q(.)$ in \eqref{eqn:intc} as 
 \begin{equation*}
q'(x) = ( x^{2} + h^{2} )^{-\beta}\rho(D_{d}).
%\label{eqn:intc1}  
\end{equation*}
The Poisson summation theorem can be used to obtain a similar result as in the previous subsection if the Fourier transform of $q'(x)$ can be obtained. 
\begin{comment}
The Fourier transform $\phi(w)$, of the indicator function $\rho(D_{xa})$ is 
\begin{align} 
\phi(w) =&\int_{-\infty}^{+\infty} \rho(D_{d})e^{-\iota2\pi w x}dx \nonumber\\
=&\int_{-h\tan(\theta_{f})}^{+h\tan(\theta_{f})}e^{-\iota2\pi w x }dx \nonumber\\
=&\frac{\sin(2\pi wh\tan(\theta_{f}))}{\pi w}. 
\label{eqn:rectft1}
\end{align}
\end{comment}

Hence the Fourier transform of  $q'(x) $ equals
\begin{align*}
Q'(w) &=\int_{-\infty}^{\infty}q'(x)e^{-\iota2\pi wx} \d x, \nonumber\\
&=\int_{-\infty}^{\infty}(x^{2}+h^{2})^{-\beta}\rho(D_{d})e^{-\iota2\pi wx} \d x, \nonumber\\
&=\int_{0}^{h\tan(\theta_{f})}\frac{2\cos(2\pi wx)}{(x^{2}+h^{2})^{\beta}} \d x.
%\label{eqn:intd1}
\end{align*}
Hence we have the following Lemma. 
\begin{lemma}
For an FOV $\theta_{f}<\frac{\pi}{2}$ radians and a finite integer $k\geq1$ we have 
\begin{align}
\mathbb{I}_{\infty}(z) \approx \hat {\mathtt{I}}'_{k}(z) = &\frac{1}{a}\left[Q'(0)+\sum_{w=1}^{k}2Q'\bigg(\frac{w}{a}\bigg)\cos\bigg(\frac{2\pi w z}{a}\bigg)\right] \nonumber\\
&-\frac{1}{(z^{2}+h^{2})^{\beta}}.
\label{eqn:fovfov1}
\end{align}
\label{lem1}
\end{lemma}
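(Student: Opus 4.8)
The plan is to apply the Poisson summation theorem (Thm.~\ref{theorem0}) to the shifted, scaled lattice $a\mathbb{Z}+z$, exactly as in the proof of Thm.~\ref{theorem1}, but now with the FOV-truncated summand $q'$ introduced just above the Lemma. First I would note that the factor $\rho(D_{ia})$ in \eqref{eqn:interf1} kills all but finitely many terms, so $\mathbb{I}_\infty(z)$ is in fact a finite sum; adding and subtracting the $i=0$ term lets me write
\[
\mathbb{I}_\infty(z)=\sum_{i=-\infty}^{+\infty}q'(ia+z)\;-\;\big(z^2+h^2\big)^{-\beta},
\]
where I use that the tagged LED always lies within the PD's FOV (the PD sits inside its own attocell, so $|z|\le h\tan\theta_f$), hence $\rho(D_0)=1$.

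Next I would invoke Thm.~\ref{theorem0} in its shifted/scaled form. With $\tilde q(t)=q'(at+z)$ the elementary scaling and modulation rules give $\widehat{\tilde q}(w)=\tfrac1a\,e^{2\pi\iota wz/a}\,Q'(w/a)$, so
\[
\sum_{i=-\infty}^{+\infty}q'(ia+z)=\frac1a\sum_{w=-\infty}^{+\infty}Q'\!\Big(\tfrac{w}{a}\Big)\,e^{2\pi\iota wz/a},
\]
with $Q'$ exactly the transform computed above the Lemma. Since $q'$ is real and even, so is $Q'$; pairing $w$ with $-w$ collapses the exponential series into the cosine series $\tfrac1a[\,Q'(0)+2\sum_{w\ge1}Q'(w/a)\cos(2\pi wz/a)\,]$, and truncating it at $w=k$ together with the $-(z^2+h^2)^{-\beta}$ correction produces precisely $\hat{\mathtt I}'_k(z)$ of \eqref{eqn:fovfov1}.

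The delicate point — the main obstacle — is the legitimacy of Poisson summation for $q'$, which has jump discontinuities at $x=\pm h\tan\theta_f$. This is covered by the ``mild regularity conditions'' clause of Thm.~\ref{theorem0}: the left-hand side is an honest finite sum, and $q'$ is of bounded variation with compact support, so the identity holds with the right-hand series read as a symmetric ($\pm w$) limit. The cost of the jump is that $Q'(w/a)=O(1/w)$, so the cosine series converges only conditionally and the truncation error decays like $O(1/k)$ rather than exponentially; this is exactly why the Lemma is stated as an approximation without the sharp error bound of Prop.~\ref{prop1}. A final routine check is that $q'\in L^1$ and even, which makes $Q'$ well defined and real-valued, closing the argument.
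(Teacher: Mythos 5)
Your proposal is correct and follows essentially the same route the paper intends: its one-line proof ("follows from the Poisson summation theorem and approximations") is exactly the argument you spell out, i.e.\ the shifted/scaled Poisson identity applied to the truncated summand $q'$, the real-even symmetry of $Q'$ collapsing the exponential series to a cosine series, and truncation at $w=k$. Your added care about the two delicate points the paper glosses over — that subtracting $(z^{2}+h^{2})^{-\beta}$ presumes $\rho(D_{0})=1$, and that the jump discontinuity of $q'$ at $\pm h\tan\theta_{f}$ only permits Poisson summation in the bounded-variation/symmetric-limit sense with $Q'(w/a)=O(1/w)$ and hence merely $O(1/k)$ truncation error — is accurate and explains why the Lemma carries no exponential error bound analogous to Prop.~\ref{prop1}.
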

\begin{proof}
Follows from the Poisson summation theorem and approximations. 
\end{proof}
%Similar to the case of $\theta_{f}=90^{\circ}$, for most practical cases of LED installation, considering $w=1$ is sufficient, i.e. 
The constant term evaluated at $w=0$ is 
\begin{align*}
Q'(0) &= \int_{0}^{h\tan(\theta_{f})}\frac{2}{(x^{2}+h^{2})^{\beta}} \d x, \nonumber\\  
%&= \int_{-htan(\theta_{f})}^{-htan(\theta_{f})} \frac{1}{(i^{2}+h^{2})^{\beta}}di \nonumber\\
%&= 2h^{1-2\beta}\Hypergeometric{2}{1}{1,b}{1.5}{-tan^{2}(\theta_{f})}
&= 2h^{1-2\beta}\tan(\theta_{f}){}_{2}F_{1}(0.5,\beta;1.5;-\tan^{2}(\theta_{f})), 
%\label{eqn:intd10}
\end{align*}
where ${}_{2}F_{1}(.;.;.)$ is the generalized hypergeometric function. As earlier, this represents the average spatial interference seen at all locations. A closed form expression for $Q'\big(\frac{w}{a}\big)$ can be simply evaluated using numerical integration. \\ \par
\begin{comment}
\begin{align*}
Q'\bigg(\frac{1}{a}\bigg)&=\int_{0}^{h\tan(\theta_{f})} \frac{2\cos\big(\frac{2\pi x}{a}\big)}{(x^{2}+h^{2})^{\beta}} \d x.
%\label{fov1q1}
\end{align*}
\end{comment}

We consider $h=2.5$m and $a=0.5$m, leading to $\frac{h}{a}=5$ to numerically validate \eqref{eqn:fovfov1} for $k=1$ over various values of $\theta_{f}$ and compare it with $\mathbb{I}_{\infty}(z)$ in \eqref{eqn:interf1}. In Li-Fi attocell networks, if the FOV $\theta_{f}<\theta_{o}\big(=\tan^{-1}\big(\frac{a}{h}\big)\big)$, the PD does not experience any interference. Here the ratio $\frac{a}{h}=0.2$ and $\theta_{o}=0.197$ radians. So, in Fig. \ref{onefov1}, we observe that both $\mathbb{I}_{\infty}(z)$ and $\hat{\mathtt{I}}'_{1}(z)$ drop down to zero once $\theta_{f}<\theta_{o}=0.197$ radians. Also, for $\theta_{f}>\theta_{o}$, both the graphs, $\mathbb{I}_{\infty}(z)$ and $\hat{\mathtt{I}}'_{1}(z)$ are tightly bounded, which numerically validates our proposition in Lem. \ref{lem1} for $k=1$. Also, as $\theta_{f}\to 1.57 (=\frac{\pi}{2})$ radians, the interference values converge to the earlier case of $\theta_{f}=\frac{\pi}{2}$ radians. \\ \par

So, the approximation above in Lem. \ref{lem1} is a good approximation for various practical parameter values based on the choice of $k$. As shown above, if we choose $h=2.5$m and $a=0.5$m, considering $k=1$ is sufficient. When $\frac{h}{a}$ becomes small, a few more terms are necessary to improve the approximation accuracy. \\ \par

\begin{figure}[ht]
\centering
\definecolor{mycolor1}{rgb}{0.00000,0.44700,0.74100}%
\definecolor{mycolor2}{rgb}{0.85000,0.32500,0.09800}%
\begin{tikzpicture}
\begin{axis}[%
width=7cm,
height=5cm,
scale only axis,
xmin=0.01,
xmax=1.6,
xlabel style={font=\color{white!15!black}},
xlabel={FOV $(\theta_{f})$ of the PD in radians},
ymin=0.000,
ymax=0.003,
ylabel style={font=\color{white!15!black}},
ylabel={Interference},
axis background/.style={fill=white},
xmajorgrids,
ymajorgrids,
legend style={font=\fontsize{7}{5}\selectfont,at={(0.42,0.241)}, anchor=south west, legend cell align=left, align=left, draw=white!15!black}
]
\addplot [color=black, draw=none, mark=+, mark size=3pt]
  table[row sep=crcr]{%
%0	0\\
0.0145444104332861	0\\
%0.0290888208665722	0\\
0.0436332312998582	0\\
%0.0581776417331443	0\\
0.0727220521664304	0\\
%0.0872664625997165	0\\
0.101810873033003	0\\
%0.116355283466289	0\\
0.130899693899575	0\\
%0.145444104332861	0\\
0.159988514766147	0\\
%0.174532925199433	0\\
0.189077335632719	0\\
%0.203621746066005	0.00112041\\
0.218166156499291	0.00112041\\
%0.232710566932577	0.00112041\\
0.247254977365863	0.00112041\\
%0.261799387799149	0.00112041\\
0.276343798232435	0.00112041\\
%0.290888208665722	0.00112041\\
0.305432619099008	0.00112041\\
%0.319977029532294	0.00112041\\
0.33452143996558	0.00112041\\
%0.349065850398866	0.00112041\\
0.363610260832152	0.00112041\\
%0.378154671265438	0.00112041\\
0.392699081698724	0.00184431\\
%0.40724349213201	0.00184431\\
0.421787902565296	0.00184431\\
%0.436332312998582	0.00184431\\
0.450876723431868	0.00184431\\
%0.465421133865155	0.00184431\\
0.479965544298441	0.00184431\\
%0.494509954731727	0.00184431\\
0.509054365165013	0.00184431\\
%0.523598775598299	0.00184431\\
0.538143186031585	0.00184431\\
%0.552687596464871	0.00222745\\
0.567232006898157	0.00222745\\
%0.581776417331443	0.00222745\\
0.596320827764729	0.00222745\\
%0.610865238198015	0.00222745\\
0.625409648631301	0.00222745\\
%0.639954059064587	0.00222745\\
0.654498469497874	0.00222745\\
%0.66904287993116	0.00222745\\
0.683587290364446	0.00240864\\
%0.698131700797732	0.00240864\\
0.712676111231018	0.00240864\\
%0.727220521664304	0.00240864\\
0.74176493209759	0.00240864\\
%0.756309342530876	0.00240864\\
0.770853752964162	0.00240864\\
%0.785398163397448	0.00249056\\
0.799942573830734	0.00249056\\
%0.81448698426402	0.00249056\\
0.829031394697307	0.00249056\\
%0.843575805130593	0.00249056\\
0.858120215563879	0.00249056\\
%0.872664625997165	0.00249056\\
0.887209036430451	0.00252753\\
%0.901753446863737	0.00252753\\
0.916297857297023	0.00252753\\
%0.930842267730309	0.00252753\\
0.945386678163595	0.00252753\\
%0.959931088596881	0.00254461\\
0.974475499030167	0.00254461\\
%0.989019909463453	0.00254461\\
1.00356431989674	0.00254461\\
%1.01810873033003	0.00255277\\
1.03265314076331	0.00255277\\
%1.0471975511966	0.00255277\\
1.06174196162988	0.00255277\\
%1.07628637206317	0.00255682\\
1.09083078249646	0.00255682\\
%1.10537519292974	0.00255682\\
1.11991960336303	0.00255892\\
%1.13446401379631	0.00255892\\
1.1490084242296	0.00256005\\
%1.16355283466289	0.00256005\\
1.17809724509617	0.00256068\\
%1.19264165552946	0.00256068\\
1.20718606596274	0.00256104\\
%1.22173047639603	0.00256104\\
1.23627488682932	0.00256125\\
%1.2508192972626	0.00256138\\
1.26536370769589	0.00256138\\
%1.27990811812917	0.00256146\\
1.29445252856246	0.00256152\\
%1.30899693899575	0.00256155\\
1.32354134942903	0.00256158\\
%1.33808575986232	0.0025616\\
1.35263017029561	0.00256161\\
%1.36717458072889	0.00256162\\
1.38171899116218	0.00256162\\
%1.39626340159546	0.00256163\\
1.41080781202875	0.00256163\\
%1.42535222246204	0.00256163\\
1.43989663289532	0.00256163\\
%1.45444104332861	0.00256163\\
1.46898545376189	0.00256163\\
%1.48352986419518	0.00256163\\
1.49807427462847	0.00256163\\
%1.51261868506175	0.00256163\\
1.52716309549504	0.00256163\\
};
\addlegendentry{$\mathbb{I}_{\infty}(z)$}

\addplot [color=black, line width=1.0pt, draw=none, mark=o, mark size=3pt]
  table[row sep=crcr]{%
0	-0.00065536\\
%0.0145444104332861	7.96674e-06\\
0.0290888208665722	4.40698e-05\\
%0.0436332312998582	6.35912e-06\\
0.0581776417331443	-2.33366e-05\\
%0.0727220521664304	-9.18737e-06\\
0.0872664625997165	1.64634e-05\\
%0.101810873033003	1.24472e-05\\
0.116355283466289	-1.25903e-05\\
%0.130899693899575	-1.81303e-05\\
0.145444104332861	8.07848e-06\\
%0.159988514766147	3.09664e-05\\
0.174532925199433	7.36684e-06\\
%0.189077335632719	-9.70989e-05\\
0.203621746066005	0.0011531\\
%0.218166156499291	0.00108518\\
0.232710566932577	0.0010943\\
%0.247254977365863	0.00112117\\
0.261799387799149	0.00113625\\
%0.276343798232435	0.00113175\\
0.290888208665722	0.00111746\\
%0.305432619099008	0.00110723\\
0.319977029532294	0.00110845\\
%0.33452143996558	0.00111999\\
0.349065850398866	0.00113679\\
%0.363610260832152	0.00115773\\
0.378154671265438	0.00128546\\
%0.392699081698724	0.00185401\\
0.40724349213201	0.00185852\\
%0.421787902565296	0.00185736\\
0.436332312998582	0.00185462\\
%0.450876723431868	0.00185185\\
0.465421133865155	0.00184965\\
%0.479965544298441	0.0018482\\
0.494509954731727	0.0018476\\
%0.509054365165013	0.00184827\\
0.523598775598299	0.00185266\\
%0.538143186031585	0.00191914\\
0.552687596464871	0.00221505\\
%0.567232006898157	0.00221908\\
0.581776417331443	0.00222154\\
%0.596320827764729	0.00222429\\
0.610865238198015	0.00222768\\
%0.625409648631301	0.00223079\\
0.639954059064587	0.00223119\\
%0.654498469497874	0.00222566\\
0.66904287993116	0.00221052\\
%0.683587290364446	0.00240954\\
0.698131700797732	0.00240471\\
%0.712676111231018	0.00241028\\
0.727220521664304	0.00240902\\
%0.74176493209759	0.00240724\\
0.756309342530876	0.00241058\\
%0.770853752964162	0.00240549\\
0.785398163397448	0.00244991\\
%0.799942573830734	0.00249293\\
0.81448698426402	0.00248978\\
%0.829031394697307	0.00249034\\
0.843575805130593	0.00249137\\
%0.858120215563879	0.00249061\\
0.872664625997165	0.00248746\\
%0.887209036430451	0.0025286\\
0.901753446863737	0.00252793\\
%0.916297857297023	0.00252786\\
0.930842267730309	0.00252794\\
%0.945386678163595	0.00252767\\
0.959931088596881	0.00254513\\
%0.974475499030167	0.00254449\\
0.989019909463453	0.00254469\\
%1.00356431989674	0.00254435\\
1.01810873033003	0.00255249\\
%1.03265314076331	0.00255267\\
1.0471975511966	0.00255269\\
%1.06174196162988	0.00255243\\
1.07628637206317	0.00255678\\
%1.09083078249646	0.00255685\\
1.10537519292974	0.00255663\\
%1.11991960336303	0.00255895\\
1.13446401379631	0.00255894\\
%1.1490084242296	0.00256008\\
1.16355283466289	0.00256004\\
%1.17809724509617	0.00256068\\
1.19264165552946	0.00256067\\
%1.20718606596274	0.00256105\\
1.22173047639603	0.00256104\\
%1.23627488682932	0.00256125\\
1.2508192972626	0.00256138\\
%1.26536370769589	0.00256138\\
1.27990811812917	0.00256147\\
%1.29445252856246	0.00256152\\
1.30899693899575	0.00256155\\
%1.32354134942903	0.00256158\\
1.33808575986232	0.0025616\\
%1.35263017029561	0.00256161\\
1.36717458072889	0.00256162\\
%1.38171899116218	0.00256162\\
1.39626340159546	0.00256163\\
%1.41080781202875	0.00256163\\
1.42535222246204	0.00256163\\
%1.43989663289532	0.00256163\\
1.45444104332861	0.00256163\\
%1.46898545376189	0.00256163\\
1.48352986419518	0.00256163\\
%1.49807427462847	0.00256163\\
1.51261868506175	0.00256163\\
%1.52716309549504	0.00256163\\
};
\addlegendentry{$\hat{\mathtt{I}}_{1}(z)$}

\end{axis}
\end{tikzpicture}%
\caption{\textit{(One Dimension Model)} ($\theta_{f}<\frac{\pi}{2}$ radians) Here the variation of $\hat{\mathtt{I}}'_{1}(z)$ is drawn for a linear variation of the FOV $\theta_{f}$ of the receiver photodiode (PD). $\mathbb{I}_{\infty}(z)$ from \eqref{eqn:interf1} (or $\mathbb{I}_{n}(z)$ for $n=20$) is also drawn to validate the same. We consider $a=0.5$m, the half-power-semi-angle (HPSA) $\theta_{h}$ of the LED as $\frac{\pi}{3}$ radians, the height $h$ of the LED as $2.5$m and $z=\frac{a}{2}$.}
\label{onefov1}
\end{figure}
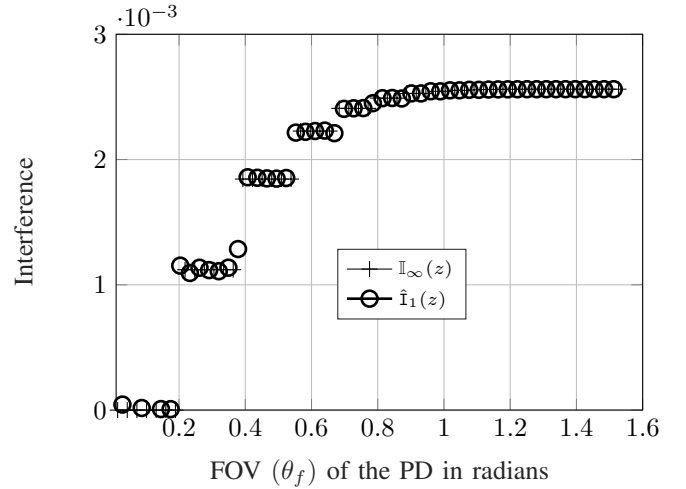    

\subsection{Two dimension model with FOV $\theta_{f}= \frac{\pi}{2}$ radians}
We now extend the result for two dimensions. 
\begin{theorem}
Consider a photodiode, with $\theta_{f}=\frac{\pi}{2}$ radians, situated at a distance $z=\sqrt{d_{x}^{2}+d_{y}^{2}}$ (inside an attocell) from the origin, in an infinite two dimension plane network of Li-Fi LEDs arranged as a regular square lattice of dimension $a$, emitting light with a Lambertian emission order $m$ and installed at a height $h$. Then, for a wavelength reuse factor of unity, the interference $\mathbb{I}_{\infty}(z)$, caused by the co-channel interferers at the photodiode is
\begin{align*}
\mathbb{I}_{\infty}(d_{x},d_{y}) =& \frac{h^{2-2\beta}\pi}{a^{2}(\beta-1)} -\frac{1}{(d_{x}^{2} + d_{y}^{2} + h^{2})^{\beta}} \nonumber\\
&+ \sum_{w=0}^{\infty}\sum_{k=0\setminus(0,0)}^{\infty} g(w,k),
\end{align*}
where 
\begin{align*}
&g(w,k)= \nonumber\\
&\frac{\big(\frac{h}{2\pi\sqrt{k^{2}+w^{2}}}\big)^{1-\beta}\mathbb{K}_{\beta-1}\big(\frac{2\pi h\sqrt{k^{2}+w^{2}}}{a}\big)\cos\big(\frac{2\pi wd_{x}}{a}\big)\cos\big(\frac{2\pi kd_{y}}{a}\big)}{2^{\beta-4} a^{\beta+1}\frac{\Gamma(\beta)}{\pi}}.
\end{align*}
Here $\Gamma(x)=\int_{0}^{\infty}t^{x-1}e^{-t}dt$ denotes the standard Gamma function, $\beta=m+3$ and $\mathbb{K}_{v}(y)=\frac{\Gamma(v+\frac{1}{2})(2y^{v})}{\sqrt{\pi}}\int_{0}^{\infty}\frac{\cos(t)dt}{(t^{2}+y^{2})^{v+\frac{1}{2}}}$ is the modified bessel function of second kind. 
\label{theorem2}
\end{theorem}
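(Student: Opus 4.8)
The plan is to follow the proof of Thm.~\ref{theorem1} closely, but with the one-dimensional Poisson summation replaced by its two-dimensional analogue and the one-dimensional Fourier transform of $(x^2+h^2)^{-\beta}$ replaced by a two-dimensional (radial, i.e.\ order-zero Hankel) transform. First I would restore the missing central term in \eqref{eqn:interf2a}, writing
\[
\mathbb{I}_\infty(d_x,d_y) \;=\; \sum_{u\in\mathbb{Z}}\sum_{v\in\mathbb{Z}} q(u,v)\;-\;(d_x^2+d_y^2+h^2)^{-\beta},\qquad q(s,t)=\big((sa+d_x)^2+(ta+d_y)^2+h^2\big)^{-\beta}.
\]
Since $\beta=m+3>1$, the function $q$ is smooth and decays like $(s^2+t^2)^{-\beta}$, so the mild regularity hypotheses of the multidimensional Poisson summation formula (the natural extension of Thm.~\ref{theorem0}) hold, giving $\sum_{u,v}q(u,v)=\sum_{w,k}Q(w,k)$ with $Q$ the $2$-D Fourier transform of $q$.

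Next I would compute $Q$. The affine change of variables $x=sa+d_x,\ y=ta+d_y$ gives $Q(w,k)=a^{-2}e^{\iota 2\pi(wd_x+kd_y)/a}\,\widehat{F}\!\left(\tfrac{2\pi w}{a},\tfrac{2\pi k}{a}\right)$ with $F(\mathbf{r})=(|\mathbf{r}|^2+h^2)^{-\beta}$. As $F$ is radial, $\widehat{F}(\boldsymbol{\xi})=2\pi\int_0^\infty (r^2+h^2)^{-\beta}J_0(|\boldsymbol{\xi}|r)\,r\,\d r$, and I would evaluate this with the standard identity $\int_0^\infty r\,J_0(br)(r^2+h^2)^{-\beta}\,\d r=\dfrac{h^{1-\beta}b^{\beta-1}}{2^{\beta-1}\Gamma(\beta)}\,\mathbb{K}_{\beta-1}(hb)$ (valid for $\beta>\tfrac12$), taken at $b=2\pi\sqrt{w^2+k^2}/a$; this is exactly where the $\mathbb{K}_{\beta-1}$ factor and the $\big(h/(2\pi\sqrt{k^2+w^2})\big)^{1-\beta}$ prefactor of $g(w,k)$ come from. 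The frequency $(0,0)$ is treated separately: $\widehat{F}(\mathbf{0})=2\pi\int_0^\infty (r^2+h^2)^{-\beta}r\,\d r=\pi h^{2-2\beta}/(\beta-1)$, so $Q(0,0)=\pi h^{2-2\beta}/\big(a^2(\beta-1)\big)$, which is precisely the stated constant term.

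It then remains to reassemble the series. Because $\widehat{F}$ depends only on $w^2+k^2$, the four frequencies $(\pm w,\pm k)$ carry the same Bessel factor, and summing their exponentials yields $4\cos\!\big(\tfrac{2\pi wd_x}{a}\big)\cos\!\big(\tfrac{2\pi kd_y}{a}\big)$. Folding $\sum_{(w,k)\in\mathbb{Z}^2}$ onto the first quadrant (with halved multiplicity on the axes $w=0$ or $k=0$, and the origin already removed), collecting the constants $2,\pi,a,h$ and the $\Gamma(\beta)$ normalization, and subtracting the $(0,0)$ lattice-point correction from the first step, produces $\mathbb{I}_\infty(d_x,d_y)=\frac{h^{2-2\beta}\pi}{a^2(\beta-1)}-\frac{1}{(d_x^2+d_y^2+h^2)^{\beta}}+\sum_{w\ge0}\sum_{k\ge0\setminus(0,0)}g(w,k)$ in the claimed form; the exponential decay of $\mathbb{K}_{\beta-1}$ guarantees absolute convergence of the double series.

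The genuinely new ingredient relative to Thm.~\ref{theorem1} — and the step where I would be most careful — is the evaluation of the $2$-D Fourier transform of the radial profile $(|\mathbf{r}|^2+h^2)^{-\beta}$, i.e.\ the order-zero Hankel transform that upgrades $\mathbb{K}_{\beta-1/2}$ (the $1$-D answer) to $\mathbb{K}_{\beta-1}$; everything else is bookkeeping. The main obstacle is thus not conceptual but the clean matching of constants (powers of $2$, the $2\pi$ factors inside and outside the Bessel function, the $a$-scaling, and the $\tfrac{\Gamma(\beta)}{\pi}$ normalization) together with the correct axis multiplicities in the folding step; verifying the two-dimensional Poisson regularity conditions is comparatively routine given $2\beta>2$.
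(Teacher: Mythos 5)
Your proposal follows essentially the same route as the paper's own proof: two-dimensional Poisson summation after the affine shift/scale, evaluation of the radial Fourier transform via the order-zero Hankel transform to obtain the $\mathbb{K}_{\beta-1}$ factor, separate treatment of the zero frequency for the constant term, removal of the $(0,0)$ lattice point, and folding the lattice sum onto the first quadrant using the evenness of $Q$. One point in your favour: your observation that the axis frequencies $(w,0)$ and $(0,k)$ should carry half the multiplicity of the interior ones is correct, whereas the paper's folding step applies the factor $4$ uniformly, so your more careful bookkeeping would actually correct a factor-of-two overcount on the axis terms of the stated series.
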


\begin{proof}
The proof is provided in Appendix \ref{app:theorem2}.
\end{proof}
In the next proposition, we quantify the error when the summation in the above infinite series is truncated after $j \times l$ terms using the asymptotic notation $O(.)$, similar to the one dimension case.
\begin{prop}
From Thm. \ref{theorem2}, for finite integers $j\geq 0$ and $l\geq 0$, the interference inside an attocell can be approximated to a closed form expression as  
\begin{align}  
\mathbb{I}_{\infty}(d_{x},d_{y})&=\hat{\mathtt{I}}_{j,l}(d_{x},d_{y}) \nonumber\\
&\ \ \ +O((\sqrt{j^{2}+l^{2}}+1)^{\beta-2.5}e^{\frac{-2\pi h(\sqrt{j^{2}+l^{2}}+1)}{a}}),
\label{eqn:newprop12d}
\end{align}
where
\begin{align*}
\hat{\mathtt{I}}_{j,l}(d_{x},d_{y})\triangleq& \frac{h^{2-2\beta}\pi}{a^{2}(\beta-1)} -\frac{1}{(d_{x}^{2} + d_{y}^{2} + h^{2})^{\beta}} \nonumber\\
&+ \sum_{(w,k)\in\mathbb{A}}g(w,k),
\end{align*} 
and the set $\mathbb{A} \triangleq (\mathbb{Z}^{2}\cap([0,j]\times[0,l]))\setminus\{(0,0)\}$ over the set of integers $\mathbb{Z}^{2}$.
\label{prop2}
\end{prop}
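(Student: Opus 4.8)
The plan is to mirror the argument behind Proposition~\ref{prop1} (carried out in Appendix~\ref{app:theorem1a}). Because the Bessel factor in $g(w,k)$ decays exponentially and both cosines are bounded by one, the double series in Theorem~\ref{theorem2} converges absolutely; hence $\hat{\mathtt{I}}_{j,l}(d_{x},d_{y})$ is well defined and the truncation error is \emph{exactly} the tail
\[
\mathbb{I}_{\infty}(d_{x},d_{y})-\hat{\mathtt{I}}_{j,l}(d_{x},d_{y})=\sum_{\substack{w\geq0,\ k\geq0\\(w,k)\notin\mathbb{A}}}g(w,k).
\]
It therefore suffices to bound $\sum_{(w,k)\notin\mathbb{A}}\lvert g(w,k)\rvert$ by an exponentially small quantity. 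Writing $r=\sqrt{w^{2}+k^{2}}$ and $\lambda=\tfrac{2\pi h}{a}$, the two ingredients are (i) a clean bound on a single term $g(w,k)$ for $r$ large, and (ii) an estimate of the lattice tail.

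For (i) I would invoke the large-argument asymptotics of the modified Bessel function, $\mathbb{K}_{\nu}(y)=\sqrt{\tfrac{\pi}{2y}}\,e^{-y}\bigl(1+O(y^{-1})\bigr)$ as $y\to\infty$, applied with $y=\lambda r$. Substituting this into $g(w,k)$, collecting the power of $r$ produced by the prefactor $\bigl(h/(2\pi r)\bigr)^{1-\beta}$ together with the $y^{-1/2}$ from the Bessel asymptotics, and using $\lvert\cos(\tfrac{2\pi wd_{x}}{a})\cos(\tfrac{2\pi kd_{y}}{a})\rvert\leq1$, yields a bound of the shape $\lvert g(w,k)\rvert\leq C\,r^{p}\,e^{-\lambda r}$ with $C=C(h,a,\beta)$ and $p$ a fixed real power, valid once $r$ exceeds a threshold of order $\lceil a(\beta-5/2)/(2\pi h)\rceil$ beyond which $t\mapsto t^{p}e^{-\lambda t}$ is monotonically decreasing. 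This threshold is the two-dimensional counterpart of the condition $k\geq\lceil a(\beta-2)/(2\pi h)\rceil$ noted after Proposition~\ref{prop1}, and it is what forces the hypothesis that $j$ and $l$ be at least moderately large for the bound to be meaningful.

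For (ii) I would split the positive-quadrant complement of $\mathbb{A}$ as $\{w>j\}\cup\{k>l\}$ and, on each strip, dominate the sum of $r^{p}e^{-\lambda r}$ by the corresponding integral taken over concentric shells, using the monotonicity from step (i) so that the dominant contribution comes from the lattice points of the complement closest to the corner $(j,l)$ of the box, which sit at Euclidean distance $\asymp\sqrt{j^{2}+l^{2}}+1$. This pins the exponential rate $e^{-2\pi h(\sqrt{j^{2}+l^{2}}+1)/a}$; absorbing the shell-counting and the resulting geometric series into the implied constant and the polynomial factor gives the order stated in \eqref{eqn:newprop12d}. Collecting the finitely many constants depending only on $h,a,\beta$ into the $O(\cdot)$ completes the proof, exactly as in the one-dimensional case.

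I expect step (ii) to be the main obstacle: in contrast to Proposition~\ref{prop1}, the tail here is a \emph{two-dimensional} lattice sum whose omitted region has a rectangular, not radial, boundary, so turning the rectangular cutoff into the radial quantity $\sqrt{j^{2}+l^{2}}$ and tracking the exact polynomial order in the error term requires the most care. Since the cosine factors are used only through the trivial bound $\lvert\cdot\rvert\leq1$, no cancellation is exploited; consequently the estimate is a safe upper bound, which is all that \eqref{eqn:newprop12d} requires, and $\hat{\mathtt{I}}_{j,l}$ is in fact an approximation whose error decays exponentially in $\sqrt{j^{2}+l^{2}}$.
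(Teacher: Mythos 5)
Your proposal follows the paper's proof of Prop.~\ref{prop2} in every essential respect: you write the error exactly as the tail $\sum_{(w,k)\notin\mathbb{A}\cup\{(0,0)\}}g(w,k)$, pass to the radial variable $r=\sqrt{w^{2}+k^{2}}$, invoke the large-argument asymptotics of $\mathbb{K}_{\beta-1}$ together with the trivial bound on the cosines to get $|g(w,k)|\leq C\,r^{\beta-2.5}e^{-2\pi hr/a}$, and then compare the resulting radial sum with an integral, i.e.\ with an incomplete gamma function, exactly as in the one-dimensional case. The threshold $r\gtrsim a(\beta-2.5)/(2\pi h)$ beyond which $t\mapsto t^{\beta-2.5}e^{-2\pi ht/a}$ is decreasing is also the same monotonicity condition the paper records in its footnote. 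Up to this point there is nothing to object to.

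The step you yourself flag as ``the main obstacle'' is, however, where your argument does not close. You claim that after splitting the complement of $\mathbb{A}$ into the strips $\{w>j\}\cup\{k>l\}$, the dominant contribution comes from the lattice points ``closest to the corner $(j,l)$,'' at Euclidean distance $\asymp\sqrt{j^{2}+l^{2}}+1$. For a summand that is \emph{decreasing} in $r$ (which is precisely what your monotonicity threshold guarantees), the dominant omitted terms are those of \emph{smallest} radius in the complement, and these sit on the axes: $(j+1,0)$ and $(0,l+1)$, at radius $\min(j,l)+1$, which is strictly smaller than $\sqrt{j^{2}+l^{2}}+1$ whenever $j,l\geq1$. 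The strip decomposition therefore yields the exponential rate $e^{-2\pi h(\min(j,l)+1)/a}$, not the radial rate appearing in \eqref{eqn:newprop12d}; for $j=l=1$ and $h/a=5$ this is the difference between $e^{-20\pi}$ and $e^{-(1+\sqrt{2})10\pi}$. To obtain the stated rate one must either replace the rectangular index set $\mathbb{A}$ by the quarter-disc $\{(w,k):w^{2}+k^{2}\leq j^{2}+l^{2}\}$ before estimating the tail, or accept the weaker exponent. (The paper's own proof performs the same substitution of the box complement by the radial region $\{r\geq\sqrt{j^{2}+l^{2}}+1\}$ without justification, so you have faithfully reproduced the argument including its weakest link; but since you explicitly isolated this step as the one requiring care, you should resolve it rather than assert the corner distance.)
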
  
\begin{proof}
The proof is provided in Appendix \ref{app:theorem2a}.
\end{proof}

We give similar theoretical and numerical validations to the two dimension model, as that of the one dimension model. \\ \par

Since in practice, the number of LEDs are finite, we also look at $ \mathbb{I}_{n}(d_{x},d_{y})$, i.e., looking at interference by a finite number of LEDs in \eqref{eqn:interf2a}. In Fig. \ref{two5h2}, we observe that as the number of interferers $n$ increases, the interference $\mathbb{I}_{n}(d_{x},d_{y})$, saturates to a constant value which is $\mathbb{I}_{\infty}(d_{x},d_{y})$. So, the approximation results in Prop. \ref{prop2} hold true for finite number of LEDs as well, even though the results are derived for an infinite plane. \\ \par

We now try to understand the interference characterization in Prop. \ref{prop2} by taking a few theoretical examples and further validation through numerical simulations. Firstly, in Prop. \ref{prop2}, the interference for any position $(d_{x},d_{y})$ of the user inside the attocell, always has a constant term given as 
\begin{equation*} 
%\frac{h^{1-2\beta}\sqrt{\pi}\Gamma(\beta-0.5)}{a\Gamma(\beta)}.
\frac{h^{2-2\beta}\pi}{a^{2}(\beta-1)}.
\end{equation*}
This term represents the average spatial interference seen at all locations.\\ \par

%%TWO5H2%%
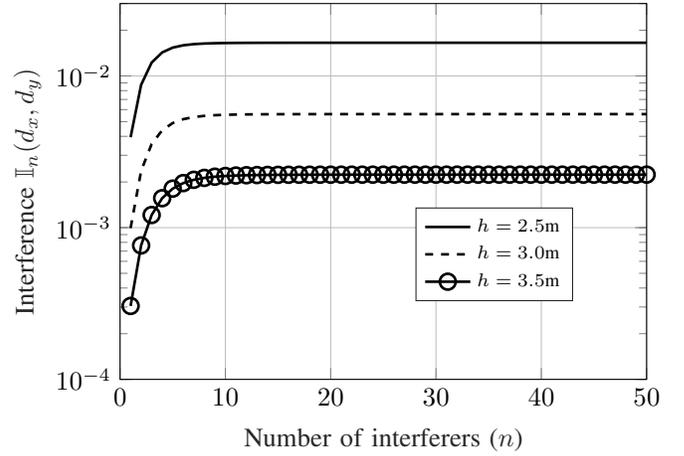
\begin{figure}[ht]
\centering
\definecolor{mycolor1}{rgb}{0.00000,0.44700,0.74100}%
\definecolor{mycolor2}{rgb}{0.85000,0.32500,0.09800}%
\definecolor{mycolor3}{rgb}{0.92900,0.69400,0.12500}%
\begin{tikzpicture}
\begin{axis}[%
width=7cm,
height=5cm,
scale only axis,
xmin=0,
xmax=50,
xlabel style={font=\color{white!15!black}},
xlabel={Number of interferers ($n$)},
ymode=log,
ymin=1e-04,
ymax=0.03,
ylabel style={font=\color{white!15!black}},
ylabel={Interference $\mathbb{I}_{n}(d_{x},d_{y})$},
axis background/.style={fill=white},
title style={font=\bfseries},
title={},
xmajorgrids,
ymajorgrids,
legend style={font=\fontsize{7}{5}\selectfont,at={(0.5610,0.208)}, anchor=south west, legend cell align=left, align=left, draw=white!15!black}
]
\addplot [color=black, line width=1.0pt]
  table[row sep=crcr]{%
1	0.003944860204854\\
2	0.00873198995777312\\
3	0.0122405476424056\\
4	0.0142861369481787\\
5	0.0153659971558498\\
6	0.0159189137177137\\
7	0.0162035277779052\\
8	0.0163534236983724\\
9	0.0164348168659612\\
10	0.0164804945522309\\
11	0.0165069814242476\\
12	0.0165228267402207\\
13	0.0165325858349299\\
14	0.0165387600358441\\
15	0.016542763661778\\
16	0.0165454190135343\\
17	0.0165472168617521\\
18	0.0165484573409729\\
19	0.0165493282023934\\
20	0.0165499493841225\\
21	0.0165503990070681\\
22	0.0165507288788616\\
23	0.0165509739352266\\
24	0.0165511581026151\\
25	0.016551298005531\\
26	0.0165514053511619\\
27	0.0165514884878457\\
28	0.0165515534389018\\
29	0.0165516045980881\\
30	0.0165516452036148\\
31	0.0165516776652303\\
32	0.016551703792559\\
33	0.0165517249562531\\
34	0.0165517422029089\\
35	0.0165517563378142\\
36	0.0165517679850855\\
37	0.0165517776317549\\
38	0.0165517856603612\\
39	0.0165517923732321\\
40	0.0165517980107116\\
41	0.0165518027649412\\
42	0.0165518067903489\\
43	0.0165518102116853\\
44	0.0165518131302186\\
45	0.0165518156285379\\
46	0.0165518177742991\\
47	0.0165518196231621\\
48	0.0165518212211052\\
49	0.0165518226062584\\
50	0.0165518238103626\\
51	0.0165518248599358\\
52	0.01655182577721\\
53	0.0165518265808862\\
54	0.0165518272867464\\
55	0.0165518279081503\\
56	0.0165518284564416\\
57	0.0165518289412798\\
58	0.016551829370913\\
59	0.0165518297524037\\
60	0.0165518300918148\\
61	0.0165518303943643\\
62	0.0165518306645539\\
63	0.0165518309062765\\
64	0.0165518311229065\\
65	0.0165518313173751\\
66	0.0165518314922341\\
67	0.0165518316497098\\
68	0.0165518317917481\\
69	0.0165518319200537\\
70	0.0165518320361222\\
71	0.0165518321412687\\
72	0.0165518322366515\\
73	0.0165518323232926\\
74	0.0165518324020952\\
75	0.0165518324738591\\
76	0.0165518325392933\\
77	0.0165518325990276\\
78	0.0165518326536221\\
79	0.016551832703576\\
80	0.0165518327493343\\
81	0.0165518327912948\\
82	0.0165518328298134\\
83	0.0165518328652088\\
84	0.0165518328977671\\
85	0.0165518329277451\\
86	0.0165518329553737\\
87	0.0165518329808609\\
88	0.0165518330043943\\
89	0.0165518330261432\\
90	0.0165518330462604\\
91	0.0165518330648843\\
92	0.0165518330821403\\
93	0.0165518330981419\\
94	0.0165518331129922\\
95	0.016551833126785\\
96	0.0165518331396053\\
97	0.0165518331515308\\
98	0.016551833162632\\
99	0.0165518331729735\\
100	0.016551833182614\\
};
\addlegendentry{$h=2.5$m}

\addplot [color=black, dashed, line width=1.0pt]
  table[row sep=crcr]{%
1	0.00099393160440774\\
2	0.00236673024956608\\
3	0.00355822190812618\\
4	0.00438314246088818\\
5	0.00489124755511692\\
6	0.00518747800852782\\
7	0.0053569770849244\\
8	0.00545416391513168\\
9	0.00551065147386661\\
10	0.00554413171103763\\
11	0.00556442127741532\\
12	0.00557700292297608\\
13	0.00558498376854923\\
14	0.00559015772595595\\
15	0.0055935818650636\\
16	0.00559589221301877\\
17	0.00559747943251683\\
18	0.00559858830945688\\
19	0.00559937517222347\\
20	0.00559994167421762\\
21	0.00560035504992836\\
22	0.00560066048578792\\
23	0.00560088880983723\\
24	0.00560106135272274\\
25	0.00560119307009326\\
26	0.00560129457885406\\
27	0.00560137350427696\\
28	0.00560143538336324\\
29	0.00560148427844671\\
30	0.00560152319906043\\
31	0.00560155439532344\\
32	0.00560157956420135\\
33	0.00560159999601149\\
34	0.00560161667950249\\
35	0.0056016303779209\\
36	0.00560164168456056\\
37	0.00560165106366819\\
38	0.00560165888080522\\
39	0.00560166542555452\\
40	0.00560167092862447\\
41	0.00560167557482158\\
42	0.00560167951295425\\
43	0.00560168286344105\\
44	0.00560168572419072\\
45	0.00560168817517255\\
46	0.00560169028198839\\
47	0.00560169209867909\\
48	0.00560169366994056\\
49	0.00560169503288235\\
50	0.00560169621842934\\
51	0.00560169725244433\\
52	0.00560169815663079\\
53	0.00560169894926197\\
54	0.00560169964577193\\
55	0.00560170025923671\\
56	0.00560170080076735\\
57	0.00560170127983224\\
58	0.00560170170452216\\
59	0.00560170208176918\\
60	0.00560170241752799\\
61	0.0056017027169264\\
62	0.0056017029843909\\
63	0.00560170322375145\\
64	0.0056017034383295\\
65	0.00560170363101171\\
66	0.00560170380431232\\
67	0.00560170396042566\\
68	0.00560170410127076\\
69	0.00560170422852912\\
70	0.00560170434367694\\
71	0.00560170444801255\\
72	0.00560170454267989\\
73	0.00560170462868864\\
74	0.00560170470693148\\
75	0.00560170477819894\\
76	0.00560170484319224\\
77	0.00560170490253434\\
78	0.00560170495677951\\
79	0.00560170500642164\\
80	0.00560170505190147\\
81	0.00560170509361287\\
82	0.00560170513190829\\
83	0.00560170516710353\\
84	0.00560170519948195\\
85	0.00560170522929814\\
86	0.00560170525678111\\
87	0.00560170528213706\\
88	0.00560170530555199\\
89	0.00560170532719376\\
90	0.00560170534721411\\
91	0.00560170536575034\\
92	0.00560170538292678\\
93	0.00560170539885621\\
94	0.00560170541364098\\
95	0.00560170542737409\\
96	0.00560170544014012\\
97	0.00560170545201609\\
98	0.00560170546307218\\
99	0.00560170547337245\\
100	0.00560170548297532\\
};
\addlegendentry{$h=3.0$m}

\addplot [color=black, line width=1.0pt, mark=o, mark size=3pt]
  table[row sep=crcr]{%
1	0.000304738487122572\\
2	0.000763550624650119\\
3	0.00121149961334361\\
4	0.00156348840972001\\
5	0.00180809582329984\\
6	0.0019668926453536\\
7	0.00206653303732045\\
8	0.00212826657834849\\
9	0.00216653474579435\\
10	0.00219045936156176\\
11	0.00220561377169303\\
12	0.00221536328024466\\
13	0.00222174078794451\\
14	0.00222598360231445\\
15	0.00222885359532904\\
16	0.00223082643211871\\
17	0.00223220355924256\\
18	0.0022331789639048\\
19	0.00223387939971175\\
20	0.00223438893345123\\
21	0.00223476413012742\\
22	0.00223504357967293\\
23	0.00223525395854302\\
24	0.00223541394041038\\
25	0.00223553675349279\\
26	0.00223563187484378\\
27	0.00223570616703976\\
28	0.00223576465000396\\
29	0.00223581103114609\\
30	0.00223584807351737\\
31	0.00223587785416657\\
32	0.00223590194726277\\
33	0.0022359215551333\\
34	0.00223593760288561\\
35	0.00223595080732628\\
36	0.00223596172757488\\
37	0.00223597080252802\\
38	0.00223597837879918\\
39	0.0022359847317055\\
40	0.00223599008114046\\
41	0.00223599460365751\\
42	0.00223599844172724\\
43	0.00223600171087183\\
44	0.00223600450519535\\
45	0.00223600690169444\\
46	0.00223600896363602\\
47	0.00223601074321761\\
48	0.00223601228367285\\
49	0.00223601362094574\\
50	0.00223601478502808\\
51	0.00223601580103239\\
52	0.00223601669005635\\
53	0.0022360174698821\\
%54	0.00223601815554403\\
55	0.00223601875979173\\
%56	0.00223601929346864\\
57	0.00223601976582295\\
%58	0.00223602018476372\\
59	0.00223602055707258\\
%60	0.00223602088857918\\
61	0.00223602118430721\\
%62	0.0022360214485961\\
63	0.0022360216852029\\
%64	0.00223602189738773\\
65	0.00223602208798562\\
%66	0.00223602225946716\\
67	0.00223602241398975\\
%68	0.00223602255344104\\
69	0.00223602267947583\\
%70	0.00223602279354754\\
71	0.00223602289693504\\
%72	0.00223602299076559\\
73	0.00223602307603454\\
%74	0.00223602315362224\\
75	0.00223602322430861\\
%76	0.00223602328878569\\
77	0.00223602334766849\\
%78	0.0022360234015044\\
79	0.00223602345078133\\
%80	0.00223602349593482\\
81	0.0022360235373542\\
%82	0.00223602357538794\\
83	0.0022360236103484\\
%84	0.00223602364251592\\
85	0.0022360236721424\\
%86	0.00223602369945449\\
87	0.00223602372465638\\
%88	0.00223602374793222\\
89	0.00223602376944829\\
%90	0.00223602378935492\\
91	0.00223602380778814\\
%92	0.00223602382487118\\
93	0.00223602384071582\\
%94	0.00223602385542354\\
95	0.00223602386908656\\
%96	0.00223602388178879\\
97	0.00223602389360662\\
%98	0.00223602390460969\\
99	0.00223602391486152\\
%100	0.00223602392442014\\
};
\addlegendentry{$h=3.5$m}

\end{axis}
\end{tikzpicture}%
\caption{\textit{(Two Dimension Model)} Here the variation of interference $\mathbb{I}_{n}(d_{x},d_{y})$, with respect to the number of interferers $n$, is drawn for different height $h$ of the LED installation. We consider $a=0.5$m, the half-power-semi-angle (HPSA) $\theta_{h}$ of the LED as $\frac{\pi}{3}$ radians with $d_{x}=d_{y}=0$.}
\label{two5h2}
\end{figure} 

As in the one dimension model, the approximation error depends on the ratio $\frac{h}{a}$. For larger values of $\frac{h}{a}$, 
we can choose $j=l=0$ leading to 
\begin{equation*}
\mathbb{I}_{\infty}(d_{x},d_{y}) \approx \hat {\mathtt{I}}_{0,0}(d_{x},d_{y}) =  \frac{h^{2-2\beta}\pi}{a^{2}(\beta-1)} -\frac{1}{(d_{x}^{2}+d_{y}^{2}+h^{2})^{\beta}},
%\label{eqn:newsa}
\end{equation*}
which can be verified from Fig. \ref{space212}. We see that all the terms from $w=k=1$ have negligible contribution. \\ \par 

When $\frac{h}{a}$ is not large, a few more terms $(w,k)$ are necessary to improve the approximation accuracy. For example, in Fig. \ref{space2s2}, when we consider $h=2.5$m and $a=0.5$m, leading to $\frac{h}{a}=5$; $w=k=0$ and $w=k=1$ are significant, with an error bound, similar to that in one dimension, for $w>1$ and $k>1$ as $O(e^{-24\pi})$. So, $j=l=1$ or $\hat{\mathtt{I}}_{1,1}(d_{x},d_{y})$ is a good approximation in this case. Further, in most practical cases, the ratio $\frac{h}{a}$ varies between $2.5$ to $5$. So, the above approximation to $\hat{\mathtt{I}}_{1,1}(d_{x},d_{y})$ i.e.  
\begin{align*}
\mathbb{I}_{\infty}(d_{x},d_{y}) & \approx \hat {\mathtt{I}}_{1,1}(d_{x},d_{y}) \nonumber\\
&= \frac{h^{2-2\beta}\pi}{a^{2}(\beta-1)} -\frac{1}{(d_{x}^{2}+d_{y}^{2}+h^{2})^{\beta}}+g(1,1),
\end{align*}
can be extended in general to this practically seen range\footnote{For lower values of $\frac{h}{a}$ i.e. $<2.5$, higher values of $(j,l)$ may have to be considered to improve the approximation accuracy. Also, from the proof of Prop. \ref{prop2},  $(j,l)$ should be chosen such that  $\sqrt{j^{2}+l^{2}} \geq \lceil\frac{a (\beta-2.5)}{2\pi h}\rceil$ for a good approximation.} of $\frac{h}{a}$ because we still have a theoretical asymptotic error bound on $w>1$ and $k>1$ as $O(e^{-12\pi})$. 

%The theoretical error bound for $\sqrt{j^{2}+l^{2}}>1$ will be $O((\sqrt{j^{2}+l^{2}}+1)e^{-10\pi (\sqrt{j^{2}+l^{2}}+1)})$. \\ \par 
%%SPACE212%%%%%%%%%%%%%%%%%%%%
\begin{figure}[ht]
\centering
\definecolor{mycolor1}{rgb}{0.00000,0.44700,0.74100}%
\begin{tikzpicture}
\begin{axis}[%
width=7cm,
height=5cm,
scale only axis,
xmin=0,
xmax=2,
xlabel style={font=\color{white!15!black}},
xlabel={Index ($w=k$) of individual terms},
ymode=log,
ymin=1e-39,
ymax=1,
yminorticks=true,
ylabel style={font=\color{white!15!black}},
ylabel={$|g(w,k)|$},
axis background/.style={fill=white},
title style={font=\bfseries},
title={},
xmajorgrids,
ymajorgrids,
yminorgrids
]
\addplot[ycomb, mark size=2.5pt, mark=*, mark options={solid, fill=black, black}, forget plot] table[row sep=crcr] {%
0	0.0171572846788051\\
1	7.9017911796913e-18\\
2	2.15817588260631e-36\\
3	2.96523518214323e-55\\
4	3.05963922695074e-74\\
5	2.69539479710128e-93\\
};
\end{axis}
\end{tikzpicture}%
\caption{\textit{(Two Dimension Model)} (Large $\frac{h}{a}$ case) This graph shows the magnitude of the individual terms of $|g(w,k)|$ for a height $h$ of the LED =$2.5$m and $a=0.2$m. The half-power-semi-angle (HPSA) $\theta_{h}=\frac{\pi}{3}$ radians and $d_{x}=d_{y}=0$.}  
\label{space212}
\end{figure} 

%%SPACE2S2%%%%%%%%%%%%%%%%%%%%
\begin{figure}[ht]
\centering
\definecolor{mycolor1}{rgb}{0.00000,0.44700,0.74100}%
\begin{tikzpicture} 

\begin{axis}[%
width=7cm,
height=5cm,
scale only axis,
xmin=0,
xmax=3,
xlabel style={font=\color{white!15!black}},
xlabel={Index ($w=k$) of individual terms}, 
ymode=log,
ymin=1e-10,
ymax=0.01,
yminorticks=true,
ylabel style={font=\color{white!15!black}},
ylabel={$|g(w,k)|$},
axis background/.style={fill=white},
title style={font=\bfseries},
title={},
xmajorgrids,
ymajorgrids,
yminorgrids
]
\addplot[ycomb, mark size=2.5pt, mark=*, mark options={solid, fill=black, black}, forget plot] table[row sep=crcr] {%
0	0.000350148666914389\\
1	7.35483156610943e-05\\
2	5.38299696003349e-07\\
3	2.33483201755164e-09\\
4	7.95004105527043e-12\\
5	2.35374759441881e-14\\
};
\end{axis}
\end{tikzpicture}%
\caption{\textit{(Two Dimension Model)} This graph shows the magnitude of the individual terms of $|g(w,k)|$ for a height $h$ of the LED =$2.5$m and $a=0.5$m. The half-power-semi-angle (HPSA) $\theta_{h}=\frac{\pi}{3}$ radians and $d_{x}=d_{y}=0$.}
\label{space2s2}
\end{figure}
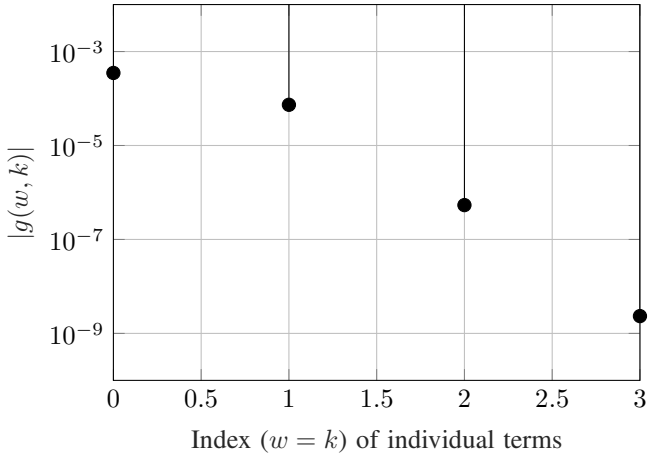 

For numerical validation, firstly, from Fig. \ref{two5a1}, we see the tightness of this approximation, for the above given range of $\frac{h}{a}$. We proceed by considering $h=2.5$m and plotting the interference w.r.t. the variation of $a$ from $0.1$m to $1$m $($i.e. $\frac{h}{a}$ in the range of $25$ to $2.5$$)$. We see that $\mathbb{I}_{n}(d_{x},d_{y})$ $($for $n=\{8, 15, 24, 35\}$$)$ and $\hat{\mathtt{I}}_{1,1}(d_{x},d_{y})$ are tightly bounded with each other, which validates our approximation. Now, from the above numerical validation for $j=l=1$, we take a given value of $a=0.5$m and proceed for further numerical validation w.r.t various system parameters $h, \theta_{h}$ and $z=\sqrt{d_{x}^{2}+d_{y}^{2}}$ in Fig. \ref{two5h1}, \ref{two5t1} and \ref{two5z1} respectively. The corresponding graphs for the approximation error $\hat{\xi}=|\mathbb{I}_{n}(d_{x},d_{y})-\hat{\mathtt{I}}_{1,1}(d_{x},d_{y})|$ are respectively shown in Fig. \ref{two5h3}, \ref{two5t3} and \ref{two5z3} for different number of interferers $n$. All the simulations are obtained using the parameter values given in Table I. \\ \par   

%%TWO5A1%%%%%%%%%%%%%%%%%%%%%%
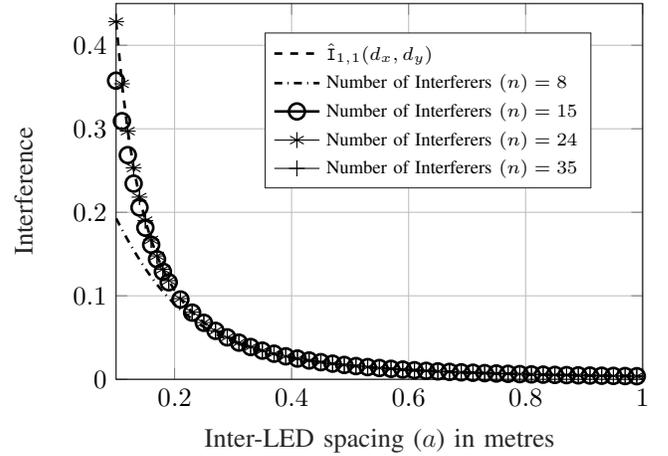
\begin{figure}[ht]
\centering
\definecolor{mycolor1}{rgb}{0.00000,0.44700,0.74100}%
\definecolor{mycolor2}{rgb}{0.85000,0.32500,0.09800}%
\definecolor{mycolor3}{rgb}{0.92900,0.69400,0.12500}%
\definecolor{mycolor4}{rgb}{0.49400,0.18400,0.55600}%
\definecolor{mycolor5}{rgb}{0.46600,0.67400,0.18800}%
\begin{tikzpicture}
\begin{axis}[%
width=7cm,
height=5cm,
scale only axis,
xmin=0.1,
xmax=1,
xlabel style={font=\color{white!15!black}},
xlabel={Inter-LED spacing ($a$) in metres},
ymin=0,
ymax=0.45,
ylabel style={font=\color{white!15!black}},
ylabel={Interference},
axis background/.style={fill=white},
title style={font=\bfseries},
title={},
xmajorgrids,
ymajorgrids,
legend style={font=\fontsize{7}{5}\selectfont,at={(0.2821,0.508)}, anchor=south west, legend cell align=left, align=left, draw=white!15!black}
]
\addplot [color=black, dashed, line width=1.0pt]
  table[row sep=crcr]{%
0.1	0.428276756970126\\
0.11	0.353833992867873\\
0.12	0.297214165673699\\
0.13	0.253150626372856\\
0.14	0.218187556821493\\
0.15	0.189981136431167\\
0.16	0.166896248191456\\
0.17	0.147764057636722\\
0.18	0.131731095854977\\
0.19	0.118162400933553\\
0.2	0.106577669242532\\
0.21	0.0966081585873302\\
0.22	0.0879669782169683\\
0.23	0.0804282159867914\\
0.24	0.0738120214184247\\
0.25	0.0679737787152202\\
0.26	0.062796136593214\\
0.27	0.0581830648244343\\
0.28	0.0540553692053733\\
0.29	0.0503472698418699\\
0.3	0.0470037641077918\\
0.31	0.0439785751685876\\
0.32	0.0412325420478639\\
0.33	0.0387323458742081\\
0.34	0.0364494944091805\\
0.35	0.0343595066914389\\
0.36	0.0324412539637443\\
0.37	0.0306764235624636\\
0.38	0.0290490802333883\\
0.39	0.0275453051525395\\
0.4	0.0261528973106329\\
0.41	0.0248611252450997\\
0.42	0.0236605196468326\\
0.43	0.0225426993277516\\
0.44	0.0215002245542421\\
0.45	0.0205264729367964\\
0.46	0.0196155339966979\\
0.47	0.0187621192652841\\
0.48	0.0179614853546062\\
0.49	0.0172093679037954\\
0.5	0.0165019246788051\\
0.51	0.0158356864040802\\
0.52	0.0152075141483035\\
0.53	0.0146145622844474\\
0.54	0.0140542462061087\\
0.55	0.0135242141147152\\
0.56	0.0130223223013439\\
0.57	0.0125466134370625\\
0.58	0.0120952974604694\\
0.59	0.0116667347133078\\
0.6	0.0112594210269538\\
0.61	0.0108719745060601\\
0.62	0.0105031237921636\\
0.63	0.0101516976208419\\
0.64	0.00981661551201041\\
0.65	0.00949687945498506\\
0.66	0.00919156646866323\\
0.67	0.00889982193312209\\
0.68	0.00862085360255774\\
0.69	0.00835392622115019\\
0.7	0.00809835667344816\\
0.71	0.00785350960947672\\
0.72	0.00761879349219252\\
0.73	0.0073936570213236\\
0.74	0.00717758589318271\\
0.75	0.00697009986085928\\
0.76	0.00677075006338225\\
0.77	0.00657911659609296\\
0.78	0.00639480629764955\\
0.79	0.00621745073186618\\
0.8	0.00604670434502747\\
0.81	0.00588224278145562\\
0.82	0.00572376134198669\\
0.83	0.00557097357166579\\
0.84	0.00542360996442912\\
0.85	0.0052814167738282\\
0.86	0.00514415491998999\\
0.87	0.00501159898401569\\
0.88	0.00488353628191483\\
0.89	0.00475976601096657\\
0.9	0.00464009846210652\\
0.91	0.00452435429256713\\
0.92	0.00441236385356124\\
0.93	0.00430396656829979\\
0.94	0.00419901035608319\\
0.95	0.00409735109860717\\
0.96	0.00399885214498396\\
0.97	0.00390338385230246\\
0.98	0.00381082315884127\\
0.99	0.00372105318730955\\
1	0.00363396287572548\\
};
\addlegendentry{$\hat{\mathtt{I}}_{1,1}(d_{x},d_{y})$}

\addplot [color=black, dashdotted, line width=1.0pt]
  table[row sep=crcr]{%
0.1	0.192717439305716\\
0.11	0.179513266746728\\
0.12	0.166788154129803\\
0.13	0.154670797154509\\
0.14	0.143244533753954\\
0.15	0.132555221488402\\
0.16	0.12261907008683\\
0.17	0.113429829467656\\
0.18	0.104965018482426\\
0.19	0.0971910851093122\\
0.2	0.0900675221816435\\
0.21	0.0835500389088892\\
0.22	0.077592923594259\\
0.23	0.0721507412109386\\
0.24	0.0671795018213884\\
0.25	0.0626374199777223\\
0.26	0.0584853663184354\\
0.27	0.0546870936496778\\
0.28	0.0512093025270683\\
0.29	0.0480215964731073\\
0.3	0.0450963646504129\\
0.31	0.0424086199246547\\
0.32	0.0399358125049774\\
0.33	0.0376576334038448\\
0.34	0.0355558174768412\\
0.35	0.0336139524828457\\
0.36	0.0318172981870547\\
0.37	0.0301526178002242\\
0.38	0.0286080228363955\\
0.39	0.0271728316442781\\
0.4	0.0258374413211581\\
0.41	0.0245932123742593\\
0.42	0.0234323652941138\\
0.43	0.0223478881041193\\
0.44	0.02133345391817\\
0.45	0.0203833475509026\\
0.46	0.0194924002661897\\
0.47	0.0186559318073737\\
0.48	0.0178696989192633\\
0.49	0.0171298496416481\\
0.5	0.0164328827233771\\
0.51	0.0157756115726134\\
0.52	0.015155132221343\\
0.53	0.0145687948398746\\
0.54	0.014014178389623\\
0.55	0.013489068049933\\
0.56	0.0129914350972578\\
0.57	0.0125194189529538\\
0.58	0.0120713111496404\\
0.59	0.0116455409958867\\
0.6	0.0112406627452929\\
0.61	0.0108553440992114\\
0.62	0.0104883558927323\\
0.63	0.0101385628314705\\
0.64	0.00980491516241229\\
0.65	0.00948644117588049\\
0.66	0.00918224044778803\\
0.67	0.00889147774197301\\
0.68	0.00861337750173418\\
0.69	0.00834721886787413\\
0.7	0.00809233116774655\\
0.71	0.00784808982612478\\
0.72	0.00761391265426475\\
0.73	0.00738925647842599\\
0.74	0.00717361407342127\\
0.75	0.00696651137056054\\
0.76	0.00676750491270401\\
0.77	0.00657617953209557\\
0.78	0.00639214622926113\\
0.79	0.00621504023356796\\
0.8	0.00604451922808823\\
0.81	0.00588026172322474\\
0.82	0.00572196556516692\\
0.83	0.00556934656667486\\
0.84	0.00542213724896166\\
0.85	0.00528008568457515\\
0.86	0.00514295443218938\\
0.87	0.00501051955511497\\
0.88	0.00488256971614096\\
0.89	0.004758905342038\\
0.9	0.00463933785169522\\
0.91	0.00452368894243798\\
0.92	0.00441178992958861\\
0.93	0.004303481134796\\
0.94	0.00419861131907379\\
0.95	0.0040970371568619\\
0.96	0.0039986227477611\\
0.97	0.00390323916289373\\
0.98	0.00381076402311668\\
0.99	0.00372108110655865\\
1	0.00363407998317669\\
};
\addlegendentry{Number of Interferers $(n) = 8$}

\addplot [color=black, line width=1.0pt, draw=none, mark=o, mark size=3pt]
  table[row sep=crcr]{%
0.1	0.357637400924452\\
0.11	0.309128881165398\\
0.12	0.268442936067229\\
0.13	0.234337307131605\\
0.14	0.205697786030066\\
0.15	0.18156851230257\\
0.16	0.161150889960059\\
0.17	0.143788112228852\\
0.18	0.128944737758266\\
0.19	0.116186119620051\\
%0.2	0.105159835123114\\
0.21	0.0955798467133086\\
%0.22	0.0872134243029505\\
0.23	0.0798705538524652\\
%0.24	0.0733954561509173\\
0.25	0.067659837475335\\
%0.26	0.0625575335187405\\
0.27	0.0580002609366649\\
%0.28	0.0539142435193949\\
0.29	0.0502375267359334\\
%0.3	0.046917833551389\\
0.31	0.0439108461560912\\
%0.32	0.0411788234613183\\
0.33	0.0386894840116805\\
%0.34	0.0364150993930333\\
0.35	0.0343317551945349\\
%0.36	0.032418745870841\\
0.37	0.0306580770511102\\
%0.38	0.0290340544312017\\
0.39	0.0275329427335879\\
%0.4	0.026142681610901\\
0.41	0.0248526480224367\\
%0.42	0.0236534566960323\\
0.43	0.0225367919290417\\
%0.44	0.0214952652802285\\
0.45	0.020522294734991\\
%0.46	0.019612001747766\\
0.47	0.0187591232227014\\
%0.48	0.0179589360216326\\
0.49	0.017207192014101\\
%0.5	0.0165000620287112\\
0.51	0.0158340873450692\\
%0.52	0.0152061375937999\\
0.53	0.0146133741189413\\
%0.54	0.0140532180104159\\
0.55	0.0135233221406775\\
%0.56	0.0130215466441491\\
0.57	0.0125459373647494\\
%0.58	0.0120947068689404\\
0.59	0.0116662176819387\\
%0.6	0.0112589674551337\\
0.61	0.0108715758150815\\
%0.62	0.010502772680075\\
0.63	0.0101513878603826\\
%0.64	0.00981634178371959\\
0.65	0.00949663720913748\\
%0.66	0.00919135181091968\\
0.67	0.00889963152976773\\
%0.68	0.00862068460198415\\
0.69	0.00835377618886677\\
%0.7	0.00809822353841054\\
0.71	0.00785339161992424\\
%0.72	0.00761868917951478\\
0.73	0.00739356517073691\\
%0.74	0.00717750552021109\\
0.75	0.00697003019278643\\
%0.76	0.00677069052498032\\
0.77	0.00657906679904674\\
%0.78	0.00639476603318611\\
0.79	0.00621741996617518\\
%0.8	0.00604668321711604\\
0.81	0.00588223160313221\\
%0.82	0.00572376059970688\\
0.83	0.00557098393000583\\
%0.84	0.00542363227097904\\
0.85	0.00528145206531735\\
%0.86	0.00514420442947628\\
0.87	0.00501166414898315\\
%0.88	0.004883618753137\\
0.89	0.00475986766200234\\
%0.9	0.00464022139930476\\
0.91	0.00452450086546295\\
%0.92	0.00441253666555335\\
0.93	0.00430416848750544\\
%0.94	0.00419924452627115\\
0.95	0.00409762095011597\\
%0.96	0.00399916140553608\\
0.97	0.00390373655763172\\
%0.98	0.00381122366305419\\
0.99	0.00372150617290713\\
%1	0.00363447336321648\\
};
\addlegendentry{Number of Interferers $(n) = 15$}

\addplot [color=black, draw=none, mark=asterisk, mark size=3pt]
  table[row sep=crcr]{%
0.1	0.428224472227695\\
0.11	0.353808927850001\\
0.12	0.297201406041115\\
0.13	0.253143789559718\\
0.14	0.218183728138665\\
0.15	0.189978908305202\\
0.16	0.166894907019389\\
0.17	0.147763225891849\\
0.18	0.131730566146824\\
0.19	0.118162055453865\\
%0.2	0.106577439033423\\
0.21	0.0966080021806876\\
%0.22	0.0879668700592437\\
0.23	0.0804281399778388\\
%0.24	0.0738119672065753\\
0.25	0.0679737395200422\\
%0.26	0.062796107896934\\
0.27	0.0581830435690644\\
%0.28	0.0540553532905918\\
0.29	0.0503472578054394\\
%0.3	0.0470037549187958\\
0.31	0.0439785680916021\\
%0.32	0.0412325365524645\\
0.33	0.0387323415738555\\
%0.34	0.0364494910194521\\
0.35	0.034359504001143\\
%0.36	0.0324412518147051\\
0.37	0.0306764218352546\\
%0.38	0.0290490788371442\\
0.39	0.0275453040176224\\
%0.4	0.0261528963833061\\
0.41	0.0248611244836239\\
%0.42	0.0236605190185864\\
0.43	0.0225426988070924\\
%0.44	0.0215002241208996\\
0.45	0.0205264725746672\\
%0.46	0.019615533692935\\
0.47	0.0187621190096019\\
%0.48	0.0179614851387585\\
0.49	0.0172093677211853\\
%0.5	0.0165019245241961\\
0.51	0.0158356862733967\\
%0.52	0.0152075140385139\\
0.53	0.0146145621935131\\
%0.54	0.0140542461330121\\
0.55	0.013524214059566\\
%0.56	0.0130223222655712\\
0.57	0.0125466134237077\\
%0.58	0.0120952974745966\\
0.59	0.0116667347625476\\
%0.6	0.0112594211222083\\
0.61	0.010871974662387\\
%0.62	0.0105031240298758\\
0.63	0.0101516979668447\\
%0.64	0.00981661600141497\\
0.65	0.009496880133025\\
%0.66	0.00919156739295141\\
0.67	0.00889982317628046\\
%0.68	0.0086208552552515\\
0.69	0.00835392839556157\\
%0.7	0.0080983595072183\\
0.71	0.007853513270138\\
%0.72	0.00761879818212453\\
0.73	0.00739366298324621\\
%0.74	0.00717759341620553\\
0.75	0.00697010928709807\\
%0.76	0.00677076179514846\\
0.77	0.00657913110365801\\
%0.78	0.00639482412757885\\
0.79	0.00621747251591194\\
%0.8	0.00604673080956417\\
0.81	0.00588227475743671\\
%0.82	0.00572379977539245\\
0.83	0.00557101953441233\\
%0.84	0.00542366466569751\\
0.85	0.00528148157177062\\
%0.86	0.00514423133376698\\
0.87	0.00501168870611105\\
%0.88	0.00488364119067216\\
0.89	0.00475988818329347\\
%0.9	0.00464024018628049\\
0.91	0.00452451808108472\\
%0.92	0.00441255245596612\\
0.93	0.00430418298392424\\
%0.94	0.00419925784663975\\
0.95	0.0040976332005677\\
%0.96	0.00399917268167919\\
0.97	0.0039037469456859\\
%0.98	0.00381123324085328\\
0.99	0.00372151501078669\\
%1	0.00363448152480015\\
};
\addlegendentry{Number of Interferers $(n) = 24$}

\addplot [color=black, draw=none, mark=+, mark size=3pt]
  table[row sep=crcr]{%
0.1	0.428275822339653\\
0.11	0.353833553740597\\
0.12	0.297213945564158\\
0.13	0.253150509858501\\
0.14	0.218187492202243\\
0.15	0.189981099120304\\
0.16	0.166896225877767\\
0.17	0.147764043872846\\
0.18	0.131731087128806\\
0.19	0.11816239526406\\
%0.2	0.106577665477069\\
0.21	0.0966081560362628\\
%0.22	0.0879669764571924\\
0.23	0.0804282147527452\\
%0.24	0.0738120205399203\\
0.25	0.0679737780811191\\
%0.26	0.0627961361296503\\
0.27	0.0581830644815249\\
%0.28	0.0540553689489288\\
0.29	0.0503472696481244\\
%0.3	0.0470037639600208\\
0.31	0.0439785750548772\\
%0.32	0.0412325419596357\\
0.33	0.0387323458052167\\
%0.34	0.0364494943548369\\
0.35	0.0343595066483322\\
%0.36	0.0324412539293298\\
0.37	0.0306764235348185\\
%0.38	0.0290490802110508\\
0.39	0.0275453051343909\\
%0.4	0.0261528972958103\\
0.41	0.024861125232936\\
%0.42	0.0236605196368047\\
0.43	0.0225426993194555\\
%0.44	0.0215002245473618\\
0.45	0.0205264729310908\\
%0.46	0.0196155339919954\\
0.47	0.0187621192614758\\
%0.48	0.0179614853516564\\
0.49	0.0172093679017614\\
%0.5	0.0165019246778641\\
0.51	0.0158356864045854\\
%0.52	0.0152075141508553\\
0.53	0.0146145622899979\\
%0.54	0.0140542462161141\\
0.55	0.0135242141313369\\
%0.56	0.0130223223277196\\
0.57	0.0125466134776614\\
%0.58	0.0120952975215498\\
0.59	0.0116667348035069\\
%0.6	0.01125942115802\\
0.61	0.0108719746937676\\
%0.62	0.0105031240574322\\
0.63	0.0101516979910938\\
%0.64	0.00981661602279651\\
0.65	0.00949688015191509\\
%0.66	0.00919156740967156\\
0.67	0.00889982319110708\\
%0.68	0.00862085526842255\\
0.69	0.0083539284072819\\
%0.7	0.00809835951766513\\
0.71	0.0078535132794648\\
%0.72	0.00761879819046518\\
0.73	0.00739366299071605\\
%0.74	0.00717759342290555\\
0.75	0.0069701092931164\\
%0.76	0.00677076180056231\\
0.77	0.00657913110853453\\
%0.78	0.00639482413197731\\
0.79	0.00621747251988438\\
%0.8	0.00604673081315658\\
0.81	0.00588227476068933\\
%0.82	0.00572379977834103\\
0.83	0.00557101953708838\\
%0.84	0.0054236646681293\\
0.85	0.00528148157398263\\
%0.86	0.00514423133578141\\
0.87	0.00501168870794735\\
%0.88	0.00488364119234786\\
0.89	0.00475988818482432\\
%0.9	0.00464024018768033\\
0.91	0.00452451808236617\\
%0.92	0.00441255245714005\\
0.93	0.00430418298500096\\
%0.94	0.00419925784762799\\
0.95	0.0040976332014757\\
%0.96	0.00399917268251405\\
0.97	0.00390374694645509\\
%0.98	0.00381123324156181\\
0.99	0.00372151501143993\\
%1	0.00363448152540289\\
};
\addlegendentry{Number of Interferers $(n) = 35$}

\end{axis}
\end{tikzpicture}%
\caption{\textit{(Two Dimension Model)} Here the variation of interference $\mathbb{I}_{n}(d_{x},d_{y})$, is drawn with respect to a linear variation of the inter-LED spacing $a$ for different number interferers $n$ in the network. The graph for the proposed interference expression $\hat{\mathtt{I}}_{1,1}(d_{x},d_{y})$ from approximation is also drawn. We consider height $h$ of the LEDs as $2.5$m, the half-power-semi-angle (HPSA) $\theta_{h}$ of the LED as $\frac{\pi}{3}$ radians and the position of the receiver photodiode (PD) as $d_{x}=d_{y}=0$.} 
\label{two5a1}
\end{figure} 

%%TWO5H1%%%%%%%%%%%%%%%%%%%%%%
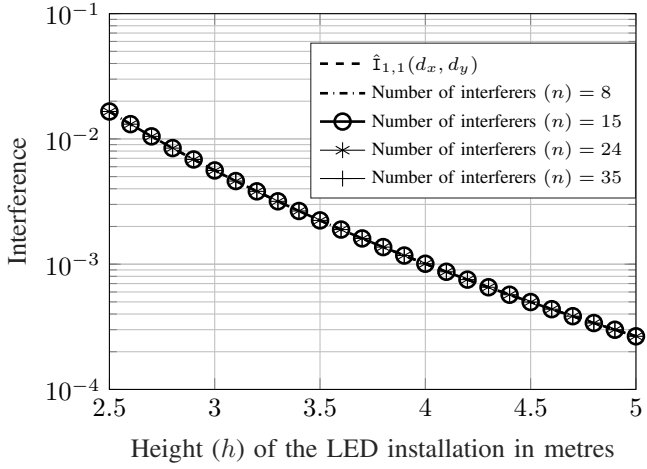
\begin{figure}[ht]
\centering
\begin{tikzpicture}
\begin{axis}[%
width=7cm,
height=5cm,
scale only axis,
xmin=2.5,
xmax=5,
xlabel style={font=\color{white!15!black}},
xlabel={Height ($h$) of the LED installation in metres},
ymode=log,
ymin=0.0001,
ymax=0.1,
yminorticks=true,
ylabel style={font=\color{white!15!black}},
ylabel={Interference},
axis background/.style={fill=white},
xmajorgrids,
ymajorgrids,
yminorgrids,
legend style={font=\fontsize{7}{5}\selectfont,at={(0.3815,0.508)}, anchor=south west, legend cell align=left, align=left, draw=white!15!black}
] 
\addplot [color=black, dashed, line width=1.0pt]
  table[row sep=crcr]{%
2.5	0.0165518333404043\\
2.6	0.0131146264956068\\
2.7	0.0104812073944158\\
2.8	0.00844395414516201\\
2.9	0.00685362905240715\\
3	0.0056017056387644\\
3.1	0.00460840276388968\\
3.2	0.00381447401409158\\
3.3	0.00317550323154716\\
3.4	0.00265789477758397\\
3.5	0.00223602407967442\\
3.6	0.00189019181899419\\
3.7	0.00160514083943355\\
3.8	0.00136897114967819\\
3.9	0.00117233929564977\\
4	0.0010078627370146\\
4.1	0.000869673299612636\\
4.2	0.000753079928950126\\
4.3	0.0006543122105183\\
4.4	0.000570324016709433\\
4.5	0.000498642232344533\\
4.6	0.000437249505422545\\
4.7	0.000384492845697551\\
4.8	0.000339011979981266\\
4.9	0.000299682897419951\\
5	0.000265573139450774\\
};
\addlegendentry{$\hat{\mathtt{I}}_{1,1}(d_{x},d_{y})$}

\addplot [color=black, dashdotted, line width=1.0pt]
  table[row sep=crcr]{%
2.5	0.0165517980107116\\
2.6	0.0131145912828143\\
2.7	0.0104811723016383\\
2.8	0.00844391917611778\\
2.9	0.0068535942109163\\
3	0.00560167092862447\\
3.1	0.00460836818884424\\
3.2	0.00381443957782103\\
3.3	0.00317546893766578\\
3.4	0.00265786062963791\\
3.5	0.00223599008114046\\
3.6	0.00189015797327852\\
3.7	0.00160510714987068\\
3.8	0.00136893761952987\\
3.9	0.00117230592810404\\
4	0.00100782953518489\\
4.1	0.000869640266536978\\
4.2	0.000753047067590351\\
4.3	0.000654279523759319\\
4.4	0.000570291507358673\\
4.5	0.000498609903131323\\
4.6	0.000437217358997667\\
4.7	0.00038446088463278\\
4.8	0.000338980206769036\\
4.9	0.000299651314473053\\
5	0.000265541749102118\\
};
\addlegendentry{Number of interferers $(n) = 8$}

\addplot [color=black, line width=1.0pt, draw=none, mark=o, mark size=3pt]
  table[row sep=crcr]{%
2.5	0.0165518238103626\\
2.6	0.0131146169869172\\
2.7	0.0104811979069293\\
2.8	0.00844394467938558\\
2.9	0.00685361960900367\\
3	0.00560169621842934\\
3.1	0.00460839336732144\\
3.2	0.00381446464198302\\
3.3	0.00317549388458416\\
3.4	0.00265788545644446\\
3.5	0.00223601478502808\\
3.6	0.00189018255150246\\
3.7	0.00160513159974918\\
3.8	0.00136896193844523\\
3.9	0.00117233011350337\\
4	0.00100785358458089\\
4.1	0.000869664177508527\\
4.2	0.000753070837783203\\
4.3	0.000654303150886653\\
4.4	0.000570314989201587\\
4.5	0.000498633237539223\\
4.6	0.000437240543888698\\
4.7	0.000384483917994114\\
4.8	0.000339003086657121\\
4.9	0.000299674039013804\\
5	0.00026556431649107\\
};
\addlegendentry{Number of interferers $(n) = 15$}

\addplot [color=black, draw=none, mark=asterisk, mark size=3pt]
  table[row sep=crcr]{%
2.5	0.016551833182614\\
2.6	0.0131146263390678\\
2.7	0.010481207238252\\
2.8	0.00844395398915979\\
2.9	0.00685362889651659\\
3	0.00560170548297532\\
3.1	0.00460840260820198\\
3.2	0.00381447385850759\\
3.3	0.00317550307607\\
3.4	0.00265789462221672\\
3.5	0.00223602392442014\\
3.6	0.00189019166385613\\
3.7	0.0016051406844148\\
3.8	0.00136897099478188\\
3.9	0.00117233914087906\\
4	0.00100786258237264\\
4.1	0.000869673145102489\\
4.2	0.000753079774574888\\
4.3	0.000654312056281042\\
4.4	0.000570323862613304\\
4.5	0.000498642078392522\\
4.6	0.00043724935161775\\
4.7	0.000384492692042977\\
4.8	0.000339011826479957\\
4.9	0.000299682744074909\\
5	0.000265572986265005\\
};
\addlegendentry{Number of interferers $(n) = 24$}

\addplot [color=black, draw=none, mark=+, mark size=3pt]
  table[row sep=crcr]{%
2.5	0.0165518333363651\\
2.6	0.013114626492735\\
2.7	0.0104812073918322\\
2.8	0.00844395414264954\\
2.9	0.00685362904991322\\
3	0.00560170563627497\\
3.1	0.00460840276140183\\
3.2	0.00381447401160437\\
3.3	0.00317550322906051\\
3.4	0.00265789477509781\\
3.5	0.00223602407718878\\
3.6	0.00189019181650897\\
3.7	0.00160514083694876\\
3.8	0.00136897114719384\\
3.9	0.00117233929316602\\
4	0.0010078627345314\\
4.1	0.000869673297129976\\
4.2	0.000753079926467985\\
4.3	0.000654312208036699\\
4.4	0.000570324014228418\\
4.5	0.000498642229864104\\
4.6	0.000437249502942759\\
4.7	0.00038449284321835\\
4.8	0.000339011977502705\\
4.9	0.000299682894942015\\
5	0.000265573136973497\\
};
\addlegendentry{Number of interferers $(n) = 35$}
\end{axis}
\end{tikzpicture}%
\caption{\textit{(Two Dimension Model)} Here the variation of interference $\mathbb{I}_{n}(d_{x},d_{y})$, is drawn with respect to a linear variation of the height $h$ of installation of the LED for different number interferers $n$ in the network. The graph for the proposed interference expression $\hat{\mathtt{I}}_{1,1}(d_{x},d_{y})$ from approximation is also drawn. We consider $a=0.5$m, the half-power-semi-angle (HPSA) $\theta_{h}$ of the LED as $\frac{\pi}{3}$ radians and $d_{x}=d_{y}=0$.}
\label{two5h1}
\end{figure} 

%%TWO5T1%%%%%%%%%%%%%%%%%%%%%%
\begin{figure}[ht]
\centering
\begin{tikzpicture}
\begin{axis}[%
width=7cm,
height=5cm,
scale only axis,
xmin=0.2,
xmax=1.6,
xlabel style={font=\color{white!15!black}},
xlabel={HPSA ($\theta_{h}$) of the LED in radians},
ymode=log,
ymin=1e-18,
ymax=1,
yminorticks=true,
ylabel style={font=\color{white!15!black}},
ylabel={Interference},
axis background/.style={fill=white},
title style={font=\bfseries},
title={},
xmajorgrids,
ymajorgrids,
yminorgrids,
legend style={font=\fontsize{7}{5}\selectfont,at={(0.3815,0.308)}, anchor=south west, legend cell align=left, align=left, draw=white!15!black}
]
\addplot [color=black, dashed, line width=1.0pt]
  table[row sep=crcr]{%
0.261799387799149	1.47103372518953e-18\\
0.278161849536596	1.1481862342124e-16\\
0.294524311274043	4.47686452836803e-15\\
0.31088677301149	1.00422037101669e-13\\
0.327249234748937	1.44218045020432e-12\\
0.343611696486384	1.43960736136347e-11\\
0.35997415822383	1.06490426941754e-10\\
0.376336619961277	6.13964874149571e-10\\
0.392699081698724	2.87222632432445e-09\\
0.409061543436171	1.12617412678585e-08\\
0.425424005173618	3.79958078577056e-08\\
0.441786466911065	1.12713439498407e-07\\
0.458148928648511	2.99267993609752e-07\\
0.474511390385958	7.21809526965353e-07\\
0.490873852123405	1.60121407551202e-06\\
0.507236313860852	3.30125625775653e-06\\
%0.523598775598299	6.38202478944077e-06\\
0.539961237335746	1.16564059777336e-05\\
%0.556323699073193	2.02443988685455e-05\\
0.572686160810639	3.36197866056146e-05\\
%0.589048622548086	5.36442699632957e-05\\
0.605411084285533	8.25853981300831e-05\\
%0.62177354602298	0.000123116240740672\\
0.638136007760427	0.000178296449490818\\
%0.654498469497873	0.000251535915948394\\
0.67086093123532	0.000346543477970743\\
0.687223392972767	0.000467263976247277\\
0.703585854710214	0.000617807402665058\\
0.719948316447661	0.000802373953963926\\
%0.736310778185108	0.00102517857957874\\
0.752673239922555	0.00129037817654013\\
0.769035701660001	0.00160200402039324\\
0.785398163397448	0.00196390140249179\\
0.801760625134895	0.00237967782894044\\
%0.818123086872342	0.00285266056689452\\
0.834485548609789	0.00338586382671195\\
0.850848010347236	0.00398196545755304\\
0.867210472084682	0.00464329271335614\\
0.883572933822129	0.00537181641236949\\
%0.899935395559576	0.00616915265857336\\
0.916297857297023	0.00703657120673012\\
0.93266031903447	0.00797500952276171\\
0.949022780771917	0.00898509160608491\\
0.965385242509364	0.0100671506897093\\
%0.98174770424681	0.0112212550079056\\
0.998110165984257	0.0124472359121995\\
1.0144726277217	0.0137447177180561\\
1.03083508945915	0.0151131487721609\\
1.0471975511966	0.0165518333404043\\
%1.06356001293404	0.0180599640276328\\
1.07992247467149	0.0196366545512726\\
1.09628493640894	0.0212809728026119\\
1.11264739814639	0.0229919742435761\\
1.12900985988383	0.0247687358062902\\
%1.14537232162128	0.0266103905921342\\
1.16173478335873	0.0285161638128047\\
1.17809724509617	0.0304854105870895\\
1.19445970683362	0.0325176564161934\\
1.21082216857107	0.0346126414252318\\
%1.22718463030851	0.0367703698043021\\
1.24354709204596	0.0389911663463146\\
1.25990955378341	0.041275742615426\\
1.27627201552085	0.0436252761726089\\
1.2926344772583	0.0460415075622543\\
%1.30899693899575	0.0485268616291923\\
1.32535940073319	0.0510846025188619\\
1.34172186247064	0.053719035963288\\
1.35808432420809	0.0564357791124501\\
1.37444678594553	0.0592421289002725\\
%1.39080924768298	0.0621475777991785\\
1.40717170942043	0.0651645566920352\\
1.42353417115788	0.0683095403159115\\
1.43989663289532	0.0716047565901068\\
1.45625909463277	0.0750809549567169\\
%1.47262155637022	0.0787821544267181\\
1.48898401810766	0.0827744061821478\\
1.50534647984511	0.0871636214313759\\
1.52170894158256	0.0921371844074186\\
1.53807140332	0.0980840065208722\\
%1.55443386505745	0.106106151863296\\
1.5707963267949	0.15030927546806\\
};
\addlegendentry{$\hat{\mathtt{I}}_{1,1}(d_{x},d_{y})$}

\addplot [color=black, dashdotted, line width=1.0pt]
  table[row sep=crcr]{%
0.261799387799149	1.47058900271588e-18\\
0.278161849536596	1.14803615288118e-16\\
0.294524311274043	4.47660344115122e-15\\
0.31088677301149	1.00419332127004e-13\\
0.327249234748937	1.44216184384288e-12\\
0.343611696486384	1.43959813003491e-11\\
0.35997415822383	1.06490074805095e-10\\
0.376336619961277	6.13963788715714e-10\\
0.392699081698724	2.87222351296839e-09\\
0.409061543436171	1.12617349568099e-08\\
0.425424005173618	3.79957952712007e-08\\
0.441786466911065	1.1271341674876e-07\\
0.458148928648511	2.99267955735722e-07\\
0.474511390385958	7.21809468097598e-07\\
0.490873852123405	1.6012139890158e-06\\
0.507236313860852	3.30125613575967e-06\\
%0.523598775598299	6.38202461956583e-06\\
0.539961237335746	1.16564057298448e-05\\
0.556323699073193	2.0244398450728e-05\\
0.572686160810639	3.36197857364528e-05\\
0.589048622548086	5.36442678403928e-05\\
%0.605411084285533	8.25853926496169e-05\\
0.62177354602298	0.000123116226812443\\
0.638136007760427	0.000178296415672219\\
0.654498469497873	0.000251535838173289\\
0.67086093123532	0.000346543308578172\\
%0.687223392972767	0.000467263625847577\\
0.703585854710214	0.00061780671167056\\
0.719948316447661	0.000802372649933278\\
0.736310778185108	0.00102517621596422\\
0.752673239922555	0.00129037404807235\\
%0.769035701660001	0.00160199705025715\\
0.785398163397448	0.00196388999668795\\
0.801760625134895	0.00237965969379658\\
0.818123086872342	0.00285263248657338\\
0.834485548609789	0.00338582139869689\\
%0.850848010347236	0.00398190278540666\\
0.867210472084682	0.00464320205744454\\
0.883572933822129	0.00537168780063106\\
0.899935395559576	0.00616897346069072\\
0.916297857297023	0.0070363256761599\\
%0.93266031903447	0.0079746783119235\\
0.949022780771917	0.00898465126056677\\
0.965385242509364	0.0100665731275865\\
0.98174770424681	0.01122050699019\\
0.998110165984257	0.0124462785100471\\
%1.0144726277217	0.0137435057816802\\
1.03083508945915	0.0151116304054569\\
1.0471975511966	0.0165499493841225\\
1.06356001293404	0.0180576475525587\\
1.07992247467149	0.0196338303613226\\
%1.09628493640894	0.021277556946029\\
1.11264739814639	0.0229878735285229\\
1.12900985988383	0.0247638473150611\\
1.14537232162128	0.0266046011859129\\
1.16173478335873	0.0285093496162983\\
%1.17809724509617	0.0304774364393834\\
1.19445970683362	0.0325083752706303\\
1.21082216857107	0.0346018936768183\\
1.22718463030851	0.0367579825177434\\
1.24354709204596	0.0389769523508766\\
1.25990955378341	0.0412594994237208\\
%1.27627201552085	0.0436067846681858\\
1.2926344772583	0.0460205303841197\\
1.30899693899575	0.0485031411579287\\
1.32535940073319	0.051057858335498\\
1.34172186247064	0.0536889616031371\\
%1.35808432420809	0.0564020378629708\\
1.37444678594553	0.0592043482796888\\
1.39080924768298	0.0621053421749699\\
1.40717170942043	0.0651173972076567\\
1.42353417115788	0.0682569207986941\\
%43989663289532	0.0715460532281472\\
1.45625909463277	0.0750154258579269\\
1.47262155637022	0.0787088917738026\\
1.48898401810766	0.0826922561543533\\
1.50534647984511	0.0870710382534679\\
1.52170894158256	0.0920319297471495\\
%53807140332	0.097962399170203\\
1.55443386505745	0.105960377445425\\
1.5707963267949	0.149985059637861\\
};
\addlegendentry{Number of interferers $(n) = 8$}

\addplot [color=black, line width=1.0pt, draw=none, mark=o, mark size=3pt]
  table[row sep=crcr]{%
0.261799387799149	1.47058900271588e-18\\
0.278161849536596	1.14803615288118e-16\\
0.294524311274043	4.47660344115122e-15\\
0.31088677301149	1.00419332127004e-13\\
0.327249234748937	1.44216184384288e-12\\
0.343611696486384	1.43959813003491e-11\\
0.35997415822383	1.06490074805095e-10\\
0.376336619961277	6.13963788715715e-10\\
0.392699081698724	2.87222351296842e-09\\
0.409061543436171	1.12617349568112e-08\\
%0.425424005173618	3.79957952712278e-08\\
0.441786466911065	1.12713416749169e-07\\
%0.458148928648511	2.99267955740372e-07\\
0.474511390385958	7.21809468139065e-07\\
%0.490873852123405	1.60121398931521e-06\\
0.507236313860852	3.30125613755779e-06\\
%0.523598775598299	6.38202462875058e-06\\
0.539961237335746	1.16564057705209e-05\\
%0.556323699073193	2.02443986094822e-05\\
0.572686160810639	3.36197862902369e-05\\
%0.589048622548086	5.36442695880632e-05\\
0.605411084285533	8.25853976923096e-05\\
%0.62177354602298	0.000123116240238199\\
0.638136007760427	0.00017829644892111\\
%0.654498469497873	0.000251535915306257\\
0.67086093123532	0.000346543477242935\\
%0.687223392972767	0.00046726397540074\\
0.703585854710214	0.000617807401622568\\
%0.719948316447661	0.00080237395255698\\
0.736310778185108	0.00102517857746111\\
%0.752673239922555	0.00129037817303779\\
0.769035701660001	0.00160200401425546\\
%0.785398163397448	0.00196390139149676\\
0.801760625134895	0.00237967780929411\\
%0.818123086872342	0.0028526605323487\\
0.834485548609789	0.00338586376731222\\
%0.850848010347236	0.00398196535791693\\
0.867210472084682	0.00464329255037043\\
%0.883572933822129	0.00537181615219275\\
0.899935395559576	0.00616915225282501\\
%0.916297857297023	0.0070365705877533\\
0.93266031903447	0.00797500859785292\\
%0.949022780771917	0.00898509025059644\\
0.965385242509364	0.0100671487389452\\
%0.98174770424681	0.0112212522477448\\
0.998110165984257	0.012447232068405\\
%1.0144726277217	0.0137447124442354\\
1.03083508945915	0.0151131416363659\\
%1.0471975511966	0.0165518238103626\\
1.06356001293404	0.0180599514546102\\
%1.07992247467149	0.0196366381525643\\
1.09628493640894	0.0212809516426507\\
%1.11264739814639	0.0229919472136235\\
1.12900985988383	0.0247687016026475\\
%1.14537232162128	0.0266103476927725\\
1.16173478335873	0.0285161104522465\\
%1.17809724509617	0.0304853447292994\\
1.19445970683362	0.0325175757251429\\
%1.21082216857107	0.034612543232668\\
1.22718463030851	0.0367702510740758\\
%1.24354709204596	0.0389910236344313\\
1.25990955378341	0.0412755720246603\\
%1.27627201552085	0.0436250733001576\\
1.2926344772583	0.046041267438368\\
%1.30899693899575	0.0485265786439788\\
1.32535940073319	0.0510842703334092\\
%1.34172186247064	0.0537186473997176\\
1.35808432420809	0.0564353260152265\\
%1.37444678594553	0.0592416019575915\\
1.39080924768298	0.0621469663077009\\
%1.40717170942043	0.0651638482391675\\
1.42353417115788	0.0683087203383014\\
%1.43989663289532	0.0716038077393124\\
1.45625909463277	0.0750798561503159\\
%1.47262155637022	0.0787808793560299\\
1.48898401810766	0.0827729208091241\\
%1.50534647984511	0.0871618794100446\\
1.52170894158256	0.0921351176065053\\
%1.53807140332	0.0980815012061817\\
1.55443386505745	0.106102962436711\\
%1.5707963267949	0.150300036980646\\
};
\addlegendentry{Number of interferers $(n) = 15$}

\addplot [color=black, draw=none, mark=asterisk, mark size=3pt]
  table[row sep=crcr]{%
0.261799387799149	1.47058900271588e-18\\
0.278161849536596	1.14803615288118e-16\\
0.294524311274043	4.47660344115122e-15\\
0.31088677301149	1.00419332127004e-13\\
0.327249234748937	1.44216184384288e-12\\
0.343611696486384	1.43959813003491e-11\\
0.35997415822383	1.06490074805095e-10\\
0.376336619961277	6.13963788715715e-10\\
0.392699081698724	2.87222351296842e-09\\
0.409061543436171	1.12617349568112e-08\\
0.425424005173618	3.79957952712278e-08\\
0.441786466911065	1.12713416749169e-07\\
0.458148928648511	2.99267955740372e-07\\
0.474511390385958	7.21809468139065e-07\\
0.490873852123405	1.60121398931521e-06\\
0.507236313860852	3.3012561375578e-06\\
%0.523598775598299	6.38202462875061e-06\\
0.539961237335746	1.16564057705211e-05\\
%0.556323699073193	2.02443986094842e-05\\
0.572686160810639	3.36197862902496e-05\\
0.589048622548086	5.3644269588129e-05\\
%0.605411084285533	8.25853976925924e-05\\
0.62177354602298	0.000123116240239272\\
0.638136007760427	0.000178296448924811\\
%0.654498469497873	0.000251535915317828\\
0.67086093123532	0.00034654347727635\\
0.687223392972767	0.00046726397549026\\
%0.703585854710214	0.000617807401847014\\
0.719948316447661	0.000802373953086452\\
0.736310778185108	0.00102517857864294\\
%0.752673239922555	0.00129037817554583\\
0.769035701660001	0.00160200401933714\\
0.785398163397448	0.00196390140136452\\
%0.801760625134895	0.0023796778277212\\
0.818123086872342	0.00285266056554228\\
0.834485548609789	0.00338586382515087\\
%0.850848010347236	0.00398196545565065\\
0.867210472084682	0.00464329271088878\\
0.883572933822129	0.00537181640897474\\
%0.899935395559576	0.00616915265368063\\
0.916297857297023	0.00703657119946672\\
0.93266031903447	0.00797500951182428\\
%0.949022780771917	0.00898509158957447\\
0.965385242509364	0.0100671506649178\\
0.98174770424681	0.0112212549710455\\
%0.998110165984257	0.0124472358580731\\
1.0144726277217	0.0137447176396488\\
1.03083508945915	0.0151131486601575\\
%1.0471975511966	0.016551833182614\\
1.06356001293404	0.0180599638083177\\
1.07992247467149	0.0196366542503682\\
%1.09628493640894	0.0212809723948358\\
1.11264739814639	0.022991973697415\\
1.12900985988383	0.0247687350828445\\
%1.14537232162128	0.0266103896438274\\
1.16173478335873	0.028516162581925\\
1.17809724509617	0.030485409004158\\
%1.19445970683362	0.0325176543981291\\
1.21082216857107	0.0346126388732971\\
1.22718463030851	0.0367703666017547\\
%1.24354709204596	0.0389911623557688\\
1.25990955378341	0.0412757376758079\\
1.27627201552085	0.0436252700956335\\
%1.2926344772583	0.0460415001282982\\
1.30899693899575	0.0485268525823887\\
1.32535940073319	0.0510845915611784\\
%1.34172186247064	0.0537190227472539\\
1.35808432420809	0.0564357632320597\\
1.37444678594553	0.0592421098793467\\
%1.39080924768298	0.0621475550761381\\
1.40717170942043	0.0651645295994499\\
1.42353417115788	0.0683095080521833\\
%1.43989663289532	0.0716047181788812\\
1.45625909463277	0.0750809091862007\\
1.47262155637022	0.0787820997542101\\
%1.48898401810766	0.0827743405735454\\
1.50534647984511	0.0871635420666611\\
1.52170894158256	0.0921370870682187\\
%1.53807140332	0.0980838840415732\\
1.55443386505745	0.106105988470199\\
1.5707963267949	0.150308694229602\\
};
\addlegendentry{Number of interferers $(n) = 24$}

\addplot [color=black, draw=none, mark=+, mark size=3pt]
  table[row sep=crcr]{%
0.261799387799149	1.47058900271588e-18\\
0.278161849536596	1.14803615288118e-16\\
0.294524311274043	4.47660344115122e-15\\
0.31088677301149	1.00419332127004e-13\\
0.327249234748937	1.44216184384288e-12\\
0.343611696486384	1.43959813003491e-11\\
0.35997415822383	1.06490074805095e-10\\
0.376336619961277	6.13963788715715e-10\\
0.392699081698724	2.87222351296842e-09\\
0.409061543436171	1.12617349568112e-08\\
0.425424005173618	3.79957952712278e-08\\
0.441786466911065	1.12713416749169e-07\\
0.458148928648511	2.99267955740372e-07\\
0.474511390385958	7.21809468139065e-07\\
0.490873852123405	1.60121398931521e-06\\
0.507236313860852	3.3012561375578e-06\\
%0.523598775598299	6.38202462875061e-06\\
0.539961237335746	1.16564057705211e-05\\
0.556323699073193	2.02443986094842e-05\\
0.572686160810639	3.36197862902496e-05\\
0.589048622548086	5.3644269588129e-05\\
%0.605411084285533	8.25853976925924e-05\\
0.62177354602298	0.000123116240239273\\
0.638136007760427	0.000178296448924813\\
%0.654498469497873	0.000251535915317834\\
0.67086093123532	0.000346543477276372\\
0.687223392972767	0.000467263975490336\\
0.703585854710214	0.000617807401847322\\
%0.719948316447661	0.000802373953087487\\
0.736310778185108	0.00102517857864574\\
0.752673239922555	0.00129037817555323\\
%0.769035701660001	0.00160200401935504\\
0.785398163397448	0.00196390140140433\\
0.801760625134895	0.00237967782780678\\
0.818123086872342	0.00285266056571719\\
0.834485548609789	0.00338586382549318\\
%0.850848010347236	0.003981965456293\\
0.867210472084682	0.00464329271205744\\
0.883572933822129	0.00537181641103024\\
0.899935395559576	0.00616915265718966\\
0.916297857297023	0.00703657120529402\\
%0.93266031903447	0.00797500952125706\\
0.949022780771917	0.00898509160448835\\
0.965385242509364	0.0100671506879825\\
%0.98174770424681	0.0112212550059875\\
0.998110165984257	0.012447235910002\\
1.0144726277217	0.0137447177154512\\
1.03083508945915	0.015113148768965\\
1.0471975511966	0.0165518333363651\\
%1.06356001293404	0.0180599640224033\\
1.07992247467149	0.0196366545443812\\
%1.09628493640894	0.0212809727934346\\
1.11264739814639	0.0229919742312965\\
1.12900985988383	0.0247687357898469\\
1.14537232162128	0.0266103905701674\\
%1.16173478335873	0.0285161637835871\\
1.17809724509617	0.0304854105484533\\
%1.19445970683362	0.0325176563654312\\
1.21082216857107	0.0346126413590002\\
1.22718463030851	0.0367703697184861\\
1.24354709204596	0.0389911662358922\\
1.25990955378341	0.0412757424742926\\
%1.27627201552085	0.0436252759933781\\
1.2926344772583	0.046041507336014\\
1.30899693899575	0.0485268613452274\\
%1.32535940073319	0.0510846021642883\\
1.34172186247064	0.0537190355226242\\
1.35808432420809	0.0564357785670876\\
1.37444678594553	0.0592421282277533\\
%1.39080924768298	0.0621475769722911\\
1.40717170942043	0.0651645556775981\\
1.42353417115788	0.068309539073045\\
1.43989663289532	0.071604755067809\\
%1.45625909463277	0.0750809530901851\\
1.47262155637022	0.0787821521315878\\
%1.48898401810766	0.0827744033447328\\
1.50534647984511	0.0871636178907117\\
1.52170894158256	0.092137179917689\\
1.53807140332	0.0980840006556504\\
%1.55443386505745	0.106106143661942\\
1.5707963267949	0.150309239592377\\
};
\addlegendentry{Number of interferers $(n) = 35$}
\end{axis}
\end{tikzpicture}%
\caption{\textit{(Two Dimension Model)} Here the variation of interference $\mathbb{I}_{n}(d_{x},d_{y})$, is drawn with respect to a linear variation of the half-power-semi-angle (HPSA) $\theta_{h}$ of the LED for different number interferers $n$ in the network. The graph for the proposed interference expression $\hat{\mathtt{I}}_{1,1}(d_{x},d_{y})$ from approximation is also drawn. We consider the attocell length $a=0.5$m, the height $h$ of the LED as $2.5$m and $d_{x}=d_{y}=0$.}
\label{two5t1}
\end{figure}
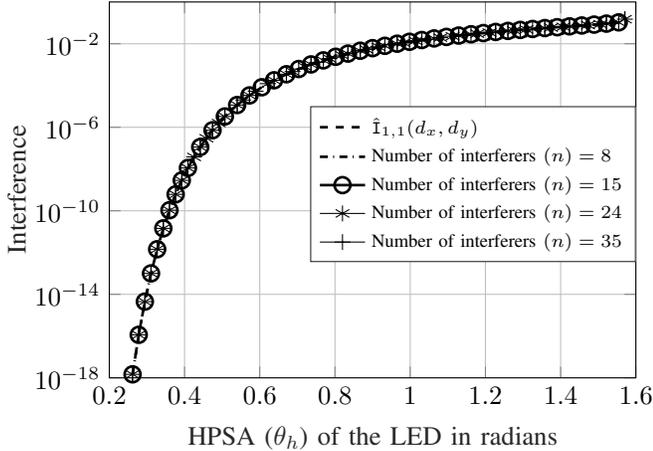 

%%TWO5Z1%%%%%%%%%%%%%%%%%%%%%%
\begin{figure}[ht]
\centering
\begin{tikzpicture}
\begin{axis}[%
width=7cm,
height=5cm,
scale only axis,
xmin=0,
xmax=0.355,
xlabel style={font=\color{white!15!black}},
xlabel={Distance ($z$) of the PD from origin in metres},
ymin=0.0165,
ymax=0.01656,
ylabel style={font=\color{white!15!black}},
ylabel={Interference},
axis background/.style={fill=white},
title style={font=\bfseries},
title={},
xmajorgrids,
ymajorgrids,
legend style={font=\fontsize{7}{5}\selectfont,at={(0.0815,0.4900)}, anchor=south west, legend cell align=left, align=left, draw=white!15!black}
]
\addplot [color=black, dashed, line width=1.0pt]
  table[row sep=crcr]{%
0	0.0165019246788051\\
0.014142135623731	0.0165020085581746\\
0.0282842712474619	0.0165022601157784\\
0.0424264068711929	0.0165026791102562\\
0.0565685424949238	0.0165032651398558\\
0.0707106781186548	0.0165040176432026\\
0.0848528137423857	0.0165049359003746\\
0.0989949493661167	0.0165060190342784\\
0.113137084989848	0.0165072660123221\\
0.127279220613579	0.0165086756483805\\
0.14142135623731	0.0165102466050449\\
0.15556349186104	0.016511977396153\\
0.169705627484771	0.0165138663895889\\
0.183847763108502	0.0165159118103459\\
0.197989898732233	0.0165181117438417\\
0.212132034355964	0.0165204641394776\\
0.226274169979695	0.0165229668144291\\
0.240416305603426	0.016525617457658\\
0.254558441227157	0.0165284136341327\\
0.268700576850888	0.0165313527892464\\
0.282842712474619	0.0165344322534176\\
0.29698484809835	0.0165376492468614\\
0.311126983722081	0.016541000884518\\
0.325269119345812	0.0165444841811228\\
0.339411254969543	0.0165480960564057\\
0.353553390593274	0.0165518333404043\\
};
\addlegendentry{$\hat{\mathtt{I}}_{1,1}(d_{x},d_{y})$}

\addplot [color=black, dashdotted, line width=1.0pt]
  table[row sep=crcr]{%
0	0.0165018894638775\\
0.014142135623731	0.0165019733430563\\
0.0282842712474619	0.0165022249000881\\
0.0424264068711929	0.0165026438936133\\
0.0565685424949238	0.0165032299218805\\
0.0707106781186548	0.0165039824235162\\
0.0848528137423857	0.0165049006786005\\
0.0989949493661167	0.0165059838100415\\
0.113137084989848	0.0165072307852497\\
0.127279220613579	0.0165086404181025\\
0.14142135623731	0.0165102113711934\\
0.15556349186104	0.0165119421583631\\
0.169705627484771	0.0165138311474989\\
0.183847763108502	0.0165158765635965\\
0.197989898732233	0.0165180764920771\\
0.212132034355964	0.0165204288823447\\
0.226274169979695	0.0165229315515778\\
0.240416305603426	0.0165255821887409\\
0.254558441227157	0.0165283783588049\\
0.268700576850888	0.0165313175071655\\
0.282842712474619	0.0165343969642435\\
0.29698484809835	0.0165376139502553\\
0.311126983722081	0.0165409655801431\\
0.325269119345812	0.0165444488686435\\
0.339411254969543	0.016548060735487\\
0.353553390593274	0.0165517980107116\\
};
\addlegendentry{Number of interferers $n = 8$}

\addplot [color=black, line width=1.0pt, draw=none, mark=o, mark size=3pt]
  table[row sep=crcr]{%
0	0.0165019151714481\\
0.014142135623731	0.016501999050774\\
0.0282842712474619	0.0165022506082475\\
0.0424264068711929	0.0165026696025087\\
0.0565685424949238	0.0165032556318062\\
0.0707106781186548	0.0165040081347666\\
0.0848528137423857	0.0165049263914702\\
0.0989949493661167	0.0165060095248249\\
0.113137084989848	0.0165072565022414\\
0.127279220613579	0.016508666137597\\
0.14142135623731	0.0165102370934852\\
0.15556349186104	0.0165119678837469\\
0.169705627484771	0.0165138568762695\\
0.183847763108502	0.0165159022960485\\
0.197989898732233	0.0165181022285054\\
0.212132034355964	0.0165204546230443\\
0.226274169979695	0.0165229572968436\\
0.240416305603426	0.0165256079388679\\
0.254558441227157	0.0165284041140884\\
0.268700576850888	0.0165313432679006\\
0.282842712474619	0.0165344227307257\\
0.29698484809835	0.0165376397227798\\
0.311126983722081	0.0165409913590056\\
0.325269119345812	0.0165444746541398\\
0.339411254969543	0.0165480865279126\\
0.353553390593274	0.0165518238103626\\
};
\addlegendentry{Number of interferers $n = 15$}

\addplot [color=black, draw=none, mark=asterisk, mark size=3pt]
  table[row sep=crcr]{%
0	0.0165019245241961\\
0.014142135623731	0.0165020084035531\\
0.0282842712474619	0.0165022599611198\\
0.0424264068711929	0.0165026789555368\\
0.0565685424949238	0.0165032649850527\\
0.0707106781186548	0.0165040174882939\\
0.0848528137423857	0.0165049357453405\\
0.0989949493661167	0.0165060188791009\\
0.113137084989848	0.0165072658569852\\
0.127279220613579	0.0165086754928713\\
0.14142135623731	0.016510246449352\\
0.15556349186104	0.0165119772402685\\
0.169705627484771	0.0165138662335088\\
0.183847763108502	0.016515911654068\\
0.197989898732233	0.016518111587367\\
0.212132034355964	0.0165204639828107\\
0.226274169979695	0.0165229666575775\\
0.240416305603426	0.0165256173006313\\
0.254558441227157	0.0165284134769442\\
0.268700576850888	0.0165313526319111\\
0.282842712474619	0.0165344320959535\\
0.29698484809835	0.0165376490892873\\
0.311126983722081	0.0165410007268554\\
0.325269119345812	0.0165444840233943\\
0.339411254969543	0.0165480958986343\\
0.353553390593274	0.016551833182614\\
};
\addlegendentry{Number of interferers $n = 24$}

\addplot [color=black, draw=none, mark=+, mark size=3pt]
  table[row sep=crcr]{%
0	0.016501924677864\\
0.014142135623731	0.0165020085572221\\
0.0282842712474619	0.016502260114789\\
0.0424264068711929	0.016502679109207\\
0.0565685424949238	0.016503265138724\\
0.0707106781186548	0.0165040176419663\\
0.0848528137423857	0.0165049358990144\\
0.0989949493661167	0.0165060190327765\\
0.113137084989848	0.0165072660106623\\
0.127279220613579	0.0165086756465509\\
0.14142135623731	0.0165102466030343\\
0.15556349186104	0.0165119773939532\\
0.169705627484771	0.0165138663871966\\
0.183847763108502	0.0165159118077595\\
0.197989898732233	0.0165181117410621\\
0.212132034355964	0.0165204641365094\\
0.226274169979695	0.0165229668112802\\
0.240416305603426	0.0165256174543381\\
0.254558441227157	0.0165284136306558\\
0.268700576850888	0.0165313527856275\\
0.282842712474619	0.016534432249675\\
0.29698484809835	0.0165376492430143\\
0.311126983722081	0.0165410008805877\\
0.325269119345812	0.0165444841771328\\
0.339411254969543	0.0165480960523788\\
0.353553390593274	0.0165518333363651\\
};
\addlegendentry{Number of interferers $n = 35$}

\end{axis}
\end{tikzpicture}%
\caption{\textit{(Two Dimension Model)} Here the variation of interference $\mathbb{I}_{n}(d_{x},d_{y})$ is drawn with respect to a linear variation of the position $z=\sqrt{d_{x}^{2}+d_{y}^{2}}$ of the receiver photodiode (PD), radially inside the square attocell for different number interferers $n$ in the network. The graph for the proposed interference expression $\hat{\mathtt{I}}_{1,1}(d_{x},d_{y})$ from approximation is also drawn. We consider $a=0.5$m, the half-power-semi-angle (HPSA) $\theta_{h}$ of the LED as $\frac{\pi}{3}$ radians and the height $h$ of the LED as $2.5$m.}
\label{two5z1}
\end{figure}
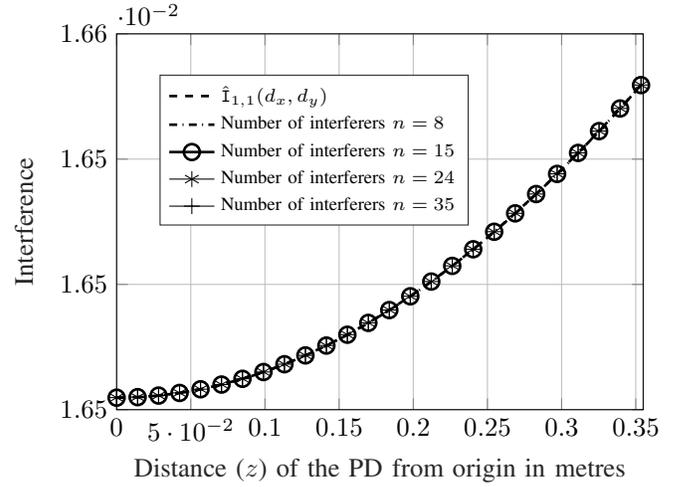 

%%TWO5H3%%%%%%%%%%%%%%%%%%%%%%
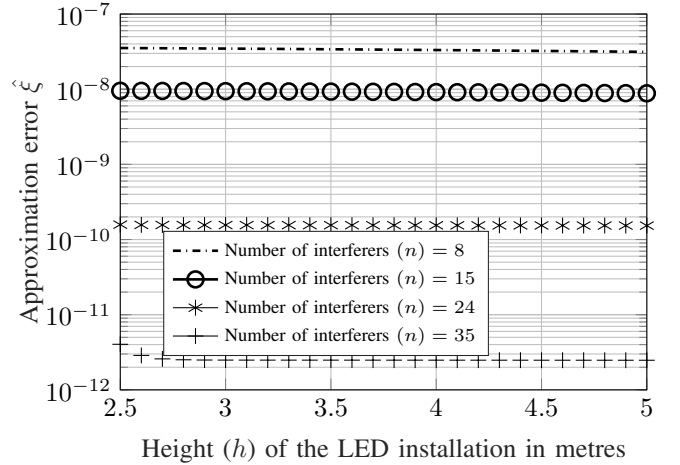
\begin{figure}[ht]
\centering
\begin{tikzpicture}
\begin{axis}[%
width=7cm,
height=5cm,
scale only axis,
xmin=2.5,
xmax=5,
xlabel style={font=\color{white!15!black}},
xlabel={Height ($h$) of the LED installation in metres},
ymode=log,
ymin=1e-12,
ymax=1e-07,
yminorticks=true,
ylabel style={font=\color{white!15!black}},
ylabel={Approximation error $\hat{\xi}$},
axis background/.style={fill=white},
title style={font=\bfseries},
title={},
xmajorgrids,
ymajorgrids,
yminorgrids,
legend style={font=\fontsize{7}{5}\selectfont,at={(0.0815,0.0900)}, anchor=south west, legend cell align=left, align=left, draw=white!15!black}
]
\addplot [color=black, dashdotted, line width=1.0pt]
  table[row sep=crcr]{%
2.5	3.53296927130264e-08\\
2.6	3.52127924613949e-08\\
2.7	3.50927774443732e-08\\
2.8	3.49690442297268e-08\\
2.9	3.48414908478076e-08\\
3	3.47101399350222e-08\\
3.1	3.45750454334645e-08\\
3.2	3.44362705495191e-08\\
3.3	3.42938813817882e-08\\
3.4	3.4147946058067e-08\\
3.5	3.39985339629575e-08\\
3.6	3.38457156660952e-08\\
3.7	3.36895628685885e-08\\
3.8	3.35301483253905e-08\\
3.9	3.33675457329756e-08\\
4	3.32018297041856e-08\\
4.1	3.30330756582922e-08\\
4.2	3.28613597747008e-08\\
4.3	3.26867589808084e-08\\
4.4	3.25093507597737e-08\\
4.5	3.23292132101485e-08\\
4.6	3.2146424877122e-08\\
4.7	3.19610647706264e-08\\
4.8	3.17732122301915e-08\\
4.9	3.15829468981103e-08\\
5	3.13903486552549e-08\\
};
\addlegendentry{Number of interferers $(n) = 8$}

\addplot [color=black, line width=1.0pt, draw=none, mark=o, mark size=3pt]
  table[row sep=crcr]{%
2.5	9.530041745337e-09\\
2.6	9.50868957877438e-09\\
2.7	9.48748647873476e-09\\
2.8	9.46577643004543e-09\\
2.9	9.44340348019851e-09\\
3	9.42033506236284e-09\\
3.1	9.39656823965157e-09\\
3.2	9.37210856361351e-09\\
3.3	9.34696300003077e-09\\
3.4	9.32113950434521e-09\\
3.5	9.29464633384058e-09\\
3.6	9.26749173344073e-09\\
3.7	9.23968437307673e-09\\
3.8	9.21123296583094e-09\\
3.9	9.18214640042644e-09\\
4	9.15243370696629e-09\\
4.1	9.12210410930047e-09\\
4.2	9.09116692278562e-09\\
4.3	9.05963164698435e-09\\
4.4	9.02750784542716e-09\\
4.5	8.99480530997757e-09\\
4.6	8.9615338465395e-09\\
4.7	8.92770343714549e-09\\
4.8	8.89332414568539e-09\\
4.9	8.85840614745078e-09\\
5	8.82295970371051e-09\\
};
\addlegendentry{Number of interferers $(n) = 15$}

\addplot [color=black, draw=none, mark=asterisk, mark size=3pt]
  table[row sep=crcr]{%
2.5	1.57790371047017e-10\\
2.6	1.56539014389834e-10\\
2.7	1.5616379370198e-10\\
2.8	1.56002216353257e-10\\
2.9	1.55890558274641e-10\\
3	1.55789083022828e-10\\
3.1	1.55687692772466e-10\\
3.2	1.55583991436753e-10\\
3.3	1.55477164129336e-10\\
3.4	1.55367245978366e-10\\
3.5	1.55254278183525e-10\\
3.6	1.55138063203178e-10\\
3.7	1.55018750006702e-10\\
3.8	1.54896313874289e-10\\
3.9	1.54770707751564e-10\\
4	1.54641958960422e-10\\
4.1	1.54510146647621e-10\\
4.2	1.54375237744996e-10\\
4.3	1.54237257839716e-10\\
4.4	1.54096129086068e-10\\
4.5	1.5395201140387e-10\\
4.6	1.53804794096081e-10\\
4.7	1.53654573819325e-10\\
4.8	1.5350130959076e-10\\
4.9	1.53345042718488e-10\\
5	1.53185769082541e-10\\
};
\addlegendentry{Number of interferers $(n) = 24$}

\addplot [color=black, draw=none, mark=+, mark size=3pt]
  table[row sep=crcr]{%
2.5	4.03923769431991e-12\\
2.6	2.87181736724484e-12\\
2.7	2.58352714221921e-12\\
2.8	2.51246766447277e-12\\
2.9	2.49393561357891e-12\\
3	2.48942793462659e-12\\
3.1	2.48785107098692e-12\\
3.2	2.48721399379037e-12\\
3.3	2.48665064234155e-12\\
3.4	2.48615971559785e-12\\
3.5	2.48563626278897e-12\\
3.6	2.48521537550561e-12\\
3.7	2.48479253665834e-12\\
3.8	2.48435321793805e-12\\
3.9	2.48375256993449e-12\\
4	2.48319572369871e-12\\
4.1	2.4826601278255e-12\\
4.2	2.48214144550618e-12\\
4.3	2.48160086230298e-12\\
4.4	2.48101441734788e-12\\
4.5	2.48042916501517e-12\\
4.6	2.47978536576515e-12\\
4.7	2.47920054711331e-12\\
4.8	2.4785611930922e-12\\
4.9	2.47793631317009e-12\\
5	2.47727630509759e-12\\
};
\addlegendentry{Number of interferers $(n) = 35$}

\end{axis}
\end{tikzpicture}%
\caption{\textit{(Two Dimension Model)} Here the variation of interference approximation error $\hat{\xi}=|\mathbb{I}_{n}(d_{x},d_{y})-\hat{\mathtt{I}}_{1,1}(d_{x},d_{y})|$ is drawn for a linear variation of the height $h$ of installation of the LED for different number interferers $n$ in the network. We consider $a=0.5$m, the half-power-semi-angle (HPSA) $\theta_{h}$ of the LED as $\frac{\pi}{3}$ radians and $d_{x}=d_{y}=0$.}
\label{two5h3}
\end{figure} 

%%TWO5T3%%%%%%%%%%%%%%%%%%%%%%
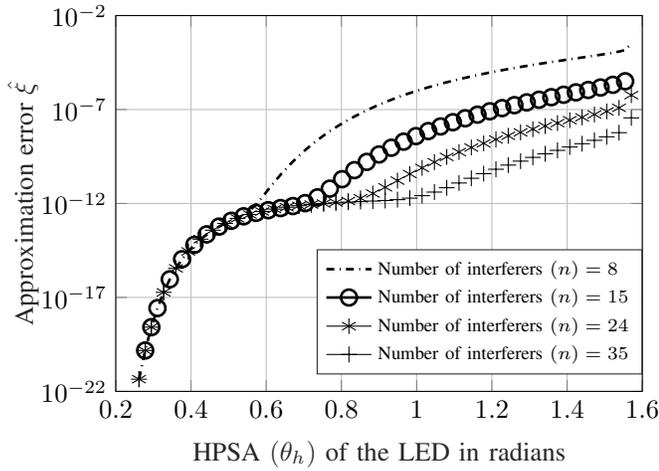
\begin{figure}[ht]
\centering
\begin{tikzpicture}
\begin{axis}[%
width=7cm,
height=5cm,
scale only axis,
xmin=0.2,
xmax=1.6,
xlabel style={font=\color{white!15!black}},
xlabel={HPSA $(\theta_{h})$ of the LED in radians},
ymode=log,
ymin=1e-22,
ymax=0.01,
yminorticks=true,
ylabel style={font=\color{white!15!black}},
ylabel={Approximation error $\hat{\xi}$},
axis background/.style={fill=white},
title style={font=\bfseries},
title={},
xmajorgrids,
ymajorgrids,
yminorgrids,
legend style={font=\fontsize{7}{5}\selectfont,at={(0.3815,0.0450)}, anchor=south west, legend cell align=left, align=left, draw=white!15!black}
]
\addplot [color=black, dashdotted, line width=1.0pt]
  table[row sep=crcr]{%
0.261799387799149	4.44722473647548e-22\\
0.278161849536596	1.50081331221462e-20\\
0.294524311274043	2.61087216809121e-19\\
0.31088677301149	2.70497466472752e-18\\
0.327249234748937	1.8606361435362e-17\\
0.343611696486384	9.23132855691209e-17\\
0.35997415822383	3.52136658841444e-16\\
0.376336619961277	1.0854338569287e-15\\
0.392699081698724	2.81135606160138e-15\\
0.409061543436171	6.31104851282943e-15\\
0.425424005173618	1.25865048997386e-14\\
0.441786466911065	2.27496468168469e-14\\
0.458148928648511	3.7874030244962e-14\\
0.474511390385958	5.88677547599123e-14\\
0.490873852123405	8.64962180469709e-14\\
0.507236313860852	1.21996863667792e-13\\
0.523598775598299	1.69874935990254e-13\\
0.539961237335746	2.47888799873204e-13\\
0.556323699073193	4.17817408953193e-13\\
0.572686160810639	8.69161804974063e-13\\
0.589048622548086	2.12290290733811e-12\\
0.605411084285533	5.48046617853196e-12\\
0.62177354602298	1.39282292838973e-11\\
0.638136007760427	3.38185989628981e-11\\
0.654498469497873	7.77751050411515e-11\\
0.67086093123532	1.6939257093361e-10\\
0.687223392972767	3.50399700791218e-10\\
0.703585854710214	6.90994497189699e-10\\
0.719948316447661	1.30403064832908e-09\\
0.736310778185108	2.36361452133661e-09\\
0.752673239922555	4.128467778362e-09\\
0.769035701660001	6.97013609893618e-09\\
0.785398163397448	1.14058038443686e-08\\
0.801760625134895	1.81351438510377e-08\\
0.818123086872342	2.80803211429731e-08\\
0.834485548609789	4.24280150639403e-08\\
0.850848010347236	6.26721463831484e-08\\
0.867210472084682	9.06559116014075e-08\\
0.883572933822129	1.28611738424086e-07\\
0.899935395559576	1.79197882640286e-07\\
0.916297857297023	2.45530570222585e-07\\
0.93266031903447	3.31210838212456e-07\\
0.949022780771917	4.40345518144467e-07\\
0.965385242509364	5.77562122839465e-07\\
0.98174770424681	7.48017715664073e-07\\
0.998110165984257	9.57402152373249e-07\\
1.0144726277217	1.21193637587363e-06\\
1.03083508945915	1.51836670392314e-06\\
1.0471975511966	1.88395628179794e-06\\
1.06356001293404	2.31647507406563e-06\\
1.07992247467149	2.82418994993969e-06\\
1.09628493640894	3.41585658286334e-06\\
1.11264739814639	4.10071505323184e-06\\
1.12900985988383	4.88849122915955e-06\\
1.14537232162128	5.78940622131957e-06\\
1.16173478335873	6.81419650640919e-06\\
1.17809724509617	7.97414770613986e-06\\
1.19445970683362	9.28114556303372e-06\\
1.21082216857107	1.07477484135796e-05\\
1.22718463030851	1.2387286558696e-05\\
1.24354709204596	1.42139954379533e-05\\
1.25990955378341	1.62431917052272e-05\\
1.27627201552085	1.84915044231099e-05\\
1.2926344772583	2.09771781346096e-05\\
1.30899693899575	2.37204712635955e-05\\
1.32535940073319	2.67441833638957e-05\\
1.34172186247064	3.00743601509348e-05\\
1.35808432420809	3.37412494792882e-05\\
1.37444678594553	3.77806205837244e-05\\
1.39080924768298	4.22356242085475e-05\\
1.40717170942043	4.71594843785283e-05\\
1.42353417115788	5.26195172173421e-05\\
1.43989663289532	5.870336195965e-05\\
1.45625909463277	6.55290987900681e-05\\
1.47262155637022	7.32626529155594e-05\\
1.48898401810766	8.21500277945059e-05\\
1.50534647984511	9.25831779079872e-05\\
1.52170894158256	0.000105254660269088\\
1.53807140332	0.000121607350669212\\
1.55443386505745	0.000145774417870517\\
1.5707963267949	0.000324215830198787\\
};
\addlegendentry{Number of interferers $(n) = 8$}

\addplot [color=black, line width=1.0pt, draw=none, mark=o, mark size=3pt]
  table[row sep=crcr]{%
%0.261799387799149	4.44722473647548e-22\\
0.278161849536596	1.50081331221462e-20\\
0.294524311274043	2.61087216809121e-19\\
0.31088677301149	2.70497466472752e-18\\
%0.327249234748937	1.8606361435362e-17\\
0.343611696486384	9.23132855691209e-17\\
%0.35997415822383	3.52136658841444e-16\\
0.376336619961277	1.08543385641171e-15\\
%0.392699081698724	2.81135602355107e-15\\
0.409061543436171	6.31104724062564e-15\\
%0.425424005173618	1.25864778674762e-14\\
0.441786466911065	2.27492386528455e-14\\
%0.458148928648511	3.78693797164443e-14\\
0.474511390385958	5.88262879443421e-14\\
%0.490873852123405	8.61968054526437e-14\\
0.507236313860852	1.20198739352197e-13\\
%0.523598775598299	1.60690192671975e-13\\
0.539961237335746	2.07212773415538e-13\\
%0.556323699073193	2.59063234596011e-13\\
0.572686160810639	3.15377674604995e-13\\
%0.589048622548086	3.75232471776146e-13\\
0.605411084285533	4.37773474697319e-13\\
%0.62177354602298	5.02473605258624e-13\\
0.638136007760427	5.69707828005153e-13\\
%0.654498469497873	6.42137168091472e-13\\
0.67086093123532	7.27807245894108e-13\\
%0.687223392972767	8.4653725002104e-13\\
0.703585854710214	1.04248955388325e-12\\
%0.719948316447661	1.40694552362708e-12\\
0.736310778185108	2.11763094039308e-12\\
%0.752673239922555	3.50234056166465e-12\\
0.769035701660001	6.13778060494508e-12\\
%0.785398163397448	1.09950312743301e-11\\
0.801760625134895	1.96463223293974e-11\\
%0.818123086872342	3.45458245343266e-11\\
0.834485548609789	5.93997303600935e-11\\
%0.850848010347236	9.96361075242258e-11\\
0.867210472084682	1.62985713467179e-10\\
%0.883572933822129	2.60176734323547e-10\\
0.899935395559576	4.05748348114576e-10\\
%0.916297857297023	6.18976820319916e-10\\
0.93266031903447	9.24908795385537e-10\\
%0.949022780771917	1.35548847146616e-09\\
0.965385242509364	1.95076416999174e-09\\
%0.98174770424681	2.760160834267e-09\\
0.998110165984257	3.84379448062355e-09\\
%1.0144726277217	5.27382064928472e-09\\
1.03083508945915	7.13579494798777e-09\\
%1.0471975511966	9.53004174880645e-09\\
1.06356001293404	1.25730226264997e-08\\
%1.07992247467149	1.63987082350825e-08\\
1.09628493640894	2.11599611305013e-08\\
%1.11264739814639	2.70299525748086e-08\\
1.12900985988383	3.42036427514181e-08\\
%1.14537232162128	4.28993617504481e-08\\
1.16173478335873	5.33605582844765e-08\\
%1.17809724509617	6.58577900963797e-08\\
1.19445970683362	8.06910504386193e-08\\
%1.21082216857107	9.81925638784964e-08\\
1.22718463030851	1.18730226296437e-07\\
%1.24354709204596	1.42711883213686e-07\\
1.25990955378341	1.70590765756484e-07\\
%1.27627201552085	2.02872451372793e-07\\
1.2926344772583	2.4012388628869e-07\\
%1.30899693899575	2.82985213470588e-07\\
1.32535940073319	3.32185452714406e-07\\
%1.34172186247064	3.88563570409561e-07\\
1.35808432420809	4.53097223560017e-07\\
%1.37444678594553	5.26942681051967e-07\\
1.39080924768298	6.11491477564563e-07\\
%1.40717170942043	7.08452867717591e-07\\
1.42353417115788	8.19977610005962e-07\\
%1.43989663289532	9.48850794407763e-07\\
1.45625909463277	1.09880640104809e-06\\
%1.47262155637022	1.2750706882142e-06\\
1.48898401810766	1.48537302366303e-06\\
%1.50534647984511	1.74202133128298e-06\\
1.52170894158256	2.06680091333067e-06\\
%1.53807140332	2.50531469057902e-06\\
1.55443386505745	3.18942658460164e-06\\
%1.5707963267949	9.23848741418709e-06\\
};
\addlegendentry{Number of interferers $(n) = 15$}

\addplot [color=black, draw=none, mark=asterisk, mark size=3pt]
  table[row sep=crcr]{%
0.261799387799149	4.44722473647548e-22\\
0.278161849536596	1.50081331221462e-20\\
0.294524311274043	2.61087216809121e-19\\
%0.31088677301149	2.70497466472752e-18\\
0.327249234748937	1.8606361435362e-17\\
%0.343611696486384	9.23132855691209e-17\\
0.35997415822383	3.52136658841444e-16\\
%0.376336619961277	1.08543385641171e-15\\
0.392699081698724	2.81135602355107e-15\\
%0.409061543436171	6.31104724062564e-15\\
0.425424005173618	1.25864778674762e-14\\
%0.441786466911065	2.27492386528455e-14\\
0.458148928648511	3.78693797164443e-14\\
%0.474511390385958	5.88262879443421e-14\\
0.490873852123405	8.61968054526437e-14\\
%0.507236313860852	1.20198736387582e-13\\
0.523598775598299	1.60690163872855e-13\\
%0.539961237335746	2.07212548104774e-13\\
0.556323699073193	2.59061238986388e-13\\
%0.572686160810639	3.1536505042595e-13\\
0.589048622548086	3.75166667480539e-13\\
%0.605411084285533	4.37490741875789e-13\\
0.62177354602298	5.01399919847212e-13\\
%0.638136007760427	5.660070665197e-13\\
0.654498469497873	6.30566616716816e-13\\
%0.67086093123532	6.9439311072006e-13\\
0.687223392972767	7.57017170303909e-13\\
%0.703585854710214	8.180436579866e-13\\
0.719948316447661	8.77474416911828e-13\\
%0.736310778185108	9.35797446477427e-13\\
0.752673239922555	9.94297092576923e-13\\
%0.769035701660001	1.05610702474945e-12\\
0.785398163397448	1.12727535639401e-12\\
%0.801760625134895	1.2192378183451e-12\\
0.818123086872342	1.35224036829085e-12\\
%0.834485548609789	1.56107635498892e-12\\
0.850848010347236	1.90239230618561e-12\\
%0.867210472084682	2.46735964992695e-12\\
0.883572933822129	3.39474975907805e-12\\
%0.899935395559576	4.8927285833944e-12\\
0.916297857297023	7.2633947467704e-12\\
%0.93266031903447	1.09374297813103e-11\\
0.949022780771917	1.65104440535968e-11\\
%0.965385242509364	2.47915334095072e-11\\
0.98174770424681	3.6860106980563e-11\\
%0.998110165984257	5.41264082165593e-11\\
1.0144726277217	7.84072563819738e-11\\
%1.03083508945915	1.12003382263248e-10\\
1.0471975511966	1.57790371047017e-10\\
%1.06356001293404	2.19315132027331e-10\\
1.07992247467149	3.0090435698682e-10\\
%1.09628493640894	4.07776073324539e-10\\
1.11264739814639	5.46161092046527e-10\\
%1.12900985988383	7.23445695710589e-10\\
1.14537232162128	9.48306835835133e-10\\
%1.16173478335873	1.23087979392311e-09\\
1.17809724509617	1.58293149421507e-09\\
%1.19445970683362	2.01806424349193e-09\\
1.21082216857107	2.55193469134163e-09\\
%1.22718463030851	3.20254737312675e-09\\
1.24354709204596	3.99054578964497e-09\\
%1.25990955378341	4.93961815201116e-09\\
1.27627201552085	6.07697547838626e-09\\
%1.2926344772583	7.4339561034753e-09\\
1.30899693899575	9.04680359792298e-09\\
%1.32535940073319	1.09576834966818e-08\\
1.34172186247064	1.32160341212439e-08\\
%1.35808432420809	1.58803903776983e-08\\
1.37444678594553	1.90209257927565e-08\\
%1.39080924768298	2.2723040407624e-08\\
1.40717170942043	2.70925852863257e-08\\
%1.42353417115788	3.22637281335858e-08\\
1.43989663289532	3.84112256307523e-08\\
%1.45625909463277	4.57705162937305e-08\\
1.47262155637022	5.4672508048581e-08\\
%1.48898401810766	6.56086023842484e-08\\
1.50534647984511	7.93647148139565e-08\\
%1.52170894158256	9.73391998809747e-08\\
1.53807140332	1.22479299008438e-07\\
%1.55443386505745	1.63393097010744e-07\\
1.5707963267949	5.81238457253441e-07\\
};
\addlegendentry{Number of interferers $(n) = 24$}

\addplot [color=black, draw=none, mark=+, mark size=3pt]
  table[row sep=crcr]{%
0.261799387799149	4.44722473647548e-22\\
0.278161849536596	1.50081331221462e-20\\
0.294524311274043	2.61087216809121e-19\\
%0.31088677301149	2.70497466472752e-18\\
0.327249234748937	1.8606361435362e-17\\
%0.343611696486384	9.23132855691209e-17\\
0.35997415822383	3.52136658841444e-16\\
%0.376336619961277	1.08543385641171e-15\\
0.392699081698724	2.81135602355107e-15\\
%0.409061543436171	6.31104724062564e-15\\
0.425424005173618	1.25864778674762e-14\\
%0.441786466911065	2.27492386528455e-14\\
0.458148928648511	3.78693797164443e-14\\
%0.474511390385958	5.88262879443421e-14\\
0.490873852123405	8.61968054526437e-14\\
%0.507236313860852	1.20198736387582e-13\\
0.523598775598299	1.60690163872855e-13\\
%0.539961237335746	2.07212548104774e-13\\
0.556323699073193	2.59061238986388e-13\\
%0.572686160810639	3.1536505042595e-13\\
0.589048622548086	3.75166660704276e-13\\
%0.605411084285533	4.37490674113153e-13\\
0.62177354602298	5.01399567481506e-13\\
%0.638136007760427	5.6600549442655e-13\\
0.654498469497873	6.30560436764432e-13\\
%0.67086093123532	6.94370721945198e-13\\
0.687223392972767	7.56941601412486e-13\\
%0.703585854710214	8.17735744569614e-13\\
0.719948316447661	8.76438895416887e-13\\
%0.736310778185108	9.33001289074586e-13\\
0.752673239922555	9.8689741274971e-13\\
%0.769035701660001	1.03820684688172e-12\\
0.785398163397448	1.08746778942903e-12\\
%0.801760625134895	1.13366000614734e-12\\
0.818123086872342	1.17733687382549e-12\\
%0.834485548609789	1.21877508085788e-12\\
0.850848010347236	1.26004501971311e-12\\
%0.867210472084682	1.2986998629283e-12\\
0.883572933822129	1.33924902417926e-12\\
%0.899935395559576	1.38369697644247e-12\\
0.916297857297023	1.43609690111957e-12\\
%0.93266031903447	1.50465404025191e-12\\
0.949022780771917	1.59655795528568e-12\\
%0.965385242509364	1.7268460966724e-12\\
0.98174770424681	1.91814099326226e-12\\
%0.998110165984257	2.19746616736405e-12\\
1.0144726277217	2.60485730207982e-12\\
%1.03083508945915	3.19589146813293e-12\\
1.0471975511966	4.03923769431991e-12\\
%1.06356001293404	5.22952861570225e-12\\
1.07992247467149	6.89136941955937e-12\\
%1.09628493640894	9.17730128002781e-12\\
1.11264739814639	1.2279659927783e-11\\
%1.12900985988383	1.6443308520353e-11\\
1.14537232162128	2.19668172540821e-11\\
%1.16173478335873	2.92176421945456e-11\\
1.17809724509617	3.86361914683775e-11\\
%1.19445970683362	5.07621930601054e-11\\
1.21082216857107	6.62316243515981e-11\\
%1.22718463030851	8.58159793226676e-11\\
1.24354709204596	1.10422317123327e-10\\
%1.25990955378341	1.4113341745281e-10\\
1.27627201552085	1.79230817687337e-10\\
%1.2926344772583	2.26240318146331e-10\\
1.30899693899575	2.83964844693951e-10\\
%1.32535940073319	3.54573669036906e-10\\
1.34172186247064	4.40663873446656e-10\\
%1.35808432420809	5.45362470460997e-10\\
1.37444678594553	6.72519206990074e-10\\
%1.39080924768298	8.26887433524881e-10\\
1.40717170942043	1.01443711764126e-09\\
%1.42353417115788	1.24286643476967e-09\\
1.43989663289532	1.52229782945934e-09\\
%1.45625909463277	1.8665318873845e-09\\
1.47262155637022	2.29513030802764e-09\\
%1.48898401810766	2.83741496964751e-09\\
1.50534647984511	3.54066423402521e-09\\
%1.52170894158256	4.48972962974192e-09\\
1.53807140332	5.8652218498656e-09\\
%1.55443386505745	8.20135424850488e-09\\
1.5707963267949	3.58756823415352e-08\\
};
\addlegendentry{Number of interferers $(n) = 35$}

\end{axis}
\end{tikzpicture}%
\caption{\textit{(Two Dimension Model)} Here the variation of interference approximation error $\hat{\xi}=|\mathbb{I}_{n}(d_{x},d_{y})-\hat{\mathtt{I}}_{1,1}(d_{x},d_{y})|$ is drawn for a linear variation of the half-power-semi-angle (HPSA) $\theta_{h}$ of the LED for different number interferers $n$ in the network. We consider the attocell length $a=0.5$m, the height $h$ of the LED as $2.5$m and $d_{x}=d_{y}=0$.}
\label{two5t3}
\end{figure} 

%%TWO5Z3%%%%%%%%%%%%%%%%%%%%%%
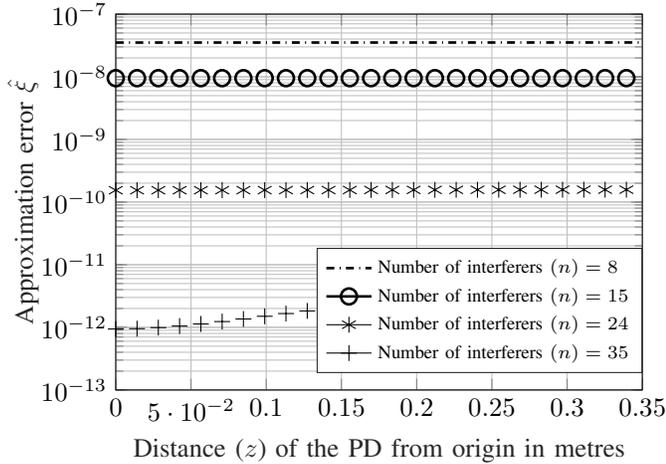
\begin{figure}[ht]
\centering
\begin{tikzpicture}
\begin{axis}[%
width=7cm,
height=5cm,
scale only axis,
xmin=0,
xmax=0.35,
xlabel style={font=\color{white!15!black}},
xlabel={Distance ($z$) of the PD from origin in metres},
ymode=log,
ymin=1e-13,
ymax=1e-07,
yminorticks=true,
ylabel style={font=\color{white!15!black}},
ylabel={Approximation error $\hat{\xi}$},
axis background/.style={fill=white},
title style={font=\bfseries},
title={},
xmajorgrids,
ymajorgrids,
yminorgrids,
legend style={font=\fontsize{7}{5}\selectfont,at={(0.3815,0.0450)}, anchor=south west, legend cell align=left, align=left, draw=white!15!black}
]
\addplot [color=black, dashdotted, line width=1.0pt]
  table[row sep=crcr]{%
0	3.52149275260893e-08\\
0.014142135623731	3.52151183109772e-08\\
0.0282842712474619	3.521569023196e-08\\
0.0424264068711929	3.52166428727041e-08\\
0.0565685424949238	3.52179753033977e-08\\
0.0707106781186548	3.5219686393001e-08\\
0.0848528137423857	3.52217740424987e-08\\
0.0989949493661167	3.52242368710509e-08\\
0.113137084989848	3.52270724396364e-08\\
0.127279220613579	3.52302779588198e-08\\
0.14142135623731	3.52338514406081e-08\\
0.15556349186104	3.52377898318879e-08\\
0.169705627484771	3.52420899962791e-08\\
0.183847763108502	3.52467493386355e-08\\
0.197989898732233	3.52517646046158e-08\\
0.212132034355964	3.52571328486595e-08\\
0.226274169979695	3.52628513368425e-08\\
0.240416305603426	3.52689170785014e-08\\
0.254558441227157	3.52753277838014e-08\\
0.268700576850888	3.52820808818821e-08\\
0.282842712474619	3.52891740516836e-08\\
0.29698484809835	3.52966060962467e-08\\
0.311126983722081	3.53043748506365e-08\\
0.325269119345812	3.53124792462634e-08\\
0.339411254969543	3.53209187661796e-08\\
0.353553390593274	3.53296927130264e-08\\
};
\addlegendentry{Number of interferers $(n) = 8$}

\addplot [color=black, line width=1.0pt, draw=none, mark=o, mark size=3pt]
  table[row sep=crcr]{%
0	9.50735697113902e-09\\
0.014142135623731	9.50740058555666e-09\\
0.0282842712474619	9.50753085288136e-09\\
0.0424264068711929	9.50774754066019e-09\\
0.0565685424949238	9.50804960112017e-09\\
0.0707106781186548	9.50843597607998e-09\\
0.0848528137423857	9.50890437223517e-09\\
0.0989949493661167	9.50945345731813e-09\\
0.113137084989848	9.51008073679649e-09\\
0.127279220613579	9.51078345245993e-09\\
0.14142135623731	9.51155969958206e-09\\
0.15556349186104	9.51240609892157e-09\\
0.169705627484771	9.51331944470946e-09\\
0.183847763108502	9.51429738119125e-09\\
0.197989898732233	9.51533629320322e-09\\
0.212132034355964	9.5164333080433e-09\\
0.226274169979695	9.51758557382609e-09\\
0.240416305603426	9.51879004090772e-09\\
0.254558441227157	9.52004429108366e-09\\
0.268700576850888	9.52134583329101e-09\\
0.282842712474619	9.52269182952215e-09\\
0.29698484809835	9.52408160670437e-09\\
0.311126983722081	9.5255123060134e-09\\
0.325269119345812	9.52698298375965e-09\\
0.339411254969543	9.52849313340387e-09\\
0.353553390593274	9.530041745337e-09\\
};
\addlegendentry{Number of interferers $(n) = 15$}

\addplot [color=black, draw=none, mark=asterisk, mark size=3pt]
  table[row sep=crcr]{%
0	1.54608943703227e-10\\
0.014142135623731	1.54621485753959e-10\\
0.0282842712474619	1.54658525569618e-10\\
0.0424264068711929	1.54719424771965e-10\\
0.0565685424949238	1.54803080076871e-10\\
0.0707106781186548	1.54908700450429e-10\\
0.0848528137423857	1.55034075854932e-10\\
0.0989949493661167	1.55177443811327e-10\\
0.113137084989848	1.55336948165496e-10\\
0.127279220613579	1.55509188859426e-10\\
0.14142135623731	1.55692816278252e-10\\
0.15556349186104	1.55884475466772e-10\\
0.169705627484771	1.5608008635537e-10\\
0.183847763108502	1.56277893403889e-10\\
0.197989898732233	1.56474683904451e-10\\
0.212132034355964	1.56666884326695e-10\\
0.226274169979695	1.56851656663015e-10\\
0.240416305603426	1.57026638220037e-10\\
0.254558441227157	1.57188501798133e-10\\
0.268700576850888	1.5733528369033e-10\\
0.282842712474619	1.57464062622292e-10\\
0.29698484809835	1.5757408919348e-10\\
0.311126983722081	1.57662546212967e-10\\
0.325269119345812	1.57728510807864e-10\\
0.339411254969543	1.57771400111084e-10\\
0.353553390593274	1.57790371047017e-10\\
};
\addlegendentry{Number of interferers $(n) = 24$}

\addplot [color=black, draw=none, mark=+, mark size=3pt]
  table[row sep=crcr]{%
0	9.41021566225331e-13\\
0.014142135623731	9.52463802272874e-13\\
0.0282842712474619	9.89409942864228e-13\\
0.0424264068711929	1.04918504439944e-12\\
0.0565685424949238	1.1317821679846e-12\\
0.0707106781186548	1.23627497128354e-12\\
0.0848528137423857	1.36016198304389e-12\\
0.0989949493661167	1.50184378822082e-12\\
0.113137084989848	1.65981464683718e-12\\
0.127279220613579	1.82956774730236e-12\\
0.14142135623731	2.01058614202054e-12\\
0.15556349186104	2.19972651205325e-12\\
0.169705627484771	2.39229816512143e-12\\
0.183847763108502	2.58637208871981e-12\\
0.197989898732233	2.77961334504973e-12\\
0.212132034355964	2.96822982859268e-12\\
0.226274169979695	3.14893250363824e-12\\
0.240416305603426	3.31983399104452e-12\\
0.254558441227157	3.47694789626374e-12\\
0.268700576850888	3.61892807387854e-12\\
0.282842712474619	3.74252018264798e-12\\
0.29698484809835	3.84707543599205e-12\\
0.311126983722081	3.93020685440781e-12\\
0.325269119345812	3.98992297534484e-12\\
0.339411254969543	4.02698013823866e-12\\
0.353553390593274	4.03923769431991e-12\\
};
\addlegendentry{Number of interferers $(n) = 35$}

\end{axis}
\end{tikzpicture}%
\caption{\textit{(Two Dimension Model)} Here the variation of interference approximation error $\hat{\xi}=|\mathbb{I}_{n}(d_{x},d_{y})-\hat{\mathtt{I}}_{1,1}(d_{x},d_{y})|$ is drawn for a linear variation of the position $z=\sqrt{d_{x}^{2}+d_{y}^{2}}$ of the receiver photodiode (PD), radially inside the attocell for different number interferers $n$ in the network. We consider $a=0.5$m, the half-power-semi-angle (HPSA) $\theta_{h}$ of the LED as $\frac{\pi}{3}$ radians and the height $h$ of the LED as $2.5$m.}
\label{two5z3}
\end{figure}    

From Fig. \ref{two5h1} and it's corresponding approximation error plot in Fig. \ref{two5h3}, we observe that for any given height $h$, as the number of interferers increase, the error $\hat{\xi}$, decreases. We observe a maximum error $\hat{\xi}_{max}$ in the order of $10^{-8}$ with respect to $\hat{\mathtt{I}}_{1,1}(d_{x},d_{y})$, that is in the order of $10^{-3}$. This error further reduces as the number of interferers is increased. The same can be observed with the variation of HPSA in graphs of Fig. \ref{two5t1} and the error plot in Fig. \ref{two5t3}, where $\hat{\xi}_{max}$ is in the order of $10^{-5}$, for $\hat{\mathtt{I}}_{1,1}(d_{x},d_{y})$ in the order of $10^{-1}$. Again, this error reduces as the number of interferers increases. Similarly, in graphs of Fig. \ref{two5z1} and Fig. \ref{two5z3}, we observe $\hat{\xi}_{max}$ in the order of $10^{-7}$, for $\hat{\mathtt{I}}_{1,1}(d_{x},d_{y})$ in the order of $10^{-2}$. So, when compared with the interference values, these errors are small, which numerically validates the approximation to $\hat{\mathtt{I}}_{1,1}(d_{x},d_{y})$. \\ \par

As seen in the above example, Prop. \ref{prop2} essentially implies that for a given value of $\frac{h}{a}$ the approximation to $\hat {\mathtt{I}}_{j,l}(d_{x},d_{y})$ is tight and very close to the actual interference $\mathbb{I}_{\infty}(d_{x},d_{y})$ in \eqref{eqn:interf2a}, with an approximation error bounded by an exponential decay.
Hence, the above discussion can be summarized as   
\begin{align*}
\mathbb{I}_{n}(d_{x},d_{y}) < \mathbb{I}_{\infty}(d_{x},d_{y}) \approx \hat{\mathtt{I}}_{j,l}(d_{x},d_{y}). 
\end{align*}
Similar to the one dimension model, this also implies that our characterization provides closed form analytical bounds for interference in finite LED networks.  
%%%%%%%%%%%%%%%%%%%%%%%%%%
\subsection{Two dimension model with FOV $\theta_{f}<\frac{\pi}{2}$ radians} 
We now look at the interference characterization when $\theta_{f}<\frac{\pi}{2}$ radians. Here we show that, the Fourier analysis method can be used to give a suitable interference approximation for such cases as well.
The infinite summation in \eqref{eqn:interf2a} becomes a finite summation, when the FOV constraint function $\rho(D_{u,v})$, acts on every interferer. From the proof of Thm. \ref{theorem2}, we can modify the function $q(.)$ in \eqref{eqn:funcRad2d} as 
\[q'(r) = (r^{2} + h^{2})^{-\beta}\rho(D_{u,v}).\]
The Poisson summation theorem can be used to obtain a similar result as in the previous subsection if the Fourier transform of $q'(r)$ can be obtained.

Using the Hankel transform \cite{hankel}, the Fourier transform of  $q'(r) $ equals
\begin{align*} 
Q'(w,k) =&2\pi \int_{0}^{\infty}\frac{J_{0}(2\pi r\sqrt{w^{2}+k^{2}})\rho(D_{u,v})}{(r^{2}+h^{2})^{\beta}}r\d r, \nonumber\\
=&2\pi \int_{0}^{h\tan(\theta_{f})}\frac{J_{0}(2\pi r\sqrt{w^{2}+k^{2}})}{(r^{2}+h^{2})^{\beta}}r\d r.
%\label{eqn:intd2}
\end{align*}
Hence we have the following Lemma. 
\begin{lemma}
For an FOV $\theta_{f}<\frac{\pi}{2}$ radians and finite integers $j\geq0$ and $l\geq0$ we have 
\begin{align}
\mathbb{I}_{\infty}(d_{x},d_{y}) \approx &\hat{\mathtt{I}}_{j,l}(d_{y},d_{y})\nonumber\\
&=\frac{1}{a^{2}}\bigg[Q'(0,0)+4\sum_{(w,k)\in \mathbb{A}}Q'\bigg(\frac{w}{a},\frac{k}{a}\bigg) \nonumber\\
&\ \ \ \ \ \ \ \ \cos\bigg(\frac{2\pi w d_{x}}{a}\bigg)\cos\bigg(\frac{2\pi k d_{y}}{a}\bigg)\bigg] \nonumber\\
&\ \ \ \ -\frac{1}{(d_{x}^{2}+d_{y}^{2}+h^{2})^{\beta}},
\label{eqn:fovfov2}
\end{align}
where $\mathbb{A} \triangleq (\mathbb{Z}^{2}\cap([0,j]\times[0,l]))\setminus\{(0,0)\}$ over the set of integers $\mathbb{Z}^{2}$. 
\label{lem2}
\end{lemma}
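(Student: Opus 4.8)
The plan is to repeat the argument behind Thm.~\ref{theorem2}, applying the two-dimensional Poisson summation theorem (Thm.~\ref{theorem0}) to the lattice sum for $\mathbb{I}_{\infty}(d_{x},d_{y})$, the only difference being that each summand now carries the FOV indicator $\rho(D_{u,v})$. First I would restore the excluded $(0,0)$ term: adding $(d_{x}^{2}+d_{y}^{2}+h^{2})^{-\beta}$ to $\mathbb{I}_{\infty}(d_{x},d_{y})$ turns the punctured double sum in \eqref{eqn:interf2} into the full lattice sum
\[
\sum_{(u,v)\in\mathbb{Z}^{2}} q'\big(\sqrt{(ua+d_{x})^{2}+(va+d_{y})^{2}}\big),\qquad q'(r)=(r^{2}+h^{2})^{-\beta}\rho(D_{u,v}),
\]
and I would observe that, since the FOV constraint depends only on the LED--PD ground distance, $q'$ extends to a genuinely radial function on $\mathbb{R}^{2}$ supported on the disk of radius $h\tan(\theta_{f})$.

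Next I would invoke Poisson summation on the scaled square lattice $a\mathbb{Z}^{2}$: with covolume $a^{2}$ and dual lattice $\tfrac1a\mathbb{Z}^{2}$,
\[
\sum_{(u,v)\in\mathbb{Z}^{2}} q'\big(\sqrt{(ua+d_{x})^{2}+(va+d_{y})^{2}}\big)\;=\;\frac{1}{a^{2}}\sum_{(w,k)\in\mathbb{Z}^{2}} Q'\!\Big(\frac{w}{a},\frac{k}{a}\Big)\,e^{2\pi\iota(wd_{x}+kd_{y})/a},
\]
where $Q'$ is the planar Fourier transform of $q'$. Because $q'$ is radial, $Q'$ equals the order-zero Hankel transform of its radial profile, and once the FOV cutoff collapses the upper limit this is exactly the $Q'(w,k)=2\pi\int_{0}^{h\tan(\theta_{f})}J_{0}(2\pi r\sqrt{w^{2}+k^{2}})(r^{2}+h^{2})^{-\beta}\,r\,\d r$ displayed above the statement. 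I would then use that $Q'$ depends on $(w,k)$ only through $w^{2}+k^{2}$, hence is even in each variable, to pair the four dual-lattice points $(\pm w,\pm k)$; the exponentials collapse to $\cos(2\pi wd_{x}/a)\cos(2\pi kd_{y}/a)$, giving the bracketed quadrant sum of \eqref{eqn:fovfov2} (tracking carefully the multiplicity of the terms lying on the coordinate axes). Truncating the dual series to the finite box $\mathbb{A}=(\mathbb{Z}^{2}\cap([0,j]\times[0,l]))\setminus\{(0,0)\}$ and subtracting back the $(d_{x}^{2}+d_{y}^{2}+h^{2})^{-\beta}$ added at the start then produces precisely $\hat{\mathtt{I}}_{j,l}(d_{x},d_{y})$.

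The main obstacle is exactly why the conclusion reads ``$\approx$'' instead of an identity with an explicit $O(\cdot)$: the indicator $\rho(D_{u,v})$ makes $q'$ jump across the circle $r=h\tan(\theta_{f})$, so the ``mild regularity conditions'' of Thm.~\ref{theorem0} fail and $Q'$ decays only polynomially --- like $|\xi|^{-3/2}$, the rate typical of a jump across a smooth planar curve. Hence the Poisson identity holds only in a summability sense, and the truncation error is merely polynomially small in $\sqrt{j^{2}+l^{2}}$, in contrast to the exponential bound of Prop.~\ref{prop2}. A fully rigorous version would either quote a Poisson-summation statement valid for functions of bounded variation (with Ces\`aro regularization of the dual series) or smooth $\rho$ to a $C^{\infty}$ cutoff and pass to the limit; I expect that to be the only delicate point, the Hankel identification of $Q'$, the $\pm$-folding into a quadrant, and the truncation all being routine.
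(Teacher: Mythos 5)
Your proposal follows exactly the route the paper intends: the paper's own proof of Lemma \ref{lem2} is the single sentence ``Follows from the Poisson summation theorem and approximations,'' with the Hankel-transform computation of $Q'(w,k)$ carried out in the surrounding text, so your reconstruction (2D Poisson summation on the shifted lattice, identification of the radial Fourier transform of the cutoff kernel as the order-zero Hankel transform over $[0,h\tan(\theta_f)]$, folding the dual points $(\pm w,\pm k)$ into the quadrant cosine sum, truncating to $\mathbb{A}$, and subtracting back the tagged-LED term) supplies precisely the details the paper omits. Your further observation that the FOV indicator breaks the regularity hypothesis of Theorem \ref{theorem0}, so that $Q'$ decays only polynomially and the truncation error cannot be exponentially small as in Prop.~\ref{prop2}, is correct and goes beyond what the paper states; it is exactly why the lemma is given with ``$\approx$'' and no quantitative error term.
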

\begin{proof}
Follows from the Poisson summation theorem and approximations. 
\end{proof}

The constant term evaluated at $w=k=0$ is 
\begin{align*}
Q'(0,0)&=2\pi\int_{0}^{\infty}\frac{J_{0}(0)\rho(D_{u,v})}{(r^{2}+h^{2})^{\beta}}r\d r,\nonumber\\
&=2\pi\int_{0}^{h\tan(\theta_{f})}\frac{J_{0}(0)}{(r^{2}+h^{2})^{\beta}}r\d r,\nonumber\\
&=\frac{h^{2-2\beta}\pi}{\beta-1}(1-\cos(\theta_{f})^{2\beta-2}).
%\label{eqn:intd20}
\end{align*}
As earlier, this represents the average spatial interference seen at all locations. A closed form expression for $Q'(\frac{w}{a},\frac{k}{a})$ can be simply obtained from numerical integration. \\ \par

Similar to the one dimension model, we consider $h=2.5$m and $a=0.5$m, leading to $\frac{h}{a}=5$ to numerically validate \eqref{eqn:fovfov2} for $j=l=1$ over various values of $\theta_{f}$ and compare it with $\mathbb{I}_{\infty}(d_{x},d_{y})$ in \eqref{eqn:interf2}. In Li-Fi attocell networks, if the FOV $\theta_{f}<\theta_{o}\big(=\tan^{-1}\big(\frac{a}{h}\big)\big)$, the PD does not experience any interference. Here the ratio $\frac{a}{h}=0.2$ and $\theta_{o}=0.197$ radians. So, in Fig. \ref{twofov2}, we observe that both $\mathbb{I}_{\infty}(d_{x},d_{y})$ and $\hat{\mathtt{I}}'_{1,1}(d_{x},d_{y})$ drop down to zero once $\theta_{f}<\theta_{o}=0.197$ radians. Also, for $\theta_{f}>\theta_{o}$, both the graphs, $\mathbb{I}_{\infty}(d_{x},d_{y})$ and $\hat{\mathtt{I}}'_{1,1}(d_{x},d_{y})$ are tightly bounded, which numerically validates our proposition in Lem. \ref{lem2} for $j=l=1$. Also, as $\theta_{f}\to 1.57 (=\frac{\pi}{2})$ radians, the interference values converge to the earlier case of $\theta_{f}=\frac{\pi}{2}$ radians, for the two dimension model, giving similar validation results as in the one dimension model.    \\ \par

So, the approximation above in Lem. \ref{lem2} is a good approximation for various practical parameter values based on the choice of $(j,l)$. As shown above, if we choose $h=2.5$m and $a=0.5$m, considering $j=l=1$ is sufficient. When $\frac{h}{a}$ becomes small, a few more terms are necessary to improve the approximation accuracy. \\ \par

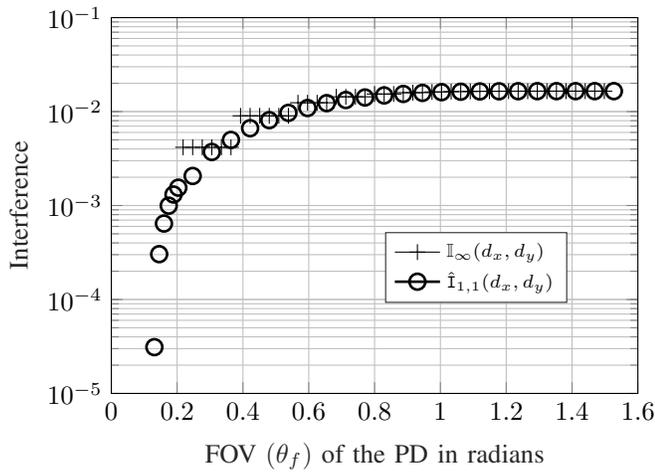
\begin{figure}[ht]
\centering
\definecolor{mycolor1}{rgb}{0.00000,0.44700,0.74100}%
\definecolor{mycolor2}{rgb}{0.85000,0.32500,0.09800}%
\begin{tikzpicture}
\begin{axis}[%
width=7cm,
height=5cm,
scale only axis,
xmin=0,
xmax=1.6,
xlabel style={font=\color{white!15!black}},
xlabel={FOV $(\theta_{f})$ of the PD in radians},
ymode=log,
ymin=1e-05,
ymax=0.1,
yminorticks=true,
ylabel style={font=\color{white!15!black}},
ylabel={Interference},
axis background/.style={fill=white},
xmajorgrids,
ymajorgrids,
yminorgrids,
legend style={font=\fontsize{7}{5}\selectfont,at={(0.52,0.241)}, anchor=south west, legend cell align=left, align=left, draw=white!15!black}
]
\addplot [color=black, draw=none, mark=+, mark size=3pt]
  table[row sep=crcr]{%
0	0\\
0.0145444104332861	0\\
0.0290888208665722	0\\
0.0436332312998582	0\\
0.0581776417331443	0\\
0.0727220521664304	0\\
0.0872664625997165	0\\
0.101810873033003	0\\
0.116355283466289	0\\
0.130899693899575	0\\
0.145444104332861	0\\
0.159988514766147	0\\
0.174532925199433	0\\
0.189077335632719	0\\
%0.203621746066005	0.00416765\\
0.218166156499291	0.00416765\\
%0.232710566932577	0.00416765\\
0.247254977365863	0.00416765\\
%0.261799387799149	0.00416765\\
0.276343798232435	0.00416765\\
%0.290888208665722	0.00416765\\
0.305432619099008	0.00416765\\
%0.319977029532294	0.00416765\\
0.33452143996558	0.00416765\\
%0.349065850398866	0.00416765\\
0.363610260832152	0.00416765\\
%0.378154671265438	0.00416765\\
0.392699081698724	0.00900731\\
%0.40724349213201	0.00900731\\
0.421787902565296	0.00900731\\
%0.436332312998582	0.00900731\\
0.450876723431868	0.00900731\\
%0.465421133865155	0.00900731\\
0.479965544298441	0.00900731\\
%0.494509954731727	0.00900731\\
0.509054365165013	0.00900731\\
%0.523598775598299	0.00900731\\
0.538143186031585	0.00900731\\
%0.552687596464871	0.0124201\\
0.567232006898157	0.0124201\\
%0.581776417331443	0.0124201\\
0.596320827764729	0.0124201\\
%0.610865238198015	0.0124201\\
0.625409648631301	0.0124201\\
%0.639954059064587	0.0124201\\
0.654498469497874	0.0124201\\
%0.66904287993116	0.0124201\\
0.683587290364446	0.0143647\\
%0.698131700797732	0.0143647\\
0.712676111231018	0.0143647\\
%0.727220521664304	0.0143647\\
0.74176493209759	0.0143647\\
%0.756309342530876	0.0143647\\
0.770853752964162	0.0143647\\
%0.785398163397448	0.0153815\\
0.799942573830734	0.0153815\\
%0.81448698426402	0.0153815\\
0.829031394697307	0.0153815\\
%0.843575805130593	0.0153815\\
0.858120215563879	0.0153815\\
%0.872664625997165	0.0153815\\
0.887209036430451	0.0159013\\
%0.901753446863737	0.0159013\\
0.916297857297023	0.0159013\\
%0.930842267730309	0.0159013\\
0.945386678163595	0.0159013\\
%0.959931088596881	0.0161697\\
0.974475499030167	0.0161697\\
%0.989019909463453	0.0161697\\
1.00356431989674	0.0161697\\
%1.01810873033003	0.0163117\\
1.03265314076331	0.0163117\\
%1.0471975511966	0.0163117\\
1.06174196162988	0.0163117\\
%1.07628637206317	0.0163892\\
1.09083078249646	0.0163892\\
%1.10537519292974	0.0163892\\
1.11991960336303	0.0164329\\
%1.13446401379631	0.0164329\\
1.1490084242296	0.0164584\\
%1.16355283466289	0.0164584\\
1.17809724509617	0.0164737\\
%1.19264165552946	0.0164737\\
1.20718606596274	0.0164831\\
%1.22173047639603	0.0164831\\
1.23627488682932	0.0164891\\
%1.2508192972626	0.016493\\
1.26536370769589	0.016493\\
%1.27990811812917	0.0164956\\
1.29445252856246	0.0164974\\
%1.30899693899575	0.0164986\\
1.32354134942903	0.0164995\\
%1.33808575986232	0.0165005\\
1.35263017029561	0.0165008\\
%1.36717458072889	0.0165013\\
1.38171899116218	0.0165015\\
%1.39626340159546	0.0165016\\
1.41080781202875	0.0165017\\
%1.42535222246204	0.0165018\\
1.43989663289532	0.0165019\\
%1.45444104332861	0.0165019\\
1.46898545376189	0.0165019\\
%1.48352986419518	0.0165019\\
1.49807427462847	0.0165019\\
%1.51261868506175	-0.00065536\\
1.52716309549504	-0.00065536\\
};
\addlegendentry{$\mathbb{I}_{\infty}(d_{x},d_{y})$}

\addplot [color=black, line width=1.0pt, draw=none, mark=o, mark size=3pt]
  table[row sep=crcr]{%
0	-0.00065536\\
0.0145444104332861	-0.000623821\\
0.0290888208665722	-0.000541791\\
0.0436332312998582	-0.000440641\\
0.0581776417331443	-0.000354377\\
0.0727220521664304	-0.000301473\\
0.0872664625997165	-0.000273576\\
0.101810873033003	-0.000236986\\
0.116355283466289	-0.000146985\\
0.130899693899575	3.10946e-05\\
0.145444104332861	0.000303765\\
0.159988514766147	0.000643184\\
0.174532925199433	0.000996801\\
0.189077335632719	0.00130953\\
0.203621746066005	0.00154888\\
%0.218166156499291	0.00172136\\
%0.232710566932577	0.00187157\\
0.247254977365863	0.00206315\\
%0.261799387799149	0.00234931\\
%0.276343798232435	0.00274648\\
%0.290888208665722	0.00322414\\
0.305432619099008	0.00371714\\
%0.319977029532294	0.0041558\\
%0.33452143996558	0.00450007\\
%0.349065850398866	0.00476034\\
0.363610260832152	0.0049932\\
%0.378154671265438	0.00527267\\
%0.392699081698724	0.00565071\\
%0.40724349213201	0.00612782\\
0.421787902565296	0.00665095\\
%0.436332312998582	0.00714183\\
%0.450876723431868	0.00754118\\
%0.465421133865155	0.00784357\\
0.479965544298441	0.00810102\\
%0.494509954731727	0.00839091\\
%0.509054365165013	0.0087663\\
%0.523598775598299	0.00922036\\
0.538143186031585	0.00968992\\
%0.552687596464871	0.0100984\\
%0.567232006898157	0.0104103\\
%0.581776417331443	0.0106565\\
0.596320827764729	0.0109111\\
%0.610865238198015	0.0112319\\
%0.625409648631301	0.0116133\\
%0.639954059064587	0.0119909\\
0.654498469497874	0.0122977\\
%0.66904287993116	0.0125252\\
%0.683587290364446	0.0127299\\
%0.698131700797732	0.0129771\\
0.712676111231018	0.0132738\\
%0.727220521664304	0.0135621\\
%0.74176493209759	0.013785\\
%0.756309342530876	0.0139522\\
0.770853752964162	0.0141267\\
%0.785398163397448	0.0143432\\
%0.799942573830734	0.0145632\\
%0.81448698426402	0.014732\\
0.829031394697307	0.0148587\\
%0.843575805130593	0.0149999\\
%0.858120215563879	0.0151666\\
%0.872664625997165	0.0153088\\
0.887209036430451	0.0154089\\
%0.901753446863737	0.0155124\\
%0.916297857297023	0.0156343\\
%0.930842267730309	0.0157319\\
0.945386678163595	0.0158038\\
%0.959931088596881	0.0158891\\
%0.974475499030167	0.0159701\\
%0.989019909463453	0.0160258\\
1.00356431989674	0.016088\\
%1.01810873033003	0.0161464\\
%1.03265314076331	0.0161876\\
%1.0471975511966	0.0162357\\
1.06174196162988	0.0162716\\
%1.07628637206317	0.0163056\\
%1.09083078249646	0.0163361\\
%1.10537519292974	0.0163613\\
1.11991960336303	0.0163847\\
%1.13446401379631	0.0164043\\
%1.1490084242296	0.0164207\\
%1.16355283466289	0.0164362\\
1.17809724509617	0.0164479\\
%1.19264165552946	0.0164582\\
%1.20718606596274	0.0164673\\
%1.22173047639603	0.0164746\\
1.23627488682932	0.0164806\\
%1.2508192972626	0.0164854\\
%1.26536370769589	0.0164893\\
%1.27990811812917	0.0164924\\
1.29445252856246	0.0164948\\
%1.30899693899575	0.0164968\\
%1.32354134942903	0.0164982\\
%1.33808575986232	0.0164993\\
1.35263017029561	0.0165002\\
%1.36717458072889	0.0165008\\
%1.38171899116218	0.0165012\\
%1.39626340159546	0.0165015\\
1.41080781202875	0.0165016\\
%1.42535222246204	0.0165018\\
%1.43989663289532	0.0165018\\
%1.45444104332861	0.0165019\\
1.46898545376189	0.0165019\\
%1.48352986419518	0.0165019\\
%1.49807427462847	0.0165019\\
%1.51261868506175	0.0165019\\
1.52716309549504	0.0165019\\
};
\addlegendentry{$\hat{\mathtt{I}}_{1,1}(d_{x},d_{y})$}

\end{axis}
\end{tikzpicture}%
\caption{\textit{(Two Dimension Model)} ($\theta_{f}<\frac{\pi}{2}$ radians) Here the variation of $\hat{\mathtt{I}}_{1,1}(d_{x},d_{y})$ is drawn for a linear variation of the FOV $\theta_{f}$ of the receiver photodiode (PD). $\mathbb{I}_{\infty}(d_{x},d_{y})$ from \eqref{eqn:interf2} (or $\mathbb{I}_{n}(d_{x},d_{y})$ for $n=24$) is also drawn to validate the same. We consider $a=0.5$m, the half-power-semi-angle (HPSA) $\theta_{h}$ of the LED as $\frac{\pi}{3}$ radians, the height $h$ of the LED as $2.5$m and $d_{x}=d_{y}=0$.}
\label{twofov2}
\end{figure}    

\section{ Conclusion }
In this work, the Poisson summation theorem has been used to provide a simple closed form approximation to co-channel-interference in Li-Fi attocell networks for both one and two dimensions.  We also show that the approximation has an error that is tight with respect to an exponential decay for a given set of system parameters. The advantage of this characterization is, it can be used to compute interference power with a high degree of accuracy for any given finite separation between the LEDs and provide upper bounds for interference in finite attocell networks. Using this characterization, large scale network interference summations can be circumvented and important metrics like probability of coverage, area spectral efficiency, optimal LED spacing etc. can be analytically computed in an easy way. Further, we show that our method of Fourier analysis can be extended to characterize interference when the user PDs have limited FOVs as well.
\bibliographystyle{IEEEtran}
\bibliography{IEEEfull,Reference1.bib}

\begin{appendices}

\begin{comment}
\section{}
\section*{Proof of Path Gain}
\begin{proof}
From \cite{cheng1}, referring to Fig. \ref{los_model}, the path loss $G_{d}(z)$ is given as 
\begin{equation}
G_{d}(z)= \frac{(m+1)A_{pd}}{2\pi l^{2}}\cos^{m}(\theta_{d,t})\cos(\theta_{d,r})\rho(D_{d}).
\label{eqn:pl1}
\end{equation}
Neglecting any refraction effects on the LOS ray from the LED to PD, we can have, $\theta_{d,t}=\theta_{d,r}$. From the geometry of the links, we can write \eqref{eqn:pl1} as  
\begin{equation}
G_{d}(z)= \frac{(m+1)A_{pd}}{2\pi l^{2}} \frac{h^{m}}{l^{m}}\frac{h}{l}\rho(D_{d}).
\label{eqn:pl2} 
\end{equation}
With $l=((z+d)^{2}+h^{2})^{\frac{1}{2}}$, we can re-write \eqref{eqn:pl2} as
\begin{align*}
G_{d}(z) &=\frac{(m+1)A_{pd}h^{m+1}}{2\pi}l^{-(m+3)}\rho(D_{d}), \nonumber\\
&=\frac{(m+1)A_{pd}h^{m+1}}{2\pi}((z+d)^{2} + h^{2} )^{\frac{-(m+3)}{2}} \rho(D_{d}), \nonumber
%\label{eqn:pl3} 
\end{align*}
which derives the path loss in \eqref{eqn:gain}.
\end{proof}
\label{app:pathgain}
\end{comment}

\section{}
\section*{Proof of Theorem \ref{theorem1}}
\begin{proof}
From \eqref{eqn:interf1}, for $\theta_{f}=\frac{\pi}{2}$, we can write the interference term $\mathbb{I}_{\infty}(z)$ as  
\begin{equation}
\mathbb{I}_{\infty}(z) =  \sum_{i = -\infty \setminus 0}^{+\infty}( (ia + z)^{2} + h^{2} )^{-\beta}.
\label{eqn:inta}  
\end{equation}
We scale and shift the function $q(i)$ in \eqref{eqn:intb1} as $q(z+ia)$ and using the time shifting \cite[Prop.4.3.2]{oppenheim}  and scaling \cite[Prop.4.3.5]{oppenheim} property of Fourier transform, we obtain
\begin{equation}
a\sum_{i= -\infty}^{+ \infty}  q(z + ia) = \sum_{w=-\infty}^{+\infty} Q\Big(\frac{w}{a}\Big)e^{\frac{\iota2\pi wz}{a}}.
\label{eqn:intb}  
\end{equation}
In \eqref{eqn:intb}, from \eqref{eqn:inta}, we consider a real and even function $q(i)$ given as
\begin{equation}
q(i) = ( i^{2} + h^{2} ) ^{-\beta}.
\label{eqn:intc}  
\end{equation}
Correspondingly it's Fourier transform $Q(w)=\int_{-\infty}^{\infty}q(x)e^{-\iota 2\pi w x}\d x$ will also be real and even \cite[Prop.4.3.3]{oppenheim} and is given as
 \begin{equation}  
Q(w) = \left\{
                \begin{array}{l}                 
                \frac{h^{1-2\beta}\sqrt{\pi}\Gamma(\beta-0.5)}{\Gamma(\beta)}        ;\ \ \ \ \ \ \ \ \ \ \ \ \ \ \ \ \ \ \ \ \ \ \  w=0, \\ \\
                 \frac{2^{1-\beta}\sqrt{2\pi}h^{0.5-\beta}(2\pi w)^{\beta-0.5} \mathbb{K}_{\beta-0.5}(2\pi hw)}{\Gamma(\beta)}; w\neq0.

                \end{array}
                \right.  
\label{eqn:intd}  
\end{equation} 
 Now, substituting \eqref{eqn:intc} and \eqref{eqn:intd} into \eqref{eqn:intb} we get  
\begin{align}
&\sum_{i = -\infty}^{+ \infty}( (ia + z)^{2} + h^{2} )^{-\beta} \nonumber\\
&=\frac{1}{a}\bigg(Q(0)+\sum_{w=-\infty \setminus 0}^{+\infty}Q\bigg(\frac{w}{a}\bigg)e^{\frac{j2\pi wz}{a}}\bigg), \nonumber\\ 
&= \frac{h^{1-2\beta}\sqrt{\pi}\Gamma(\beta-0.5)}{a\Gamma(\beta)} \  \nonumber\\
&+\sum_{w = -\infty \setminus 0}^{+\infty} \frac{2^{1-\beta}\sqrt{2\pi}h^{0.5-\beta}(2\pi w)^{\beta-0.5} \mathbb{K}_{\beta-0.5}(\frac{2\pi hw}{a})}{a^{0.5+\beta}\Gamma(\beta)}e^{\frac{j2\pi wz}{a}}.
\label{eqn:intf}   
\end{align}
We remove the redundant addition of $i=0$ term from both sides of \eqref{eqn:intf}, which refers to the signal power from the tagged LED source at origin and we get
\begin{align}
&\mathbb{I}_{\infty}(z) \nonumber\\
&=  \sum_{i = -\infty \setminus 0}^{+\infty}( (ia + z)^{2} + h^{2} )^{-\beta}, \nonumber\\
&=\sum_{i = -\infty}^{+ \infty}( (ia + z)^{2} + h^{2} )^{-\beta} -\frac{1}{(z^{2}+h^{2})^{\beta}}, \nonumber\\
&=\frac{1}{a}\bigg(Q(0)+\sum_{w=-\infty \setminus 0}^{+\infty}Q\bigg(\frac{w}{a}\bigg)e^{\frac{j2\pi wz}{a}}\bigg)-\frac{1}{(z^{2}+h^{2})^{\beta}}.
\label{eqn:intf1}    
\end{align}  
Now, that the Fourier transform $Q(w)$ is real and even, we can modify \eqref{eqn:intf1} as
\begin{align*}
&\mathbb{I}_{\infty}(z) \nonumber\\
&\stackrel{(a)}{=}\frac{1}{a}\bigg(Q(0)+\sum_{w=-\infty\setminus 0}^{+\infty}Q\bigg(\frac{w}{a}\bigg)\cos\bigg(\frac{2\pi w z}{a}\bigg)\bigg)-\frac{1}{(z^{2}+h^{2})^{\beta}}, \nonumber\\
&\stackrel{(b)}{=}\frac{1}{a}\bigg(Q(0)+\sum_{w=1}^{+\infty}2Q\bigg(\frac{w}{a}\bigg)\cos\bigg(\frac{2\pi w z}{a}\bigg)\bigg)-\frac{1}{(z^{2}+h^{2})^{\beta}}, \nonumber\\
&= \frac{h^{1-2\beta}\sqrt{\pi}\Gamma(\beta-0.5)}{a\Gamma(\beta)}\nonumber\\
&+\sum_{w =1}^{+\infty} \frac{2^{2-\beta}\sqrt{2\pi}h^{0.5-\beta}(2\pi w)^{\beta-0.5} \mathbb{K}_{\beta-0.5}(\frac{2\pi hw}{a})\cos(\frac{2\pi wz}{a})}{a^{0.5+\beta}\Gamma(\beta)} \nonumber\\
&-\frac{1}{(z^{2}+h^{2})^{\beta}},
%\label{eqn:intf2}   
\end{align*}
where $(a)$ follows from the fact that $Q(w)$ is real, $(b)$ follows from the fact that $Q(w)$ is even and hence proving the theorem. \par
\end{proof}
\label{app:theorem1}
\section{}
\section*{Proof of Proposition \ref{prop1}}
\begin{proof}
The interference power obtained in Thm. \ref{theorem1} can be written as 
\begin{align*}
\mathbb{I}_{\infty}(z) = &\frac{h^{1-2\beta}\sqrt{\pi}\Gamma(\beta-0.5)}{a\Gamma(\beta)}+\sum_{w=1}^k g(w) + \sum_{w=k+1}^\infty g(w)\nonumber\\
&-\frac{1}{(z^{2}+h^{2})^{\beta}}.
%\label{eqn:intf2p}   
\end{align*}
 Let $E(k)= \sum_{w=k+1}^\infty g(w)$, then 
\begin{align}
g(w) = \frac{M}{a}\cos\bigg(\frac{2\pi z w}{a}\bigg)r^{\beta-0.5}\mathbb{K}_{\beta-0.5}\bigg(\frac{2\pi h w}{a}\bigg),
\label{eqn:prop1}  
\end{align} 
where $M = \frac{2^{2-\beta}(2\pi)^{\beta}h^{0.5-\beta}}{\Gamma(\beta)}$. From \cite{asybesl}, for large $w$, the modified Bessel function $\mathbb{K}_{\beta-0.5}\big(\frac{2\pi h w}{a}\big)$ can be expanded as 
\begin{equation*}
\mathbb{K}_{\beta-0.5}\bigg(\frac{2\pi h w}{a}\bigg) = \sqrt{\frac{a\pi}{4\pi h w}}e^{\frac{-2\pi h r}{a}}\Big(1+\Theta\Big(\frac{1}{w}\Big)\Big),
%\label{eqn:prop31}
\end{equation*} 
where $\Theta(.)$ is an asymptotic notation\footnote{The asymptotic notation $f(n)=\Theta(g(n))$ is defined as $\exists k_{1}>0, k_{2}>0, n_{o}>0 \ni \forall n>n_{o}, k_{1}\times g(n)\leq f(n) \leq k_{2}\times g(n).$}. The cosine term $\cos\big(\frac{2\pi w z}{a}\big)$ in \eqref{eqn:prop1} is bounded by $\Theta(1)$, as $w$ becomes large. So, for large $w$,
the asymptotic bound on $g(w)$ in \eqref{eqn:prop1}, can be shown as
\begin{align}
g(w)\leq &\frac{M}{a}\Theta(1)\bigg(\frac{w}{a}\bigg)^{\beta-0.5}\sqrt{\frac{a}{4h w}}e^{\frac{-2\pi h w}{a}}\Big(1+\Theta\Big(\frac{1}{w}\Big)\Big),\nonumber\\
\in&\Theta(w^{\beta-2}e^{\frac{-2\pi h w}{a}}).
\label{eqn:prop41}
\end{align} 
Summing \eqref{eqn:prop41} over large $w$, we write the total error $E(k)$ as 
\begin{align}
E(k) &= \sum_{w=k+1}^{\infty} g(w), \nonumber\\
&\in \sum_{w=k+1}^{\infty} \Theta(w^{\beta-2}e^{\frac{-2\pi h w}{a}}), \nonumber\\
&= \Theta \bigg(\sum_{w=k+1}^{\infty}|w^{\beta-2}e^{\frac{-2\pi h w}{a}}|\bigg).
\label{eqn:newstar}
\end{align}
Observe that summand $h(w) =w^{\beta-2}e^{\frac{-2\pi h w}{a}}$ increases and then decreases with respect to $w$ and attains its maximum at $w_{0}=\frac{a (\beta-2)}{2\pi h}$. Let $w_{1}=\max\{k+1, \lceil w_{0} \rceil\}$. Hence we have
\begin{align}
\sum_{w=k+1}^{\infty}|h(w)|&\leq (w_1-(k+1))|w_1^{\beta-2}e^{\frac{-2\pi h w_1}{a}}|\nonumber\\
&\ \ \ \ \ \ \ \ \ \ \ \ \ +\int_{w_1}^\infty |w^{\beta-2}e^{\frac{-2\pi h w}{a}}| \d w,\nonumber\\
&=(w_1-(k+1)) |w_1^{\beta-2}e^{\frac{-2\pi h w_{1}}{a}}|\nonumber\\
&\ \ \ \ \ \ \ \ \ +\bigg(\frac{a}{2\pi h}\bigg)^{\beta-1}\Gamma\bigg(\beta-1,\frac{2\pi h w_1}{a}\bigg),
\end{align}
where $\Gamma(x,t)=\int_{t}^{\infty}w^{x-1}e^{-w}\d w$ is the incomplete gamma function. Now, for large $t$, from \cite{gammap}, we can asymptotically bound $\Gamma(x,t)$ as 
\[ \Gamma(x,t) \leq \Theta(t^{x-1}e^{-t}). \]
Using this result and the fact that for large $k$ or large $\frac{h}{a}$, $w_1=k+1$, we have
 \begin{align*}
E(k)&\leq \Theta\bigg(\bigg(\frac{a}{2 \pi h}\bigg)^{\beta-1} \bigg( \frac{2\pi h}{a}(k+1)\bigg)^{\beta-2}e^{\frac{-2\pi h(k+1)}{a}}\bigg),\nonumber\\
&\in \Theta \big((k+1)^{\beta-2}e^{\frac{-2\pi h (k+1)}{a}}\big),
%&= \Theta \big(k^{\beta-2}e^{\frac{-2\pi h k}{a}}\big), 
\end{align*}
proving the proposition. 
%\rk{This also shows that  $k$ should be chosen such that $k+1 \geq \lceil w_0 \rceil$ or $k = \lceil\frac{a (\beta-2)}{2\pi h}\rceil$.}
\end{proof}
\label{app:theorem1a}

\section{}
\section*{Proof of Theorem \ref{theorem2}}
\begin{proof}
From \eqref{eqn:interf2}, for $\theta_{f}=\frac{\pi}{2}$, we can write interference term $\mathbb{I}_{\infty}(d_{x},d_{y})$ as  
\begin{align}
\mathbb{I}_{\infty}(d_{x},d_{y}) =  \sum_{u = -\infty}^{+\infty}\sum_{v = -\infty \setminus (0,0)}^{+\infty} ((ua+d_{x})^{2}+(va&+d_{y})^{2}\nonumber\\
&+h^{2})^{-\beta}.
\label{eqn:inter2d}
\end{align} 

For two dimensions, we can scale and shift the function $q(u,v)$ in \eqref{eqn:intb1} as $q(d_{x}+ua,d_{y}+va)$ and from \cite{poissonNd} we extend the time shifting \cite[Prop.4.3.2]{oppenheim} and scaling \cite[Prop.4.3.5]{oppenheim} property of Fourier transform for two dimensions to obtain   

\begin{equation*}
a^{2}\sum_{u= -\infty}^{+ \infty}\sum_{v=-\infty}^{+ \infty} q(d_{x} + ua, d_{y} + va) =
\end{equation*}
\begin{equation}
\sum_{w=-\infty}^{+\infty}\sum_{k=-\infty}^{+\infty}Q\Big(\frac{w}{a},\frac{k}{a}\Big)e^{\frac{j2\pi wd_{x}}{a}}e^{\frac{j2\pi kd_{y}}{a}}.
\label{eqn:poisson2d}  
\end{equation}
Now, from \eqref{eqn:inter2d}, $q(u,v)$ can be expressed as a real and even function, given as     
\begin{equation}
q(u,v) = ( u^{2} + v^{2} + h^{2} )^{-\beta}.
\label{eqn:func2d}
\end{equation}
We define a parameter $s =2\pi \sqrt{w^{2}+k^{2}}$ and $r = \sqrt{u^{2}+v^{2}}$. So, $q(u,v)$ can be expressed as a radially symmetric function $q(r)$ as
\begin{equation}
q(r) = ( r^{2} + h^{2} )^{-\beta}.
\label{eqn:funcRad2d}
\end{equation}
Let $Q(s)$ be the radial Fourier transform of $q(r)$. We evaluate this using the Hankel function \cite{hankel} for two dimensions. The Hankel function for $n$ dimensions is defined as
\[s^{\frac{n-2}{2}}Q_{n}(s) = (2\pi)^{\frac{n}{2}}\int_{0}^{\infty}J_{\frac{n-2}{2}}(sr)r^{\frac{n-2}{2}}q(r)r\d r.\]
For $n=2$, 
\begin{align}
&Q(s) \nonumber\\
&= (2\pi)\int_{0}^{\infty}J_{0}(sr)q(r)r\d r, \nonumber\\
&= (2\pi)\int_{0}^{\infty}\frac{J_{0}(sr)r}{(r^{2} + h^{2})^{\beta}}\d r, \nonumber\\
&= \left\{
               \begin{array}{l}
                 \frac{h^{2-2\beta}\pi}{a^{2}(\beta-1)}; \ \ \ \ \ \ \ \ \ \ \ \ \ \ \ \ \ \ \ s=0, \\ \\
                 \frac{2^{2-\beta}\pi}{\Gamma(\beta)}\big(\frac{h}{s}\big)^{1-\beta}\mathbb{K}_{\beta-1}(hs); s\neq0. 
              \end{array}
              \right. 
\label{eqn:rect3}   
\end{align} 
Considering the dimension of the attocell as $a$, we obtain the scaled Radial Fourier transform of $q(ar)$, from \eqref{eqn:rect3} as 
\begin{equation}
\frac{1}{a^{2}}Q\Big(\frac{s}{a}\Big) = \left\{
               \begin{array}{l}
                 \frac{h^{2-2\beta}\pi}{a^{2}(\beta-1)}; \ \ \ \ \ \ \ \ \ \ \ \ \ \ \ \ \ \ \ \ \ \ \ s=0, \\ \\
                 \frac{2^{2-\beta}\pi}{a^{\beta+1}\Gamma(\beta)}\big(\frac{h}{s}\big)^{1-\beta}\mathbb{K}_{\beta-1}\big(\frac{hs}{a}\big); s\neq0. \\ 
              \end{array}
              \right.
\label{eqn:rect4}   
\end{equation}

Substituting $s = 2\pi\sqrt{w^{2} + k^{2}}$ in \eqref{eqn:rect4}, we get $\frac{1}{a^{2}}Q\big(\frac{w}{a},\frac{k}{a}\big)$ as
\begin{equation}
\frac{1}{a^{2}}Q\Big(\frac{w}{a},\frac{k}{a}\Big) = \left\{
               \begin{array}{l}
                 \frac{h^{2-2\beta}\pi}{a^{2}(\beta-1)}; \ \ \ \ \ \ \ \ \ \ \ \ \ \ w=k=0, \\ \\
                 \frac{2^{2-\beta}\pi}{a^{\beta+1}\Gamma(\beta)}\Big(\frac{h}{2 \pi\sqrt{w^{2} + k^{2}}}\Big)^{1-\beta}\\ \mathbb{K}_{\beta-1}\Big(\frac{2 \pi h\sqrt{w^{2} + k^{2}}}{a}\Big); w\neq0, k\neq0. \\ 
              \end{array}
              \right. 
\label{eqn:rect5}   
\end{equation}

Now, substituting \eqref{eqn:rect5} in \eqref{eqn:poisson2d} and after removing the redundant term of $u=v=0$, which represents the signal power from the tagged LED, we get
\begin{align}
&\mathbb{I}_{\infty}(d_{x},d_{y})\nonumber\\
&=\sum_{u = -\infty}^{+ \infty}\sum_{v=-\infty \setminus (0,0)}^{+ \infty} ( (ua + d_{x})^{2} + (va + d_{y})^{2}  + h^{2} )^{-\beta}, \nonumber\\
&=\sum_{u = -\infty}^{+ \infty}\sum_{v=-\infty}^{+ \infty} ( (ua + d_{x})^{2} + (va + d_{y})^{2}  + h^{2} )^{-\beta}\nonumber\\
&\ \ \ \ -\frac{1}{(d_{x}^{2} + d_{y}^{2} + h^{2})^{\beta}}, \nonumber\\
&=\frac{1}{a^{2}}\bigg(Q(0,0) \nonumber\\
&\ \ \ \ \ \ \ \ +\sum_{w=-\infty}^{+\infty}\sum_{k=-\infty \setminus(0,0)}^{+\infty}Q\bigg(\frac{w}{a},\frac{k}{a}\bigg)e^{\frac{j2\pi wd_{x}}{a}}e^{\frac{j2\pi kd_{y}}{a}}\bigg) \nonumber\\
&\ \ \ \ -\frac{1}{(d_{x}^{2}+d_{y}^{2}+h^{2})^{\beta}}.
\label{eqn:fourierEqn}  
\end{align} 

Now, because $q(u,v)$ in \eqref{eqn:func2d} is a real and even signal, it's Fourier transform is also real and even \cite[Prop.4.3.3]{oppenheim}. So, we modify \eqref{eqn:fourierEqn} as

\begin{align*}
&\mathbb{I}_{\infty}(d_{x},d_{y}) \nonumber\\
&\stackrel{(a)}{=}\frac{1}{a^{2}}\bigg(Q(0,0) \nonumber\\
&+\sum_{w=-\infty}^{+\infty}\sum_{k=-\infty \setminus(0,0)}^{+\infty}Q\bigg(\frac{w}{a},\frac{k}{a}\bigg)\cos\bigg(\frac{2\pi wd_{x}}{a}\bigg)\nonumber\\
&\ \ \ \ \ \ \ \ \ \ \ \ \ \ \ \ \ \  \cos\bigg(\frac{2\pi kd_{y}}{a}\bigg)\bigg)  -\frac{1}{(d_{x}^{2} + d_{y}^{2} + h^{2})^{\beta}}, \nonumber\\
&\stackrel{(b)}{=}\frac{1}{a^{2}}\bigg(Q(0,0) \nonumber\\
&+\sum_{w=0}^{+\infty}\sum_{k=0 \setminus(0,0)}^{+\infty}4Q\bigg(\frac{w}{a},\frac{k}{a}\bigg)\cos\bigg(\frac{2\pi wd_{x}}{a}\bigg)\nonumber\\
&\ \ \ \ \ \ \ \ \ \ \ \ \ \ \ \ \ \  \cos\bigg(\frac{2\pi kd_{y}}{a}\bigg)\bigg)  -\frac{1}{(d_{x}^{2} + d_{y}^{2} + h^{2})^{\beta}}, \nonumber\\
%&\ \ \ -\frac{1}{(d_{x}^{2} + d_{y}^{2} + h^{2})^{\beta}}, \nonumber\\
&= \frac{h^{2-2\beta}\pi}{a^{2}(\beta-1)} \nonumber\\
&\ \ \ +\sum_{w=0}^{+\infty}\sum_{k=0\setminus(0,0)}^{+\infty}\bigg(\frac{2^{4-\beta}\pi}{a^{\beta+1}\Gamma(\beta)}\Big(\frac{h}{2 \pi \sqrt{w^{2} + k^{2}}}\Big)^{1-\beta} \nonumber\\ 
&\ \ \ \ \ \ \ \mathbb{K}_{\beta-1}\Big(\frac{2 \pi h\sqrt{w^{2} + k^{2}}}{a}\Big)\cos\Big(\frac{2\pi wd_{x}}{a}\Big) \nonumber\\
&\ \ \ \ \ \ \ \ \ \ \ \ \ \ \ \ \ \  \cos\bigg(\frac{2\pi kd_{y}}{a}\bigg)\bigg)  -\frac{1}{(d_{x}^{2} + d_{y}^{2} + h^{2})^{\beta}},
%\label{eqn:fourierEqn1}  
\end{align*}
where $(a)$ follows from the fact that $Q(w,k)$ is real, $(b)$ follows from the fact that $Q(w,k)$ is even and hence proving the theorem. 
\end{proof}
\label{app:theorem2}
\section{}
\section*{Proof of Proposition \ref{prop2}}
\begin{proof}
 Consider the set $\mathbb{A} \triangleq (\mathbb{Z}^{2}\cap([0,j]\times[0,l]))\setminus\{(0,0)\}$, over the set of integers $\mathbb{Z}$.
The interference power obtained in Thm. \ref{theorem2} can be written as 
\begin{align*}
\mathbb{I}_{\infty}(d_{x},d_{y}) =&\frac{h^{2-2\beta}\pi}{a^{2}(\beta-1)}-\frac{1}{(d_{x}^{2} + d_{y}^{2} + h^{2})^{\beta}} \nonumber\\
&+\sum_{(w,k)\in\mathbb{A}}g(w,k)+\sum_{(w,k)\notin\mathbb{A}\cup\{(0,0)\}}g(w,k).
%\label{eqn:intf2p2d}
\end{align*}

Let $E(j,l)=\sum_{(w,k)\notin\mathbb{A}\cup\{(0,0)\}}g(w,k)$ and $r=\sqrt{w^{2}+k^{2}}$, then
\begin{align}
&g(w,k)=\psi(r) \nonumber\\
&= \frac{M}{a^{\beta+1}}\cos\bigg(\frac{2\pi w d_{x}}{a}\bigg)\cos\bigg(\frac{2\pi k d_{y}}{a}\bigg)r^{\beta-1}\mathbb{K}_{\beta-1}\bigg(\frac{2\pi h r}{a}\bigg),
\label{eqn:prop26}
\end{align} 
where $M=\frac{2^{3}\pi^{\beta}h^{1-\beta}}{\Gamma(\beta)}$. From \cite{asybesl}, for large $r$, the modified Bessel function $\mathbb{K}_{\beta-1}\big(\frac{2\pi h r}{a}\big)$ can be expanded as
\begin{equation*}
\mathbb{K}_{\beta-1}\bigg(\frac{2\pi h r}{a}\bigg) = \sqrt{\frac{a}{4hr}}e^{\frac{-2\pi h r}{a}}\Big(1+\Theta\Big(\frac{1}{r}\Big)\Big),
%\label{eqn:prop3}
\end{equation*} 

where $\Theta(.)$ is the same asymptotic notation introduced for the one dimension model. The cosine terms $\cos\big(\frac{2\pi w d_{x}}{a}\big)$ and $\cos\big(\frac{2\pi k d_{y}}{a}\big)$ in \eqref{eqn:prop26} are bounded by $\Theta(1)$, as $w$ or $k$ becomes large. So, for large $r$, the asymptotic bound on $\psi(r)$ in \eqref{eqn:prop26}, can be shown as

\begin{align}
\psi(r)\leq &\frac{M}{a^{\beta+1}}\Theta(1)r^{\beta-1}\sqrt{\frac{a}{4h r}}e^{\frac{-2\pi h r}{a}}\Big(1+\Theta\Big(\frac{1}{r}\Big)\Big),\nonumber\\
\in&\Theta(r^{\beta-2.5}e^{\frac{-2\pi h r}{a}}).
\label{eqn:prop42}
\end{align} 

Summing \eqref{eqn:prop42} over large $r$, we get the total error $E(j,l)$ as 
\begin{align} 
E(j,l) &= \sum_{r:(w,k)\notin \mathbb{A}\cup\{(0,0)\}} \psi(r), \nonumber\\
&\in \sum_{r:(w,k)\notin \mathbb{A}\cup\{(0,0)\}} \Theta(r^{\beta-2.5}e^{\frac{-2\pi h r}{a}}), \nonumber\\
&= \Theta\Bigg(\sum_{r:(w,k)\notin \mathbb{A}\cup\{(0,0)\}} |r^{\beta-2.5}e^{\frac{-2\pi h r}{a}}|\Bigg).
\label{eqn:newstar2}
\end{align}
The summand inside the summation in \eqref{eqn:newstar2} is symmetric with $r$. So $E(j,l)$ can be bounded by a symmetric summation as 
\[E(j,l)\leq \Theta\Bigg(\sum_{r\geq \sqrt{j^{2}+l^{2}}+1} |r^{\beta-2.5}e^{\frac{-2\pi h r}{a}}|\Bigg).\]
Observe that summand $h(r) =r^{\beta-2.5}e^{\frac{-2\pi h r}{a}}$ increases and then decreases with respect to $r$ and attains its maximum at $r_{0}=\frac{a (\beta-2.5)}{2\pi h}$. Let $r_{1}=\max\{\sqrt{j^{2}+l^{2}}+1, \lceil r_{0} \rceil\}$. Hence we have
\begin{align}
\sum_{r=\sqrt{j^{2}+l^{2}}+1}^{\infty}|h(r)|&\leq (r_1-(\sqrt{j^{2}+l^{2}}+1))|r_1^{\beta-2.5}e^{\frac{-2\pi h r_1}{a}}|\nonumber\\
&\ \ \ \ \ \ \ \ \ \ \ \ \ +\int_{r_1}^\infty |r^{\beta-2.5}e^{\frac{-2\pi h r}{a}}| \d r,\nonumber\\
&=(r_1-(\sqrt{j^{2}+l^{2}}+1)) |r_1^{\beta-2.5}e^{\frac{-2 \pi h r_{1}}{a}}|\nonumber\\
&\ \ \ \ \ +\bigg(\frac{a}{2\pi h}\bigg)^{\beta-1.5}\Gamma\bigg(\beta-1.5,\frac{2\pi h r_1}{a}\bigg),
\end{align}
where $\Gamma(x,t)=\int_{t}^{\infty}r^{x-1}e^{-r}\d r$ is the incomplete gamma function. Now, for large $t$, from \cite{gammap}, we can asymptotically bound $\Gamma(x,t)$ as 
\[ \Gamma(x,t) \leq \Theta(t^{x-1}e^{-t}). \]
Using this result and the fact that for large $\sqrt{j^{2}+l^{2}}$ or large $\frac{h}{a}$, $r_1=\sqrt{j^{2}+l^{2}}+1$, we have  
\begin{align*}
E(j,l)&\leq \Theta\bigg(\bigg(\frac{a}{2 \pi h}\bigg)^{\beta-1.5} \bigg(\frac{2\pi h}{a}(\sqrt{j^{2}+l^{2}}+1)\bigg)^{\beta-2.5} \nonumber\\
& \ \ \ \ \ \ \ \ \ \ \ \ \ e^{\frac{-2\pi h(\sqrt{j^{2}+l^{2}}+1)}{a}}\bigg),\nonumber\\
%&= \Theta \big((k+1)^{\beta-2}e^{\frac{-2\pi h (k+1)}{a}}\big),\nonumber\\
&\in \Theta \big((\sqrt{j^{2}+l^{2}}+1)^{\beta-2.5}e^{\frac{-2\pi h (\sqrt{j^{2}+l^{2}}+1)}{a}}\big), 
\end{align*}
proving the proposition.       
\end{proof}
\label{app:theorem2a}

\end{appendices}

\end{document}